\newtheorem{thm}{Theorem} 
\newtheorem{lemma}{Lemma} 
\newtheorem{cor}{Corollary}
\newtheorem{definition}{Definition}
\newtheorem{prop}{Proposition}
\newtheorem*{prop*}{Proposition}
\newtheorem{assumption}{Assumption}
\newcommand{\margmax}{\mathrm{argmax}}
\newcommand{\margmin}{\mathrm{argmin}}
\newcommand{\bG}{\mathbf G}
\newcommand{\tR}{\mathbb{R}}
\newcommand{\cW}{\mathcal{W}}
\newcommand{\cP}{\mathcal{P}}
\newcommand{\bP}{\mathbf{P}}
\newcommand{\bB}{\mathbf{B}}
\newcommand{\hatbB}{\widehat{\bB}}
\newcommand{\bmb}{\bm{b}}
\newcommand{\hatbmb}{\hat{\bmb}}
\newcommand{\cN}{\mathcal{N}}
\newcommand{\bM}{\mathbf{M}}
\newcommand{\bN}{\mathbf{N}}
\newcommand{\bL}{\mathbf{L}}
\newcommand{\bn}{\bm{n}}
\newcommand{\cL}{\mathcal{L}}
\newcommand{\bS}{\mathbf{S}}
\newcommand{\bV}{\mathbf{V}}
\newcommand{\br}{\bm{r}}
\newcommand{\hatbS}{\widehat{\mathbf{S}}}
\newcommand{\hatbM}{\widehat{\mathbf{M}}}
\newcommand{\bSigma}{\bm{\Sigma}}
\newcommand{\bLambda}{\bm{\Lambda}}
\newcommand{\bepsilon}{\bm{\epsilon}}
\newcommand{\be}{\bm{e}}
\newcommand{\hatbe}{\hat{\be}}
\newcommand{\bE}{\mathbf{E}}
\newcommand{\bI}{\mathbf{I}}
\newcommand{\bA}{\mathbf{A}}
\newcommand{\bC}{\mathbf{C}}
\newcommand{\bc}{\bm{c}}
\newcommand{\bO}{\mathbf{O}}
\newcommand{\bo}{\mathbf{o}}
\newcommand{\cO}{\mathcal{O}}
\newcommand{\bU}{\mathbf{U}}
\newcommand{\hatbU}{\widehat{\bU}}
\newcommand{\bD}{\mathbf{D}}
\newcommand{\rank}{\mathrm{rank}}
\newcommand{\diag}{\mathrm{diag}}
\newcommand{\bX}{\mathbf{X}}
\newcommand{\hatbW}{\widehat{\bW}}
\newcommand{\bY}{\mathbf{Y}}
\newcommand{\by}{\bm{y}}
\newcommand{\bw}{\mathbf{w}}
\newcommand{\bW}{\mathbf{W}}
\newcommand{\hatbL}{\widehat{{\bf L}}}
\newcommand{\hatbw}{\widehat{\bw}}
\newcommand{\bR}{\mathbf{R}}
\newcommand{\bK}{\mathbf{K}}
\newcommand{\bx}{\bm{x}}
\newcommand{\barbx}{\bar{\bx}}
\newcommand{\bz}{\bm{z}}
\newcommand{\bTheta}{\bm{\Theta}}
\newcommand{\hatbSigma}{\widehat{\bm{\Sigma}}}
\newcommand{\bPsi}{\bm{\Psi}}
\newcommand{\bZ}{\mathbf{Z}}
\newcommand{\bs}{\bm{s}}
\newcommand{\hatbs}{\hat{\bs}}
\newcommand{\bzero}{\bm{0}}
\newcommand{\bone}{\bm{1}}
\def\Var{{\rm Var}\,}
\def\En{{\mathbb E}_n\,}
\def\Cov{{\rm Cov}\,}
\def\eqd{\,{\buildrel d \over =}\,} 
\def\as{\,{\buildrel a.s. \over \longrightarrow}\,}
\newcommand{\tn}[1]{\textnormal{#1}}
\newcommand{\iv}{{1\hspace{-2.5pt}\tn{l}}}
\def\Var{{\rm Var}\,}
\def\E{{\rm E}\,}
\def\Cov{{\rm Cov}\,}
\newcommand{\cJ}{\mathcal{J}}
\newcommand{\cJn}{\mathcal{J}_n}
\newcommand{\cB}{\mathcal{B}}
\newif\ifblind
\newif\ifunblind
\title{Linear Non-Gaussian Component Analysis via Maximum Likelihood}
\author{author here}
\author{\smaller{Benjamin B. Risk$^{1,2,3}$, David S. Matteson$^{1}$, David Ruppert$^{1}$} \\
\smaller $^{1}$Department of Statistical Science, Cornell University \\
\smaller $^{2}$SAMSI, Research Triangle Park, North Carolina and the \\
\smaller Department of Biostatistics, University of North Carolina, Chapel Hill \\
\smaller $^{3}$Current Address: Department of Biostatistics and Bioinformatics, Emory University}
\date{}
\begin{document}
\maketitle 

\onehalfspacing

\begin{abstract}
Independent component analysis (ICA) is popular in many applications, including cognitive neuroscience and signal processing. Due to computational constraints, principal component analysis is used for dimension reduction prior to ICA (PCA+ICA), which could remove important information.  The problem is that interesting independent components  (ICs) could be mixed in several principal components that are discarded and then these ICs cannot be recovered. We formulate a linear non-Gaussian component model with Gaussian noise components. To estimate this model, we propose likelihood component analysis (LCA), in which dimension reduction and latent variable estimation are achieved simultaneously. Our method orders components by their marginal likelihood rather than ordering components by variance  as in PCA. We present a parametric LCA using the logistic density and a semi-parametric LCA using tilted Gaussians with cubic B-splines. Our algorithm is scalable to datasets common in applications (e.g., hundreds of thousands of observations across hundreds of variables with dozens of latent components). In simulations, latent components are recovered that are discarded by PCA+ICA methods. We apply our method to multivariate data and demonstrate that LCA is a useful data visualization and dimension reduction tool that reveals features not apparent from PCA or PCA+ICA. We also apply our method to an fMRI experiment from the Human Connectome Project and identify artifacts missed by PCA+ICA.  We present theoretical results on identifiability of the linear non-Gaussian component model and consistency of LCA. \\

 \emph{Keywords:} Functional Magnetic Resonance Imaging, Independent Component Analysis, Neuroimaging, Non-Gaussian Component Analysis, Principal Component Analysis, Projection Pursuit
\end{abstract}

\doublespacing

\section{Introduction}
The classic independent component analysis (ICA) model is $\bX = \bM \bS$ where $\bX$ is an observed vector, $\bS$ is a latent vector of independent random variables, and $\bM$ is a square matrix called the mixing matrix.  It is assumed that we have a sample $\{\bx_i\}$, $i=1,\dots,n$, with corresponding latent $\{\bs_i\}$.  The goal is to estimate $\bM$ and $\{\bs_i\}$.  Popular ICA methodology does not directly attempt to find components that are independent but rather components that are as non-Gaussian as possible by maximizing an approximation of negentropy \citep{hyvarinen2000independent}. The principle here is that any sum of ICs will be closer to Gaussian distributed than the ICs themselves. Thus, $\{\bs_i\}$ are correctly recovered if they maximize some measure of non-Gaussianity. Moment or cumulant-based methods \citep{cardoso1993blind,virta2015joint}, kernel methods \citep{bach2003kernel}, maximum likelihood methods \citep{chen2006efficient,samworth2012independent}, and methods that directly minimize a measure of dependence \citep{stogbauer2004least,mattesontsay2012} have also been developed.

Transformations that maximize non-Gaussianity play a prominent role in many applications including signal processing \citep{bell1995information}, estimating brain networks \citep{beckmann2012modelling}, face recognition \citep{bartlett2002face}, and artifact removal \citep{griffanti2014ica}. In practice, dimension reduction using PCA is applied to the observations $\{\bx_i\}$ prior to classic ICA (hereafter, PCA+ICA) to meet the assumption of square mixing and to reduce computational costs \citep{hyvarinen2001independent}. PCA+ICA is commonly used to identify brain ``networks'' in functional magnetic resonance imaging (fMRI)  \citep{beckmann2012modelling}, where here a brain network is a set of locations that exhibit similar temporal behavior. However, PCA preprocessing can discard parts of the brain networks \citep{green2002pca}. PCA+ICA is also used to identify artifacts in single-subject fMRI to improve sensitivity and specificity in subsequent group-level analyses \citep{pruim2015ica}. Even though the results from the two-stage PCA+ICA approach have been useful in the applied sciences, our data applications show that a single analysis that uses non-Gaussianity for both dimension reduction and extracting certain latent components (LCs; see below) improves estimation.

We propose linear non-Gaussian component analysis (LNGCA). Consider a sample $\{\bx_i,\bs_i,\bn_i\}$,  $i=1,\ldots,n$, of the random variable:
\begin{linenomath*}
 \begin{align}\label{eq:LNGCAmodel}
  \bX = \bM_{\bS} \bS + \bM_{\bN} \bN
 \end{align}
 \end{linenomath*}
where $\bX \in \tR^T$; $\bS \in \tR^Q$ is a vector of mutually independent non-Gaussian random variables with $1 \le Q \le T$; $\bM_\bS \in \tR^{T \times Q}$; $\bM_\bN\in \tR^{T \times(T-Q)}$; $\bM = [\bM_\bS, \bM_\bN]$ (the concatenation of $\bM_\bS$ and $\bM_\bN$) is full rank; and $\bN$ is $(T-Q)$-variate normal. Note that in classic (noise-free) ICA, $\bS \in \tR^{T-1}$ or $\tR^T$ and $\bN \in \tR^1$ or equals zero. In LNGCA, the dimension of the image of $\bM_\bN$ is $T-Q$, whereas noisy ICA (discussed below) assumes the dimension is $T$. One observes $\{\bx_i\}$ while $\{ \bs_i\}$ and $\{\bn_i\}$ are latent.  We assume $\E \bS = \bzero$ and $\E \bN = \bzero$ in \eqref{eq:LNGCAmodel}, such that it is without loss of generality that we assume $\E \bX = \bzero$. In practice, data are centered by their sample mean. Our goal is to estimate $\bM_\bS$ and the realizations $\{\bs_i\}$ of $\bS$, which we call latent components (LCs).

\subsection{Motivation for LCA}
We estimate the LNGCA model using a maximum-likelihood framework, which we call likelihood component analysis (LCA). We introduce this new term to emphasize that our method uses a likelihood as the pertinent measure of information to achieve dimension reduction. The components are ordered according to a parametric or semi-parametric likelihood rather than by variance as in PCA. By simultaneously performing dimension reduction and latent variable estimation, we will demonstrate through simulations and two real applications that estimation of the proposed model allows the discovery of non-Gaussian signals discarded by other methods. 
When the motivating scientific problem has a low signal-to-noise ratio, LCA is particularly well-suited to recovering the non-Gaussian signals.

The idea behind LCA is to use the marginal likelihoods rather than marginal variances as the measure of information when defining latent components, since low-variance signal may be removed by PCA. Among the class of absolutely continuous random variables with mean zero and unit variance, the standard Gaussian density has maximum differential entropy \citep{cover2006elements}. Consequently, when the non-Gaussian components in the LNGCA model belong to this class of random variables, the expected values of their marginal likelihoods are larger.
Our approach is to constrain the latent distributions to have unit variance, which allows both the marginal likelihoods and $\bM_\bS$ to be estimated. Then the latent component with the highest likelihood, i.e., lowest entropy, contains the most information, and the Gaussian components will have the smallest marginal likelihoods.
\subsection{Relation to other methods}

The special case in which the dimension of $\text{im}(\bM_\bS)$ is $T$ or $T-1$ and $\text{im}(\bM_\bN)$ is zero or one, respectively, is equivalent to the classic ICA model \citep{hyvarinen2000independent}. Note that one Gaussian component is allowed in classic ICA because the last component can be determined from the previous components. We ignore this technicality and for clarity, hereafter define classic ICA under the assumption that $\bM_\bS$ is full rank and $\bM_\bN = \bzero$. 
The case in which the dimension of $\text{im}(\bM_\bN)$ equals $T$ is the noisy ICA model, which is also called independent factor analysis (IFA) \citep{attias1999independent}. The noisy ICA model often imposes the additional assumption that $\bM_\bN = \sigma^2 \bI_{T}$.

The noisy ICA model can be approximated using a variant of PCA+ICA \citep{beckmann2004probabilistic}, where probabilistic PCA is used to estimate the number of components and achieve dimension reduction \citep{tipping1999probabilistic}. Alternatively, IFA could be used for simultaneous dimension reduction and latent variable estimation wherein the ICs are modeled as Gaussian mixtures \citep{attias1999independent}. It is difficult to apply IFA because an $m^Q$-dimensional integral, where $m$ is the number of Gaussian mixtures, must be approximated at each iteration of the EM algorithm, which quickly becomes computationally intractable. \cite{allassonniere2012stochastic} developed stochastic EM algorithms to estimate the IFA model and proposed parametric methods. \cite{guo2013hierarchical} developed a multi-subject IFA model, and \cite{shi2016modeling} extended it to include covariates and an approximate EM algorithm that linearly scales with the number of components, although their application to fMRI uses PCA. \cite{amato2010noisy} developed non-parametric density estimators of the component densities in the noisy ICA model but assume $\bM_\bS$ is semi-orthogonal, which is not realistic for our application.

Other methods exploring non-Gaussian structure in multivariate data include non-Gaussian component analysis (NGCA) and projection pursuit. NGCA is a more general case of \eqref{eq:LNGCAmodel} that allows non-linear dependence between the non-Gaussian components. However, this comes at the cost that the latent components are not identifiable. The subspace that contains the non-Gaussian signal is estimated using multiple projection pursuit indices or radial basis functions \citep{blanchard2006search,kawanabe2007new}. Since it does not estimate latent components, NGCA does not lend itself to identifying brain networks and/or artifacts. 
Projection pursuit is a method without a generative model that seeks ``interesting'' directions of information by maximizing projection pursuit indices, such as kurtosis \citep{huber1985projection}. 
\cite{miettinen2014deflation} used the deflationary FastICA algorithm to adaptively select the projection pursuit index from a family of indices for each non-Gaussian direction for the case where $Q=T$. One approach to estimating the model in \eqref{eq:LNGCAmodel} would be to sequentially estimate projection pursuit directions. However, estimates from deflationary fastICA typically have higher asymptotic variance than symmetric fastICA \citep{miettinen2015squared,miettinen2015fourth}. Overall, the LNGCA model in \eqref{eq:LNGCAmodel} is unique in that it specifies a latent variable model for the non-Gaussian signal (which we show is identifiable) while also defining a subspace containing Gaussian noise, and the LCA estimation procedure is unique because it uses a likelihood to simultaneously estimate the latent components in the presence of Gaussian noise. See Web Supplement \ref{WS:addbackground} for additional discussion of these other methods.

In Section \ref{sec:LNGCA}, we discuss the identifiability of LNGCA and a discrepancy measure to account for unidentifiable signed permutations. In Section \ref{sec:ParametricLCA}, we propose parametric LCA. In Section \ref{sec:Spline-LCA}, we propose Spline-LCA where we also estimate the latent densities. In Section \ref{sec:SimIID}, we investigate simulations when the observations of the latent variables are iid. In Section \ref{sec:SpatioTemporalSims}, we examine model robustness by applying our method to temporally and spatially structured simulated data, and we evaluate the impact of estimating the wrong number of components. In Section \ref{sec:LeafExample}, we use LCA for data visualization and dimension reduction in multivariate data from leaf characteristics. In Section \ref{sec:t-fMRI}, we estimate brain networks and artifacts from high-resolution fMRI data from the Human Connectome Project.  
Code implementing our methods and proofs of the theorems appear in the Web Supplement.

\section{\smaller LNGCA}\label{sec:LNGCA}
Throughout this section we assume (for simplicity) all random variables are mean zero. Define the equivalence relation $\bB \cong \bC$ for matrices $\bB$ and $\bC$ if $\bB$ equals $\bC$ up to scaling and permutation of columns. Let ``$\eqd$'' denote equality in distribution. Let $\bS = [S_1,\dots,S_Q]^\top$. We state the assumptions of the LNGCA model below.
\begin{assumption}
 $S_1,\dots,S_Q$ are mutually independent, non-Gaussian random variables with $\E \bS = \bzero$ and $\E \bS \bS^\top = \bI_Q$.
\end{assumption}
\begin{assumption}
$\rank([\bM_{\bS},\bM_\bN])=T$
\end{assumption}
\begin{assumption}
$\bN$ is $(T-Q)$-variate normal with $\E \bN = \bzero$ and $\E \bN \bN^\top$ non-singular.
\end{assumption}
The following theorem can be established using Theorem 10.3.9 in \cite{kagan1973characterization}.
\begin{thm}\label{thm1}
Suppose $\bX$ follows the model in \eqref{eq:LNGCAmodel} with Assumptions 1-3. Then for any other representation $\bX = \bM_\bS^* \bS^* + \bE^*$ where $\bS^* \in \tR^Q$ are independent non-Gaussian components and $\bE^*$ is multivariate normal, we have: $\bM_\bS^* \cong \bM_\bS$; $\bS^* \eqd \bS$ up to scaling and permutations; $\bM_\bS \bS \eqd \bM_\bS^* \bS^*$; and $\bE^* \eqd \bM_\bN \bN$.
\end{thm}
\noindent All proofs appear in Web Supplement \ref{WS:proofs}.

From Theorem \ref{thm1}, the signal, $\bM_\bS \bS$, has a unique decomposition (on the equivalence class of scalings and permutations) into a fixed matrix and independent components. The assumption that $\bM$ is full rank is necessary to ensure the uniqueness of the distributions of the latent components, which in turn is necessary for their identifiability. Note that the noise, $\bM_\bN \bN$, does not have a unique decomposition.

Without loss of generality, we assume that $\bN$ is standard multivariate normal. Let $\{f_q\}$ be the true densities of the LCs (the signal components), which are also called the source densities.  For the purposes of this paper, we will also assume $\{f_{q}\}$ are absolutely continuous, although identifiability holds more generally.
 Denote the eigenvalue decomposition (EVD) of the covariance matrix of $\bX$ by $\bSigma = \bU \bLambda \bU^\top$. Let ${\bf{L}} = \bU \bLambda^{-1/2} \bU^\top$ be a whitening matrix (the covariance matrix of $\bL \bX$ is $\bI_T$), and define the unmixing matrix $\bW =  \bM^{-1} \bL^{-1}$ where $\bM = [\bM_\bS,\bM_\bN]$. Note that $\bW \in \cO_{T \times T}$, where $\cO_{T \times T}$ is the class of $T \times T$ orthogonal matrices. Let $\bw_q^\top$ denote the $q$th row of $\bW$, and let $\bW_\bS$ denote the first $q$ rows. Let $\phi(x)$ denote the standard normal density. Noting that $|\!\det \bW | = 1$, we have
\begin{linenomath*}
\begin{align}\label{eq:JointDensity}
 f_{\bX}(\bx | \bW,\bL) &=  \det(\bL) \prod_{q=1}^Q\! f_q\!\left({\bw_q^\top \bL \bx}\right)\prod_{k=1}^{T-Q} \phi(\bw_{Q+k}^\top \bL \bx).
\end{align}
\end{linenomath*}
Note that for a density and its corresponding row of the unmixing matrix, $\{ f_q, \bw_q \}$, we can trivially define a density $f_q^*(x) = f_q(-x)$ and vector $\bw_q^* = - \bw_q$ such that $f_q^*({\bw_q^*}^\top \bx) = f_q(\bw_q^\top \bx)$ for all $\bx \in \tR^T$. In this sense, we say the density and vector pair, $\{f_q,\bw_q\}$, is identifiable up to sign. We can now establish the identifiability of the LNGCA model.
\begin{cor}
 Suppose the linear structure model in \eqref{eq:LNGCAmodel} with density defined in \eqref{eq:JointDensity} and suppose that Assumptions 1-3 hold. Then $\{f_1,\bw_1\}, \dots, \{f_Q, \bw_Q\}$ are identifiable up to sign and ordering. Note the rows $\bw_{Q+k}$ for $k=1,\dots,T-Q$ are not identifiable.
\end{cor}

\subsection{Sign- and permutation-invariant discrepancy measure}\label{sec:PMSE}

To accomodate the identifiability limitations, we propose a novel measure of dissimilarity that uses a modification of the Hungarian algorithm to match rows of the unmixing matrix as in \cite{ilmonen2010new} and \cite{risk2014evaluation}. Unlike the Amari or minimum distance \citep{ilmonen2010new} measures, it applies to non-square unmixing matrices. We also generalize the measure to apply to matrices that may have a different number of columns, in which case the measure only compares matching columns. This measure is also used to assess convergence in our algorithms.

Consider $\bM_1 \in \tR^{T \times Q}$ and $\bM_2 \in \tR^{T \times R}$ with $Q \le R$. With slight abuse of notation, we now let $\cP_{\pm}$ be the class of $R \times Q$ signed permutation matrices, so that post-multiplication of $\bM_2$ by $\bP_{\pm} \in \cP_{\pm}$ results in a subset of $Q$ (permuted) columns of $\bM_2$ for $Q < R$. Let $||\cdot||_F$ denote the Frobenius norm. Define the sign- and permutation-invariant mean-squared error:
\begin{equation}\label{eq:PMSE}
PMSE(\bM_1,\bM_2) = \frac{1}{TQ}\;\underset{\bP_{\pm} \in \cP_{\pm}}{\margmin} ||\bM_1 - \bM_2 \bP_{\pm} ||_F^2,
\end{equation}
where $\bP_\pm$ is found using the modified Hungarian algorithm. In practice, we also standardize the columns of $\bM_1$ and $\bM_2$ to have unit norm, and thus the measure is scale invariant. Then \eqref{eq:PMSE} is equivalent to finding $\bP_{\pm}$ such that the sum of the correlations between the columns of $\bM_1$ and $\bM_2 \bP_{\pm}$ is maximized. 
Also define $PRMSE = \sqrt{PMSE}$, i.e., permutation-invariant root mean squared error.

\section{\smaller Parametric LCA}\label{sec:ParametricLCA}

Now let $\{\bx_i\}$ be an iid sample of $\bX$, and let $\barbx = \frac{1}{n}\sum_{i=1}^n \bx_i$. 
Assume $n > T$. Let $\hatbSigma$ be the sample covariance matrix of $\{\bx_i\}$, with divisor $n$, not $n-1$. Consider its eigenvalue decomposition, $\hatbSigma = \hatbU \widehat{\bLambda} \hatbU^\top$. Then define $\hatbL = \hatbU \widehat{\bLambda}^{-1/2} \hatbU^\top$. Let $\bo_q^\top$ be the $q$th row of an orthogonal matrix $\bO$. Note that $\sum_{i=1}^n \bo_q^\top \hatbL (\bx_i - \barbx) = 0$ and $\sum_{i=1}^n \log \phi \left(\bo_q^\top \hatbL (\bx_i - \barbx) \right) = -\frac{n}{2}\left(\log 2\pi + 1 \right)$.
 Let $\cO_{Q \times T}$ be the class of $Q \times T$ semi-orthogonal matrices, which is the Stiefel Manifold. Let $p_q(x)$ denote a density used in the objective function (possibly mis-specified):
\begin{align}\label{eq:ObjectiveFunction}
\cJ_n(\bO_\bS \hatbL (\bx_i - \barbx)) = \frac{1}{n}\sum_{i=1}^n \sum_{q=1}^{Q} \log p_q \! \left({\bo_q^\top \hatbL (\bx_i - \barbx)}\right).
\end{align}
 Let $h(\bs) = \sum_{q=1}^Q \log p_q (s_q)$, where $s_q$ is the $q$th element of $\bs$. Let $\| \bs \|$ denote the Euclidean ($\ell$-2) norm. We make additional assumptions:
\begin{assumption}\label{assumption:4}
 (i) $p_q(s_q)<\infty$, $q=1,\dots,Q$; (ii) for all $\bs_0$ and $\bs_1 \in \tR^Q$, there exist $M>0$ and $\alpha \ge 0$ such that
\begin{equation}
\|h(\bs_0) - h(\bs_1)\| \le M \|\bs_0-\bs_1\| \,  \Big\{ 1 + \|\bs_0\| ^\alpha + \|\bs_1\|^\alpha \Big\}  \label{eq:Prop4i};
\end{equation} 
and (iii) $\E\|\bS\|^{1+\alpha} < \infty$.
\end{assumption}
\noindent Note that Assumptions \ref{assumption:4} (i) and (ii) define conditions for the densities used in the \emph{objective function} rather than the true densities. A discussion of the densities satisfying these assumptions is in Web Supplement \ref{WS:parametricLCA}. However, we first consider the case when the $Q$ true component densities are known, i.e., $p_q = f_q$:
 \begin{linenomath*}
 \begin{align}\label{eq:genericEstimator2}
  \hatbW_\bS^{Or} = \underset{\bO_\bS \in \cO_{Q \times T}}{\margmax} \;\;\; \sum_{i=1}^n \sum_{q=1}^{Q} \log f_q \! \left({\bo_q^\top \hatbL \left( \bx_i - \barbx \right)}\right),
 \end{align}
 \end{linenomath*}
so that $\hatbW_\bS^{Or}$ is an oracle (Or) estimator that cannot be used in practice. 

Observe that estimating $\bW_\bS$ is equivalent to estimating the LCs because $\hatbs_i = \hatbW_\bS \hatbL (\bx_i - \barbx)$ for all $v$. Thus we would like a consistent estimator of $\bW_\bS$. 
\begin{thm}\label{thm:2}
Suppose $\bX$ follows the LNGCA model in \eqref{eq:LNGCAmodel} with Assumptions~1-4. 
Given an iid sample $\{\bx_i\}$, $\hatbW_\bS^{Or} \as \bW_\bS$ on the equivalence class of signed permutations.
\end{thm}

Consider the special case in which $p_q$ in \eqref{eq:ObjectiveFunction} equals the logistic density for all $q$, hereafter Logis-LCA. The Infomax algorithm can be derived as a gradient ascent algorithm for maximum likelihood ICA in which the source densities are assumed to have logistic densities. Infomax is popular in fMRI analysis, where it outperforms FastICA and JADE \citep{correa2007performance,calhoun2006unmixing}. 
We define our estimator for some $Q^* \le T$ such that $Q^*$ may or may not equal $Q$. 
After simplifications, the Logis-LCA estimator of $\bW_\bS$ is defined
\begin{linenomath*}
\begin{align}
  \hatbW_\bS^{Logis} &= \underset{\bO_\bS \in \cO_{Q \times T}}{\margmax} -\sum_{i=1}^n \sum_{q=1}^{Q^*} \log \left\{ 1 + \exp \! \left( -\bo_q^\top \hatbL (\bx_i - \barbx) \frac{\pi}{\sqrt{3}} \right) \right\}. \label{eq:lcalogisW}
\end{align}
\end{linenomath*}
We maximize \eqref{eq:lcalogisW} using a modification of the symmetric fixed-point ICA algorithm \citep{hyvarinen1999fast} discussed in Web Supplement \ref{WS:fixedpoint}.

Next, we define sufficient conditions that characterize the extent to which the densities used in the estimator can mismatch the true densities while maintaining consistency. Let $r_q(s)$ and $r_q'(s)$ denote the first and second derivatives of $\log p_q(s)$. Note that $\E r_q(S_q) = \int r_q(s) f_q(s) \, ds$  since $f_q$ is the true density. 
\begin{assumption}\label{assumption:mismatch}
 For all $q$, (i) $\E r_q'(S_q) - \E S_q \, r_q (S_q) < 0$; (ii) $\E r_q'(S_q)$, $\E S_q \, r_q (S_q)$, and $\E S_q^2 r_q'(S_q)$ are finite; (iii) $\log p_q(s)$ is twice continuously differentiable on the support of $S_q$; (iv) $\frac{\partial}{\partial{o_{qt}}} \E \log p_q (\bo_q^\top\bX) = \E \frac{\partial}{\partial{o_{qt}}} \log p_q (\bo_q^\top \bX)$ and $\frac{\partial}{\partial{o_{qt}}} \E X_{t} r_q (\bo_q^\top \bX) = \E \frac{\partial}{\partial{o_{qt}}} X_{t} r_q (\bo_q^\top \bX)$.
\end{assumption}
\noindent The interesting assumption here is \ref{assumption:mismatch}(i), which defines the mis-match criterion. We can check this assumption for a proposed objective function density and a set of hypothetical source densities to gain insight into the robustness of the proposed estimator, which will be done in Section \ref{sec:SimIID}. 
Note the differentiability assumption is for the proposed densities and does not need to hold for the true densities. Now consider compact neighborhoods of $\bW_\bS$ of the form $\cN_\epsilon(\bW_\bS) = \{\bO_\bS \in \cO_{Q \times T}: ||\bO_\bS - \bW_\bS||_F \le \epsilon\}$. Note that in place of $\bW_\bS$, we could define this neighborhood for any other $\bW_\bS^* \cong \bW_\bS$ (here the equivalence class is defined for sign changes and permutations of the rows of $\bW_\bS$), which is useful when using a preliminary estimator described below. 
Let $p(\bx) = \prod_{q=1}^Q p_q(x_q)$.
\begin{prop}\label{prop:neighbor}
 Suppose Assumptions 1-5. There exists $\cN_{\epsilon^*}(\bW_\bS)$ such that $\E \! \log p(\bO_\bS \bL \bX)$ constrained to $\bO_\bS \in \cN_{\epsilon^*}(\bW_\bS)$ is maximized at $\bW_\bS$. 
\end{prop}
\noindent Restricting the optimization space is necessary because for many source distributions, the population objective function can contain multiple maxima. In fact, when the wrong density is used in the objective function, the global maximum can correspond to the wrong unmixing matrix \citep{risk2014evaluation}. This notion of localness corresponds to the definition of the theoretical fastICA estimator found in \cite{hyvarinen1999fast}, \cite{Hyvaerinen1998}, and \cite{wei2015convergence}. Formally, define
 \begin{align}\label{eq:LocalEstimator}
  \hatbW_\bS^{Local} = \underset{\bO_\bS \in \cN_{\epsilon^*}(\bW_\bS)}{\margmax} \;\;\; \cJ_n(\bO_\bS \hatbL (\bx_i - \barbx)).
  \end{align}
Then we have consistency even when the density is mis-specified.
\begin{thm}\label{thm:3}
Suppose $\bX$ follows the LNGCA model in \eqref{eq:LNGCAmodel} with Assumptions 1-5. 
Given an iid sample $\{\bx_i\}$, $\hatbW_\bS^{Local} \as \bW_\bS$ on the equivalence class of signed permutations.
\end{thm}
Under additional assumptions (Assumption \ref{assumption:rootn} in Web Supplement \ref{WS:additionalasymptotics}) and using the methods from \cite{nordhausen2011deflation}, \cite{miettinen2015fourth}, \cite{miettinen2015squared}, and \cite{virta2016projection}, we derive $\sqrt{n}$-consistency, asymptotic normality, and the asymptotic variances, which appear as Theorem \ref{thm:4} and Corollary \ref{cor:av} in the Web Supplement \ref{WS:additionalasymptotics}. We also conducted simulations validating the asymptotics on finite samples; see Figure \ref{fig:Empirical_vs_Theoretical}.

We can replace the condition that optimization is over $\cN_{\epsilon^*} (\bW_\bS)$ with a two-stage estimator in which in the first stage, we use an estimator that is consistent on $\cO_{Q \times T}$ and in the second stage, we use an estimator that may improve upon the initial consistent estimate. \cite{virta2016projection} propose an estimator for the LNGCA model based on a mixture of squared third and fourth moments in which the global maximum on $\cO_{Q\times T}$ is $\sqrt{n}$-consistent under finite eighth moment assumptions.  Define the Local+Virta estimator, $\hatbW_{\bS}^{LV}$, in which the symmetric estimator from \cite{virta2016projection} is updated with a single iteration of the symmetric fixed point algorithm (Algorithm 2 in Web Supplement \ref{WS:fixedpoint}, which is an approximate Newton iteration, \citealt{hyvarinen1999fast}) defined for the objective function in \eqref{eq:LocalEstimator}. Then under the additional moment assumptions, one can obtain an estimator with the wrong likelihood that is consistent on $\cO_{Q \times T}$.

For any LCA estimator $\hatbW_\bS$ and $\hatbs_i = \hatbW_\bS \hatbL (\bx_i - \barbx)$, we also define an estimator of $\bM_\bS$:
\begin{linenomath*}
\begin{align}\label{eq:hatbMs}
 \hatbM_\bS = \underset{\bA \in \tR^{T \times Q}}{\margmin} \; \sum_{i=1}^n \|\bx_i - \barbx - \bA \hatbs_i\|_2^2.
\end{align}
\end{linenomath*}
This is the OLS solution, which here is equivalent to $\hatbM_\bS = \hatbL^{-1} \hatbW_{\bS}^\top$. Although we assume iid observations in the construction of \eqref{eq:genericEstimator2}, the LNGCA model is capable of recovering many forms of dependent data, as is also the case in ICA. This will be demonstrated in simulations.

There is a natural ordering of the LCs when the component densities are not equal, which can be viewed as ordering components by the information measured by their non-Gaussian likelihood under the constraint of unit variance. Additionally, if the LCs have non-zero finite third moments, we can assume positive skewness and then the LNGCA model is fully identifiable (as in ICA, \citealt{Eloyan2012}). We choose the sign on each row $\hatbw_q$ and corresponding $\{\hat{s}_{iq}\}$ such that $\sum_{i=1}^n \hat{s}_{iq}^3 > 0$ for $q=1,\dots,Q$. We define the LCA criteria for ordering LCs for a sample $\{\bx_i\}$:
\begin{linenomath*}
\begin{align}\label{eq:LCAorder}
\sum_{i=1}^n \log f_1 \left\{ \hatbw_1^\top \hatbL (\bx_i-\barbx) \right\} > \sum_{i=1}^n \log f_2 \left\{ \hatbw_2^\top \hatbL (\bx_i - \barbx) \right\} > \dots > \sum_{i=1}^n \log f_Q \left\{ \hatbw_Q^\top \hatbL (\bx_i - \barbx) \right\}.
\end{align}
\end{linenomath*}
If we include the Gaussian noise components, then the population analogue of \eqref{eq:LCAorder}, allowing for potentially equal source densities and assuming continuous source densities, is
\begin{linenomath*}
\begin{align*}
\E \log f_1(\bw_1^\top \bL \bX) \ge \dots \ge \E \log f_Q(\bw_Q^\top \bL \bX) > \E \log \phi(\bw_{Q+1}^\top \bL \bX) = \dots = \E \log \phi(\bw_{T}^\top\bL \bX).
\end{align*}
\end{linenomath*}
This conveniently characterizes the noise components as containing the least information.

\section{\smaller Semi-parametric LCA: Spline-LCA}\label{sec:Spline-LCA}

In this section, we use the flexible family of tilted Gaussian densities to model the LCs. The proposed model is equivalent to ProDenICA \citep{hastie2002independent} when $Q = T$. For $Q<T$, it can be shown that the likelihood extends the semiparametric likelihood in \cite{blanchard2006search} to include an independence model for the LCs (see Proposition \ref{prop:NGCA} of Web Supplement \ref{WS:ppmethods}). The independence assumption is necessary for physically and biologically useful interpretations. We chose tilted Gaussian densities with cubic B-splines because ProDenICA generally outperformed parametric and kernel ICA methods \citep{hastie2009elements,risk2014evaluation} and its algorithmic complexity is $O(n)$, which enables its application to large datasets such as fMRI.

Suppose the LCs have tilted Gaussian distributions of the form $\phi(u)e^{g(u)}$, where $g(u)$ is a twice-differentiable function. Define the log-likelihood for some $\bO \in \cO_{T \times T}$:
\begin{linenomath*}
\begin{align*}
&\ell(\bO,g_1,\dots,g_{Q^*}\mid \hatbL,\barbx,Q^*,\{\bx_i\}) = \\
&\sum_{i=1}^n \left[ \sum_{q=1}^{Q^*}  \left\{ \log \phi \left(\bo_q^\top \hatbL (\bx_i -\barbx)\right) + g_q\left(\bo_q^\top \hatbL (\bx_i - \barbx)\right) \right\} + \sum_{k=1}^{T-{Q^*}} \log \phi \left(\bo_{k+{Q^*}}^\top \hatbL (\bx_i - \barbx)\right) \right].
\end{align*}
\end{linenomath*}
This log-likelihood does not have an upper bound. We define a penalized log-likelihood that includes a roughness penalty and an additional term to ensure the solution is a density:
\begin{linenomath*}
 \begin{align}
 \label{eq:Spline-LCA} \ell_{pen}(\bO,g_1,\dots,g_{Q^*}\mid \hatbL,\barbx,Q^*,\{\bx_i\}) &= - \sum_{q=1}^{Q^*} \left\{ \gamma_q \int \{g_q''(u)\}^2 \, du + \int \phi(u) e^{g_q(u)}\, du \right\} \\
 \nonumber &\;\;\;\;+ \frac{1}{n} \sum_{i=1}^n \sum_{q=1}^{Q^*} \left\{\log \phi \left(\bo_q^\top \hatbL (\bx_i - \barbx) \right) + g_q \left(\bo_q^\top \hatbL (\bx_i - \barbx) \right)\right\},
 \end{align}
 \end{linenomath*}
 where we have dropped the noise components since they are constant for all $\bO$ but retained the Gaussian contributions to the tilted Gaussian densities, which are not constant when the data are binned as described below. 
 Then we have the following:
\begin{prop}\label{prop:density}
Let $G$ be the class of all cubic splines $g\!: \tR \rightarrow \tR$. Consider the argmax of \eqref{eq:Spline-LCA} for $g_q \in G$. Then (i) $\int \phi(u)  e^{g_q(u)}\, du = 1$ and (ii) $\int u \phi(u)  e^{g_q(u)} \,du = 0$ for each $q$.
\end{prop}

We adapt the ProDenICA algorithm of \cite{hastie2002independent} to LCA, in which we alternate between estimating $\bW_\bS$ for fixed $\{\hat{f}_q\}$, $q=1,\dots,Q^*$, via the fixed point algorithm and estimating $\{f_q\}$ for fixed $\hatbW_\bS$ using the ``Poisson trick''. Our account largely follows the description in \cite{hastie2009elements} but for semi-orthogonal (rather than orthogonal) matrices.

Suppose $\hatbW_\bS$ is given and define $s_{vq} = \hatbw_q^\top \hatbL (\bx_i - \barbx)$. Let $u_1^*,\dots,u_{L+1}^*$ define a discretization, $[u_1^*,u_2^*),[u_2^*,u_3^*),\dots,[u_{L}^*,u_{L+1}^*)$, of the support of the tilt function of the non-Gaussian densities such that $\Delta = u_\ell^* - u_{\ell-1}^*$ for all $\ell=2,\dots,L+1$. It suffices to take $u_1^* = \min(s_{11},\dots,s_{nd})-0.1 \hat{\sigma}_z$ and $u_{L+1}^* = \max(s_{11},\dots,s_{nd})+0.1 \hat{\sigma}_z$, where $\hat{\sigma}_z$ denotes the sample standard deviation, which here is equal to one. Next, let $u_\ell = \frac{1}{2}(u_\ell^* + u_{\ell+1}^*)$. For each $q \in \{1,\dots ,Q^*\}$ and $\ell \in \{1,\dots,L\}$, define $y_{\ell q} = \sum_{i=1}^n \iv\{s_{vq} \in [u_\ell^*,u_{\ell+1}^*)\}$.

We approximate \eqref{eq:Spline-LCA} by discretizing the first integral and estimating the sum over $n$ as a weighted sum over $L$. Restricting our attention to a single $q$ and dividing by $\Delta$, 
\begin{linenomath*}
\begin{align}\label{eq3}
-\gamma_q \int \left\{ g_q'' (u)\right\}^2 du + \sum_{\ell=1}^L \left[ \frac{y_{\ell q}}{n \Delta} \left\{ g_q(u_\ell) + \log \phi(u_\ell) \right\} - \phi(u_\ell) e^{g_q(u_\ell)} \right]
\end{align}
\end{linenomath*}
for some penalty $\gamma_q$. This is proportional to a Poisson generalized additive model (GAM), where $\frac{y_{\ell q}}{n\Delta}$ is the response and the expected response is equal to $\phi(u_\ell)e^{g_q(u_\ell)}$. In practice, we use cubic B-splines in the \texttt{gam} package \citep{hastiegam} with \texttt{smooth.spline} and the default knot selection, where $\gamma_q$ is chosen to result in a user-specified number of effective degrees of freedom. We find that $df = 8$ and $L=100$ produce fast and accurate density estimates in simulations for a variety of densities with sample size 1,000. This method also easily scales to tens of thousands of observations. We summarize this procedure in Algorithm 1. Note that step 3 requires the first and second derivatives of the log densities of the LCs, which makes the use of B-splines convenient.
\begin{linenomath*}
\begin{algorithm}\label{algorithm:Spline-LCA}
\SetKwInOut{Input}{Inputs}
\SetKwInOut{Output}{Output}
\SetKwFor{For}{for}{}{endfor}
\SetKwFor{While}{while}{}{endwhile}
\caption{The Spline-LCA algorithm.}
\Input{The whitened $n \times T$ data matrix $\bX_{\textrm{st}}$; initial $\bW_\bS^0$; tolerance $\epsilon$; and desired effective degrees of freedom.}
\KwResult{Estimates of the latent components, $\hatbS$, and their densities, $\{ \hat{f}_q\}$.}
\begin{enumerate}
 \item Let $(m)=0$ where $(m)$ denotes the number of update steps. Define $\bS^{(m)} = \bX_{\textrm{st}} {\bW_\bS^{(m)}}{^\top}$.
  \item Estimate $\{f_q^{(m+1)}\}$ in which the smoothness penalty is chosen to result in the specified effective df.
  \item Update $\bW_\bS^{(m+1)}$ given $f_1^{(m+1)},\dots,f_Q^{(m+1)}$ and $\bS^{(m)}$ with one step of the symmetric fixed-point algorithm (see Algorithm 2 in Web Supplement \ref{WS:fixedpoint}).
  \item Let $\bS^{(m+1)} = \bX_{\textrm{st}} {\bW_\bS^{(m+1)}}{^\top}$.
  \item If $PMSE(\bW_\bS^{(m+1)}{^\top},\bW_\bS^{(m)}{^\top}) < \epsilon$, stop, else increment $(m)$ and repeat (2)-(4).
\end{enumerate}
\end{algorithm}
\end{linenomath*}
\section{\smaller Simulations: Distributional \& Noise-rank Assumptions}\label{sec:SimIID}
In this section, we simulate the LNGCA model [given by (\ref{eq:LNGCAmodel}) with $\bM_\bS \in \tR^{T \times Q}$] and the noisy ICA model [again given by (\ref{eq:LNGCAmodel}) with $\bM_\bS \in \tR^{T \times Q}$ but now with $\bM_\bN \bN \sim N(0,\sigma^2 \bI_T)$] under a variety of source distributions in which the components are iid as well as a scenario in which the signals are sparse images. We compare (i) deflationary FastICA with the `tanh' nonlinearity (D-FastICA), where the deflation option estimates components one-by-one such that the algorithm is considered a projection pursuit method \citep{hyvarinen2000independent}; (ii) two-class IFA with isotropic noise (IFA); (iii) PCA followed by Infomax (PCA+Infomax); (iv) PCA followed by ProDenICA (PCA+ProDenICA); (v) Logis-LCA; and (vi) Spline-LCA. We evaluate the robustness of these methods with respect to assumptions on the rank of the noise components, distribution of the latent components, and the signal-to-noise ratio (SNR). We define the SNR as the ratio of the total variance from the mixed non-Gaussian components to the total variance from the noise components. Formally, consider the non-zero eigenvalues $d_1,\dots,d_Q$ from the covariance matrix of $\bM_\bS \bS$. For LCA, let $d_{\epsilon_1},\dots,d_{\epsilon_{T-Q}}$ denote the eigenvalues from the EVD of the covariance matrix of $\bM_\bN \bN$. Then, $SNR = \frac{\sum_{q=1}^Q d_q}{\sum_{k=1}^{T-Q} d_{\epsilon_k}}$.
For the noisy ICA model, we have $T$ non-zero eigenvalues (equal to $\sigma^2$) in the denominator. 

 We fit D-FastICA using a modification of the fastICA R package \citep{Marchini:2010lr}. We fit PCA+Infomax using our own implementation of the Infomax algorithm. We fit PCA+ProDenICA using the ProDenICA function from the R package of that name \citep{hastieR2010}. Note that these methods can provide an estimate of $\bS$ but not the mixing matrix, which we estimated using \eqref{eq:hatbMs}. We fit the IFA model with two-component mixtures of normals using our own implementation, and the ICs were estimated by their conditional means (see equation (81) in \citealt{attias1999independent}). See Web Supplement \ref{supp:secNoisyICA}.

Data were generated with $T=5$ and $Q=2$ according to a $2^2 \times 6$ full factorial design. The three factors were
\begin{enumerate}

\item[i)] {\bf The model}: the  levels were (a) the LNGCA model with rank-$(T-Q)$ noise and (b) the noisy ICA model with rank-$T$ noise. In both models the signal was $\bM_\bS \bS$ where $\bM_\bS$ is $T \times Q$ with $Q < T$.

\item[ii)] {\bf The signal to noise (SNR) ratio}: the levels were (a) high where  the ratio of the variance from the signal components to the variance from the noise components was 5:1 and (b) low where that ratio was 1:5.

\item[iii)] {\bf Signal distribution}:  the levels were (a) logistic, (b)  t, (c) Gumbel, (d) sub-Gaussian mixture of normals,  (e) super-Gaussian mixture of normals, (f) with values determined by a sparse image, as described below.  The two signal components were each iid and had the same distributions in cases (a)--(e) but differed in the sparse signal case.
\end{enumerate}

Since we generated $Q=2$ signal components for all simulations, there were $T-Q=3$ and $T=5$ noise components for the LNGCA model and noisy ICA model, respectively.   Observations in the noise components were iid isotropic normal except for the sparse image scenario, in which we used the R-package {neuRosim} \citep{welvaert2011a} to generate three-dimensional Gaussian random fields with full width at half maximum (FWHM)  equal to 6 for each noise component.

The signal components had scale parameter equal to $\sqrt{3}/\pi$ for the logistic, three degrees of freedom for the t, and scale parameter equal to $\sqrt{6}/\pi$ for the Gumbel. For the super-Gaussian mixture of normals, we simulated a two-class model with the first centered at $0$ with variance $4/9$ with probability 0.95 and the second centered at $5$ with unit variance (excess kurtosis $\approx 9$), which is motivated by a brain network with 5\% of voxels (volumetric pixels) activated. For the sub-Gaussian mixture of normals, we used the two-class model with the first centered at $-1.7$ with unit variance and probability 0.75 and the second centered at $1.7$ with unit variance and probability equal to 0.25, which is equivalent to distribution `l' from \cite{hastie2002independent} (excess kurtosis $\approx -0.3$). For the sparse image, we used neuRosim to generate two 10 $\times$ 10 $\times$ 10 images: in the first component, activation was represented by a sphere of radius two voxels centered at $(5,5,5)$ with voxel-value equal to one and exponential decay rate equal to 0.5; in the second, the feature was a cube centered at $(7,7,7)$ with width equal to two and exponential decay rate equal to one.


We conducted 112 simulations (chosen because we used a cluster with 56 processors) with $n=$1,000 observations in which $\bM_\bS$ and $\bM_\bN$ were randomly generated to have condition number between one and ten for each combination of factors. Since neither the set of orthogonal matrices (PCA+ICA methods) nor semi-orthogonal matrices (LCA methods) is convex, we approximated the $\margmax$ by initializing D-FastICA, PCA+Infomax, Logis-LCA, PCA+ProDenICA, and Spline-LCA from twenty random matrices and selecting the estimate associated with the largest objective function value. For Logis-LCA and Spline-LCA, ten of these twenty initializations were from random matrices in the principal subspace. Let $\hatbU_{1:Q}^\top$ denote the first $Q$ rows from $\hatbU^\top$ in the decomposition $\hatbSigma = \hatbU {\widehat{\bLambda}} \hatbU^\top$. Then $\bW_\bS^0 = \bO \hatbU_{1:Q}^\top$ for $\bO \in \cO_{Q \times Q}$ produces a semiorthogonal matrix in the principal subspace, which may help convergence for large SNRs. For IFA, one must specify initial values for the unmixing matrix, the variance of the isotropic noise, and the parameters of the Gaussian mixtures, and here we had four strategies to find the argmax including initialization from the true $\bW_\bS$ (Web Supplement \ref{supp:secNoisyICA}).

When the LNGCA model was true and there was a high SNR, all methods except IFA generally produced accurate estimates of $\bS$ for the logistic, t, Gumbel, super-Gaussian mixture of normals, and sparse images, but only Spline-LCA was accurate for the sub-Gaussian mixture of normals, and the performance of IFA was more variable than other methods for all distributions (Figure \ref{fig:MSE_simulations}). PCA+Infomax performed poorly for the sub-Gaussian mixtures because the logistic distribution generally fails for sub-Gaussian distributions (see \citealt{lee1999independent}).  Boxplots examining the accuracy of $\hatbM_\bS$ showed patterns similar to those found in Figure \ref{fig:MSE_simulations} and consequently are not presented.

\begin{figure}
 \caption{Boxplots of permutation-invariant root mean squared error ($PRMSE$) for estimated columns of $\bS$ where the rank of the noise was $T-Q$ (LNGCA Model) or $T$ (noisy ICA model) in high SNR (`HI') and low SNR (`LO') scenarios for various latent distributions. `DF' = D-FastICA; `IFA' = independent factor analysis; `PI' = PCA+Infomax; `LL' = Logis-LCA; `PP' = PCA+ProDenICA; `SL' = Spline-LCA.}\label{fig:MSE_simulations}
 \begin{minipage}[b]{0.5\linewidth}
  \includegraphics[width=\textwidth]{./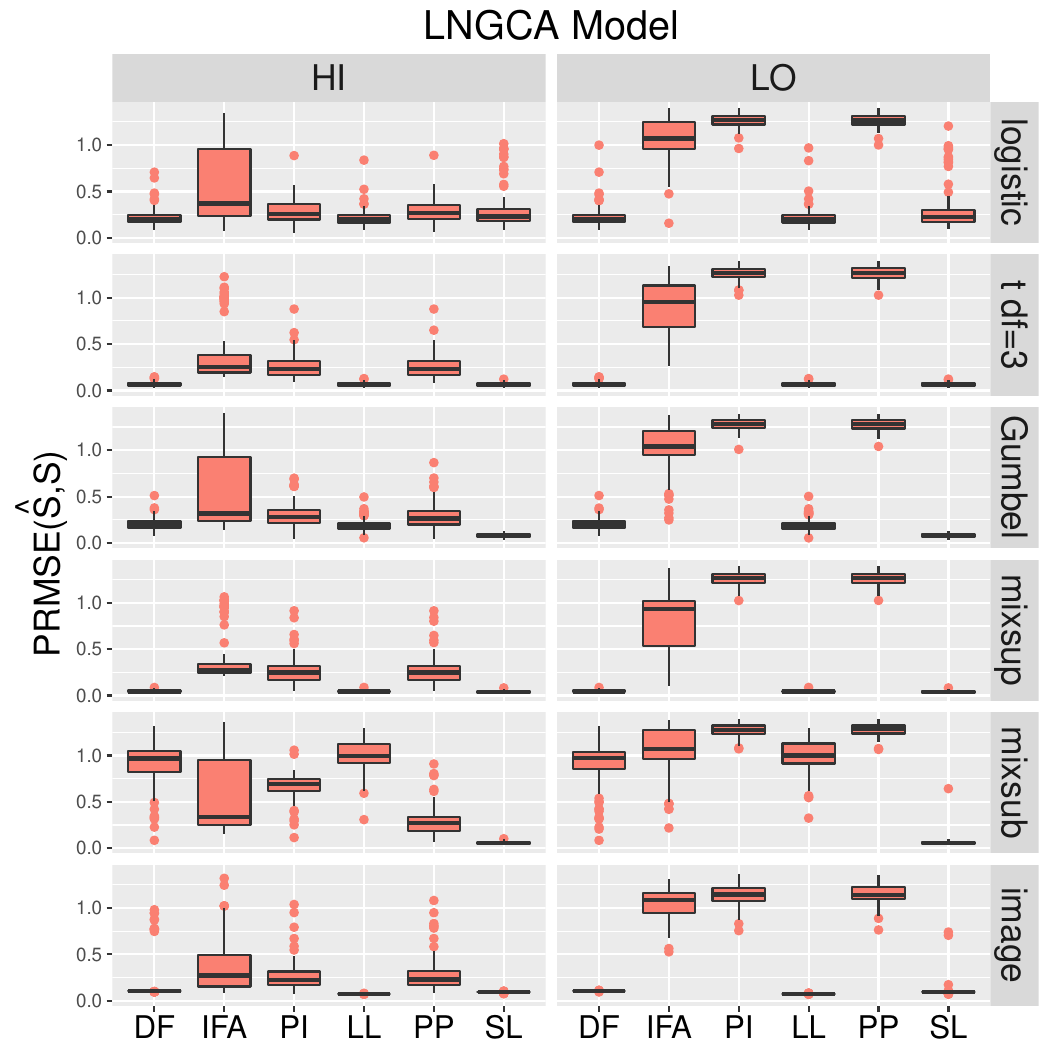}
 \end{minipage}
\begin{minipage}[b]{0.5\linewidth}
 \includegraphics[width=\textwidth]{./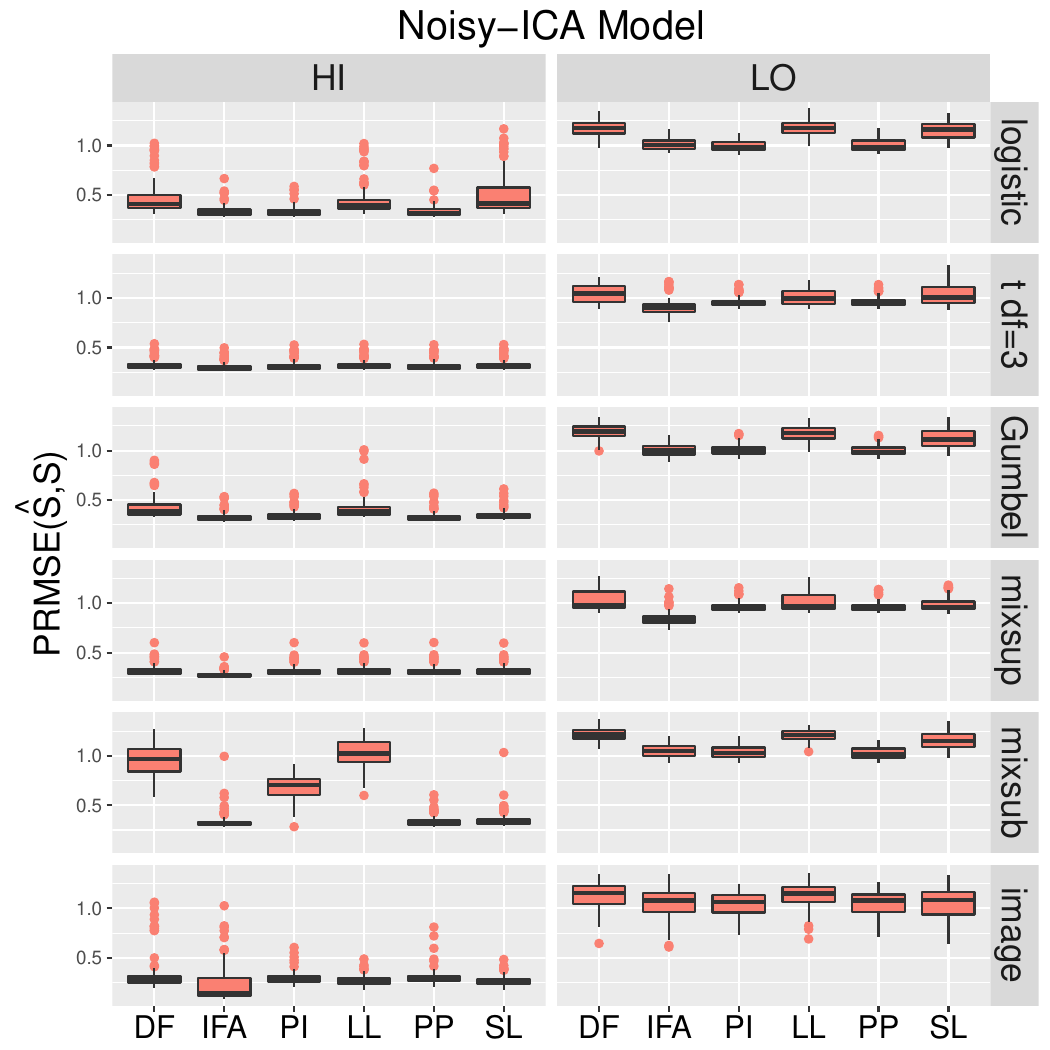}
\end{minipage}
\end{figure}

When the LNGCA model was true and there was a low SNR, Spline-LCA generally outperformed other methods, while IFA, PCA+Infomax, and PCA+ProDenICA failed to recover the LCs for all distributions, and D-FastICA and Logis-LCA recovered all distributions except for the sub-Gaussian mixture of normals. Thus for low SNR, PCA+ICA methods discarded the non-Gaussian signal. This was true even when the correct source density was modeled, as in PCA+Infomax and the logistic density simulation. Spline-LCA was the method most robust to distributional assumptions and was the only method that recovered the sub-Gaussian mixture. We numerically evaluated the condition in Assumption \ref{assumption:mismatch}(i) for Logis-LCA and all values were negative except for the sub-Gaussian mixture of normals; thus the results for $n=1000$ are in general agreement with Theorem 3. We also evaluated the mis-match criterion between densities (a)-(e) and Spline-LCA densities estimated from a sample from the true densities, and all values were negative. 

When the noisy ICA model was true and there was a high SNR, all methods generally produced reasonably accurate estimates for the logistic, t, Gumbel, super-Gaussian, and sparse image. IFA and Spline-LCA were the only methods that recovered ICs with sub-Gaussian distributions. When the noisy ICA model was true and there was a low SNR, all methods performed poorly, although IFA, PCA+Infomax, and PCA+ProDenICA outperformed LCA algorithms for some distributions. Note that in PCA+ICA methods, PCA decomposes the data into a subspace with the signal and some noise, and a subspace with noise only; see Web Supplement \ref{supp:secNoisyICA}. When the SNR is high, this is an effective strategy because the amount of error that corrupts the ICs is negligible. When there is a low SNR, the components estimated with ICA are highly contaminated with noise.


Overall, LCA methods were robust to the SNR for rank-$(T-Q)$ noise, and performed well in the high SNR scenario for rank-$T$ noise. Additionally, Spline-LCA was most robust to distributional assumptions. In contrast, IFA, PCA+Infomax, and PCA+ProDenICA performed poorly in the low SNR scenario for both the rank-$(T-Q)$ and rank-$T$ noise. 

\section{\smaller Simulations: Spatio-temporal Signals and $Q^* \ne Q$}\label{sec:SpatioTemporalSims}
Next, we examine the ability of D-FastICA, PCA+Infomax, Logis-LCA, and Spline-LCA to recover simulated spatially structured signals (i.e., sources) whose loadings vary deterministically with time in the presence of spatially and temporally correlated noise. Each spatial source is similar to a brain ``network'' (or a resting-state network) in fMRI, in which a set of active locations share the same temporal behavior and each location corresponds to the index $i$. We also examine the effect of using $Q^* \neq  Q$ on source recovery. We did not include IFA in these simulations because it was difficult to estimate when $T$ was relatively large (e.g., $T=50$). Additionally, IFA, PCA+Infomax, and PCA+ProDenICA produced similar results for super-Gaussian distributions in the previous simulations.  
We simulated three sources mixed across fifty time units. The sources were  33$\times$33 images corresponding to $n=1089$. ``Active'' pixels were in the shape of a ``1'', ``2 2'', or ``3 3 3'' with values between 0.5 and 1 and ``inactive'' pixels were mean zero iid normal with variance equal to 0.0001 (see Figure \ref{fig:simFMRI123_LCA}).  Let ${\bf m}_q$ denote the $q$th column of $\bM_\bS$. To simulate the temporal activation patterns of brain networks, we used {neuRosim} \citep{welvaert2011a} to convolve the canonical hemodynamic response function (HRF) with a block-design with a pair of onsets at $\{1, 20.6\}$, $\{10.8, 40.2\}$, and $\{10.8,30.4\}$ for ${\bf m}_1$, ${\bf m}_2$, and ${\bf m}_3$, respectively, and duration equal to 5 time units.

In the LNGCA scenario, noise components were generated as forty-seven independent 33$\times$33 Gaussian random fields with FWHM=6. Each column of $\bM_\bN$ corresponded to an AR(1) process simulated for fifty time units with AR coefficient equal to 0.47 and unit variance, where the AR coefficient was chosen based on a preliminary analysis of the fMRI data analyzed in Section \ref{sec:t-fMRI}. Additionally, noise components were scaled such that the SNR was 0.4, which approximately equals the SNR estimated in Section \ref{sec:t-fMRI}. In the noisy ICA scenario, a 33$\times$33 Gaussian random field with FWHM=6 was simulated for $t=1$. Then noise components were defined recursively for $t=2,\dots,50$ to be equal to 0.47 times the noise at time $t-1$ plus a realization from an independent Gaussian random field with FWHM=6.

We conducted 111 simulations with $Q^* = 2,3$ or $4$ (with fixed $Q=3$) and initialized all algorithms from twenty random mixing matrices for each simulation and each $Q^*$. For Logis-LCA and Spline-LCA, ten of the twenty initializations were from random matrices in the principal subspace, as in Section \ref{sec:SimIID}.
\begin{figure}[H]
 \caption{Spatial source recovery from the LNGCA scenario with $Q=3$ for $Q^*=2$, 3, or 4. Images depict LCs and time series depict the loadings ($\widehat{\mathbf{m}}_1,\dots,\widehat{\mathbf{m}}_{Q^*}$) corresponding to the median $PRMSE(\hatbS,\bS)$. In the last column, ``Truth'' corresponds to an arbitrary noise component whereas the algorithms attempted to estimate a fourth LC.}\label{fig:simFMRI123_LCA}

  $\hspace{0.7in} \bf Q^* = 2 \hspace{1.5in} \bf Q^* = 3 \hspace{1.9in} \bf Q^* = 4$

  \includegraphics[width=0.975\textwidth]{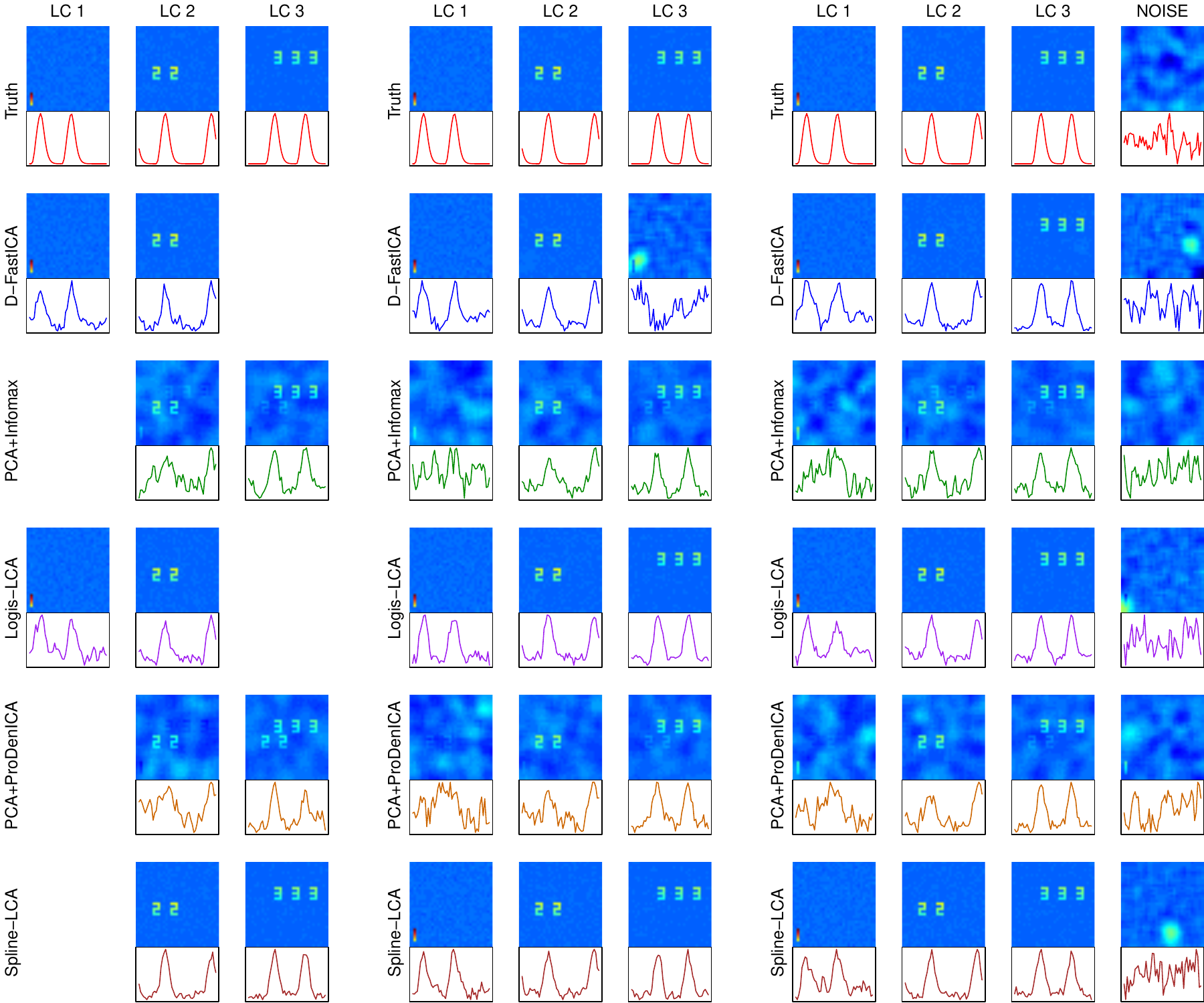}
\end{figure}

By inspecting the images and loadings associated with the median $PRMSE(\hat{\bS},\bS)$ for each method in the LNGCA scenario, we see that D-FastICA recovers a spurious component when $Q^*=3$; PCA+Infomax and PCA+ProDenICA generally fail to unmix features; and Logis-LCA and Spline-LCA are highly accurate (Figure \ref{fig:simFMRI123_LCA}). Boxplots for D-FastICA indicate higher $PRMSE$ than Logis-LCA or Spline-LCA for $Q^*=3$ and $Q^*=4$ (Figure \ref{fig:MSE_simFMRI}), and the third component was typically not recovered for $Q^*=3$ (Figure \ref{fig:simFMRI123_LCA}). This suggests a deflationary approach to estimating LNGCA may be inaccurate. In contrast, Logis-LCA and Spline-LCA recovered the components in all simulations (Figure \ref{fig:MSE_simFMRI}). It is notable that estimates from PCA+Infomax, PCA+ProDenICA, and D-FastICA were sensitive to the choice of $Q^*$, whereas Logis-LCA and Spline-LCA were robust (Figures \ref{fig:simFMRI123_LCA}, \ref{fig:MSE_simFMRI}).

For the noisy-ICA scenario, the features recovered by Logis-LCA most closely resembled the truth (Figure \ref{fig:simFMRI123_NoisyICA}) and Logis-LCA generally outperformed other methods (Figure \ref{fig:MSE_simFMRI}). Features from component two were again faintly visible in component three for $Q^*=2$ in both PCA+Infomax and PCA+ProDenICA, again indicating inadequate unmixing of the sources. As seen in the LNGCA scenario, D-FastICA recovered a spurious component for $Q^*=3$, but accurately estimated component three in the majority of simulations when $Q^*=4$. Spline-LCA typically failed to recover component one for $Q^*=3$, although it was quite accurate for components two and three. Spatial correlations in the noise can result in spurious disk-like features, which were estimated in D-FastICA for both scenarios and by Spline-LCA in the noisy-ICA scenario. For the simulation associated with the median error, an accurate estimate of component one was associated with a local maxima in Spline-LCA, but the spurious component had a higher likelihood. The true component was recovered in some simulations (Figure \ref{fig:MSE_simFMRI}).

\section{ \smaller Data Visualization and Dimension Reduction}\label{sec:LeafExample}
We used Logis-LCA and Spline-LCA for data visualization and dimension reduction in multivariate data comprising measurements from independent leaf samples \citep{silva2013}. Fourteen variables were generated from eight to sixteen images of leaves from each of thirty  species (Figure~\ref{fig:PlantSpecies}). Many of the covariates are highly correlated (Figure~\ref{fig:LeafCorr}). We plotted the first two PCs, ICs from PCA+Infomax and PCA+ProDenICA, and LCs from Logis-LCA and Spline-LCA. Two-dimensional PCA does not reveal clear features (Figure \ref{fig:LeafCorrPaper}). Since we are examining two dimensions, the effect of ICA is apparent as a rotation of the X- and Y-axes. Rotating the axes does not reveal any additional insight (Figure \ref{fig:LeafCorrPaper}, Figure~\ref{fig:leafLogis}). In contrast, Spline-LCA clearly reveals three clusters, where the green dots correspond to two plant species that have very thin leaves (species 31 and 34 in Figure \ref{fig:PlantSpecies}), the blue category corresponds to a species with leaves that are thinner than most species but less than those comprising the green dots (species 8), and the red category corresponds to all other species. Logis-LCA also reveals structure (Figure~\ref{fig:leafLogis}), although the separation is less than in Spline-LCA. A referee remarked that the NGCA model is not true here, in which the data are a mixture of thirty fourteen-variate distributions corresponding to the thirty species. We agree, but the goal here is to identify useful features. We find the model useful to that end.
 \begin{figure}
   \caption{Data visualization and dimension reduction for the leaf dataset. The original dataset comprises 14 variables, many of which are highly correlated. The green dots correspond to \emph{Podocarpus sp}.\ and \emph{Pseudosasa japonica}; the blue dots to \emph{Neurium oleander}; the red dots to all other species.}\label{fig:LeafCorrPaper}
  \begin{minipage}[b]{0.32\linewidth}
  \includegraphics[width=\textwidth]{./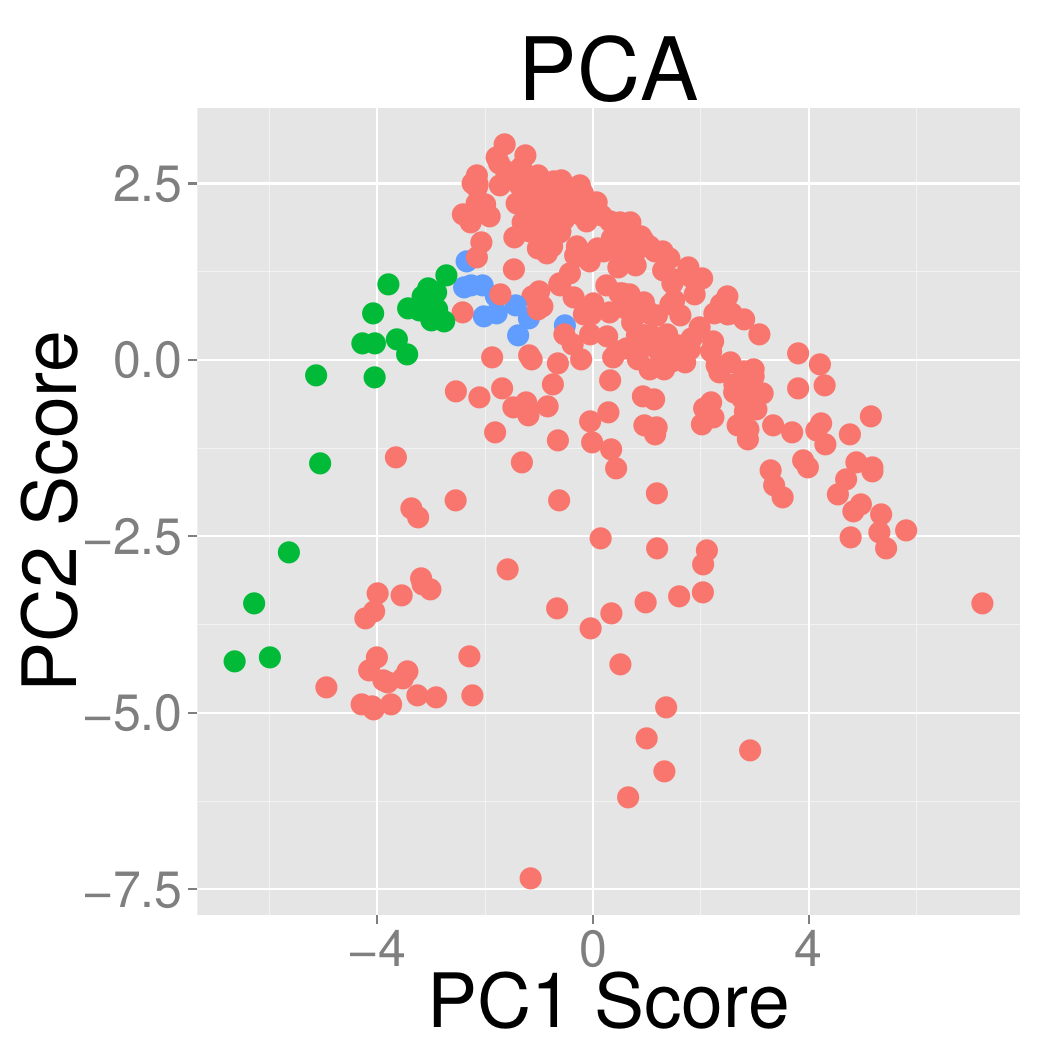}
  \end{minipage}
   \begin{minipage}[b]{0.32\linewidth}
    \includegraphics[width=\textwidth]{./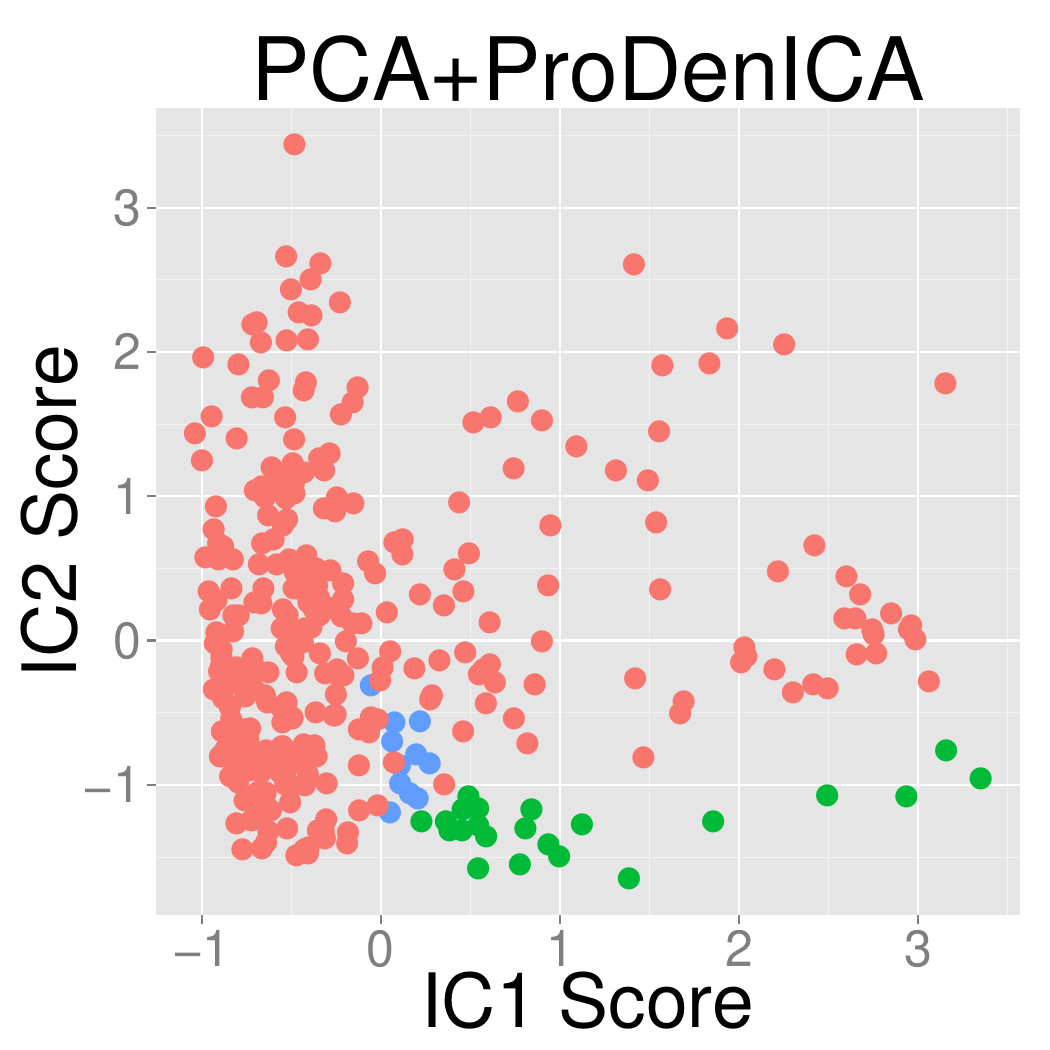}
  \end{minipage}
 \begin{minipage}[b]{0.32\linewidth}
  \includegraphics[width=\textwidth]{./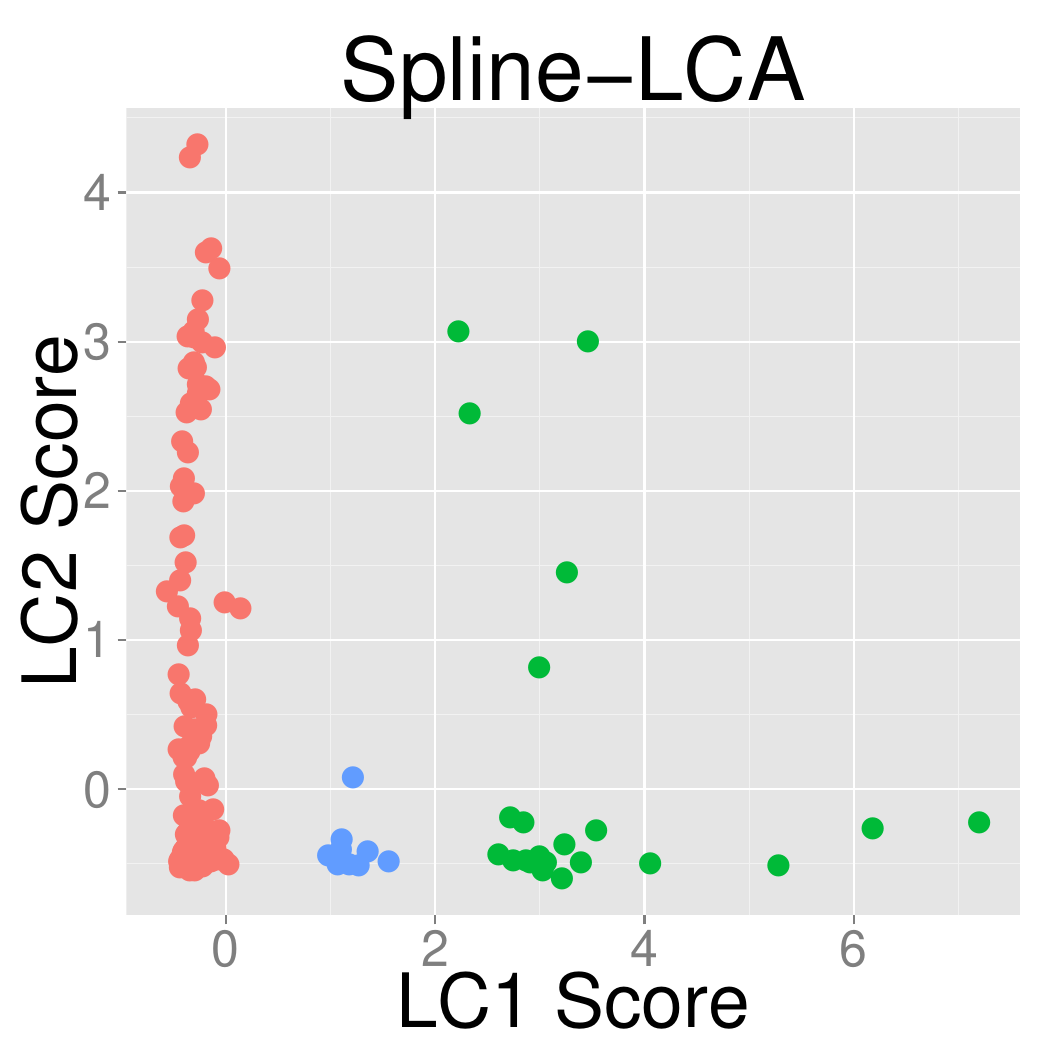}
 \end{minipage}
 \end{figure}

PCA+ICA methods were sensitive to the number of components estimated whereas the highest ranked components were very similar for different $Q^*$ in the LCA methods. In PCA+Infomax and PCA+ProDenICA, the first two (matched) ICs for $Q^*=5$ differed from the ICs estimated using two components, demonstrating the sensitivity of PCA+ICA methods to the number of principal components (Figures \ref{fig:leafLogis} and \ref{fig:leafSpline}). In contrast, the two highest-ranked LCs extracted from Logis-LCA and Spline-LCA when five components were estimated were very similar to the LCs estimated using two components.

\section{Application to fMRI}\label{sec:t-fMRI}

We applied Spline-LCA to eleven subjects from the Social Cognition / Theory of Mind experiment of the WU-Minn Human Connectome Project (HCP); additional information is in Web Supplement \ref{supp:sec:fMRI}. Single-subject ICA is an important technique for identifying artifacts in fMRI due to physiology (heart rate, breathing), subject-specific motion, and/or scanner instabilities, and accounting for these artifacts can decrease false positives and increase sensitivity \citep{pruim2015ica}. We used the minimally preprocessed data from the \emph{fMRIVolume} pipeline \citep{glasser2013minimal}. The preprocessing pipeline includes rigid-body motion correction of all volumes to a subject's reference image. Note that even if perfect alignment were possible, motion artifacts may still be present due to spin history effects and/or spatial variation in the coil sensitivities \citep{friston1996movement}. The \emph{fMRIVolume} pipeline does not include any spatial smoothing. Three-dimensional volume data were vectorized and non-brain tissue excluded using the mask provided from the HCP. This resulted in a 230,459 $\times$ 272 data matrix. Each voxel was treated as a replicate with $i=1,\dots,n$ for $n=\;$230,459, which is analogous to `spatial' ICA of fMRI \citep{calhoun2006unmixing}. We mean centered and variance normalized each voxel's time course prior to conducting LCA, as suggested for ICA \citep{beckmann2004probabilistic}.

We used the ICA software MELODIC (FSL) to determine the number of components that would be used in an analogous ICA of this dataset, which chose thirty components for subject 103414. Thirty components were then estimated for all other subjects. We initiated the algorithm from fifty-six matrices as described in Web Supplement \ref{supp:sec:fMRI}. Initiating the algorithm from fifty-six matrices resulted in multiple initializations converging to the same estimate of the argmax (Figure \ref{fig:InitValues_MDS}). We also completed an analogous PCA+ProDenICA with thirty components using the R package ProDenICA \citep{hastieR2010}.

In all subjects, a component highly correlated with the task was found in both Spline-LCA and PCA+ProDenICA, but a number of other components were only detected in Spline-LCA. We discuss the biological interpretation of the task-related network in the Web Supplement \ref{supp:sec:fMRI}, and here focus on the application to artifact detection. Overall, a median of eight components were found in Spline-LCA but not PCA+ProDenICA, as defined by the matched component having a correlation less than 0.5.  
In one example from subject 103414, LC 25 exhibited activation at the edges of the brain, which is typical of motion artifacts \citep{salimi2014automatic}. This artifact was not evident in the matched component from PCA+ProDenICA (Figure \ref{fig:LC25}). Additionally, component two was not correlated with any of the components in PCA+ProDenICA. It exhibited activation in the brainstem and near the edges of the brain, and may correspond to other sources of motion and noise (Figure \ref{fig:HCP_networks}; Web Supplement \ref{supp:sec:fMRI}). There were also artifacts that exhibited alternating patterns of positive and negative activation (Figure~\ref{fig:LC14}), which may be due to scanner acquisition and/or air-tissue boundaries (e.g., Figure 6 in \citealt{salimi2014automatic}), and these components were not found in PCA+ProDenICA. 
Our results suggests that LCA may improve artifact detection.


\begin{figure}
\centering
\caption{Motion artifact (component 25) identified using Spline-LCA (top) and the matched component from PCA+ProDenICA (bottom; correlation = 0.38) in subject 103414. Note the component exhibited activation near the edge of the brain in the LC but not the IC. Thresholded at $|s_{v,25}|>2$; yellow indicates $s_{v,25}>2$ and blue indicates $s_{v,25}<-2$.}\label{fig:LC25}
 \includegraphics[width=0.7\textwidth]{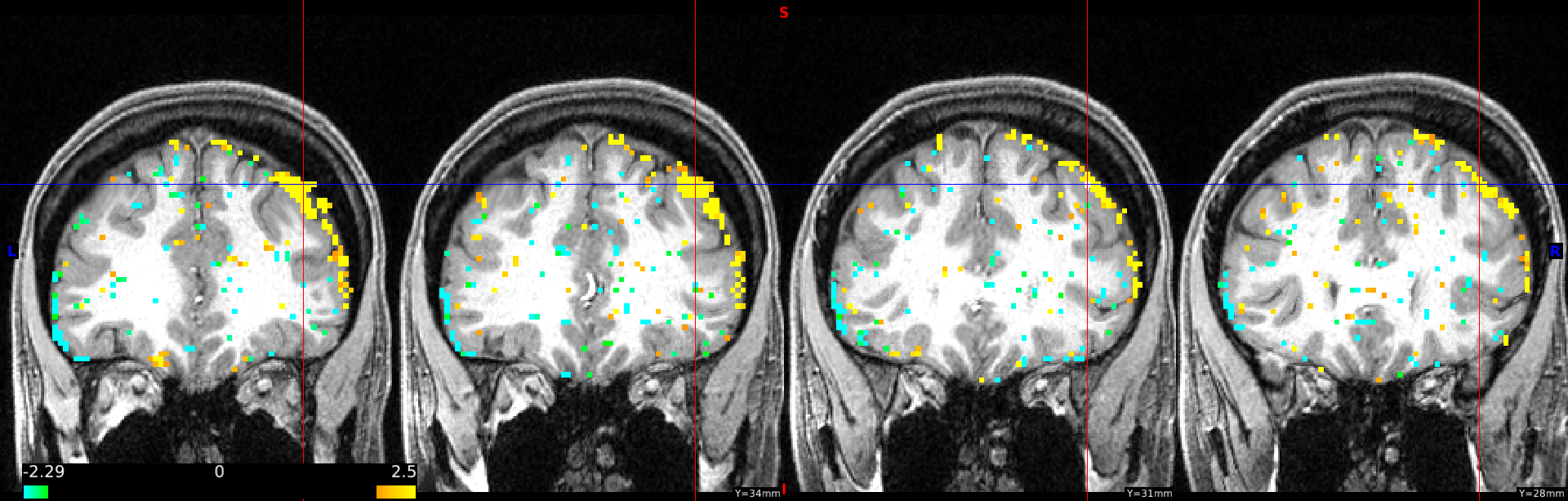}

 \includegraphics[width=0.7\textwidth]{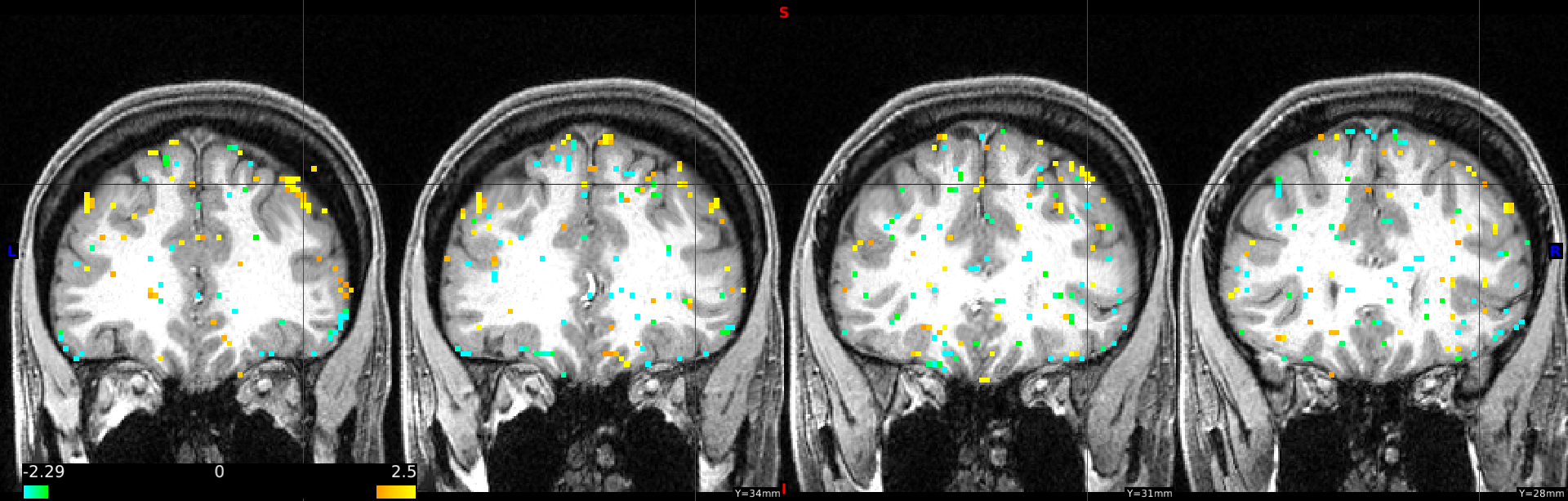}
\end{figure}

\section{Discussion}\label{sec:Discussion}
We propose a new model, LNGCA, and estimation framework, LCA, for non-Gaussian latent components in the presence of Gaussian noise that have many applications including dimension reduction, signal processing, and artifact detection. We presented two applications: data visualization and dimension reduction, and identifying brain networks and artifacts from neuroimagery.  Our first simulation study indicates that our methods perform well when the LNGCA model is true, even for low SNR, and our methods provide a reasonable approximation to noisy ICA when the SNR is high. Additionally, we found that the popular approach to approximating the noisy ICA model, PCA+ICA, does not approximate the LNGCA model under low SNR, and performs similarly to LCA for the noisy ICA model. In the second simulation study, we examined performance when data contained spatiotemporal dependence and a  moderately low SNR. Logis-LCA and Spline-LCA outperformed competing methods for the LNGCA model, and Logis-LCA outperformed all other methods for the noisy ICA model. These results suggest that LCA can be used to reveal structure for a large class of non-Gaussian observations. In the leaf example with correlated multivariate data, Spline-LCA revealed biologically meaningful clusters not apparent from PCA+ProDenICA. In our fMRI application, we simultaneously achieved dimension reduction and latent variable extraction for large image data ($T=272$ and $n=$230,459) and identified artifacts not extracted by PCA+ICA. 


LCA offers a computationally tractable alternative to one of the most common applications of ICA to fMRI: artifact detection. Currently, PCA+ICA is used as a pre-processing step to reveal biologically implausible loadings and/or loadings resembling physiological artifacts that can be used to de-noise data for subsequent analyses \citep{beckmann2012modelling}. In LCA, these artifacts appear as LCs since they have non-Gaussian distributions. Our improved detection of artifacts (Figure \ref{fig:LC25}, Figures \ref{fig:HCP_networks} and \ref{fig:LC14}) suggests LCA could be used for more powerful denoising methods over traditional PCA+ICA. 

An important advantage of LCA over existing frameworks is its robustness to misspecification of the number of estimated components, and future research should examine methods to select $Q^*$. In contrast to LCA,  noisy ICA is sensitive to the choice of $Q^*$ (Section \ref{sec:SpatioTemporalSims}, see also \citealt{allassonniere2012stochastic}). \cite{beckmann2004probabilistic} explored the use of probabilistic PCA to estimate the number of brain networks prior to ICA in order to avoid model over-fitting, which addresses the concern that over-fitting may separate a single brain network into multiple brain networks. However, our simulations suggest that using too few components leads to inappropriately aggregated sources in PCA+ICA methods (Figures \ref{fig:simFMRI123_LCA} and \ref{fig:simFMRI123_NoisyICA}). In contrast, the components recovered for $Q^* \ne Q$ in Logis-LCA across model scenarios and Spline-LCA for the LNGCA scenario accurately represent the spatial features. Moreover, in the leaf data example, the first two components were nearly identical for $Q^*=2$ and $Q^*=5$ for LCA but differed for PCA+ICA (Figures \ref{fig:leafLogis} and \ref{fig:leafSpline}). To determine $Q^*$ in LNGCA, \cite{virta2016projection} suggest the sequential use of the Jarque-Bera test of normality. \cite{nordhausen2016asymptotic} develop asymptotic and bootstrap tests of dimensionality using first-order blind identification (FOBI). The use of these criteria in fMRI and other applications is a direction for future research.
\vspace{-0.2in}
\ifblind
\section{Acknowledgments}
  hidden
\else
\section{Acknowledgments}
We thank Dr.\ Nathan Spreng, Department of Human Development, Cornell University, for scientific guidance and assistance with the HCP data. DSM was supported by a Xerox PARC Faculty Research Award, NSF grant DMS-1455172, and Cornell University Atkinson's Center for a Sustainable Future Award AVF-2017. BBR was partially supported by the NSF grant DMS-1127914 to the Statistical and Applied Mathematical Sciences Institute. Data were provided (in part) by the Human Connectome Project, WU-Minn Consortium (Principal Investigators: David Van Essen and Kamil Ugurbil; 1U54MH091657). 
\fi
\singlespace
\bibliographystyle{apalike}
\vspace{-0.2in}
\bibliography{LCApaper_v2.bbl}

%
\setcounter{figure}{0}
\renewcommand{\thefigure}{S.\arabic{figure}}

\setcounter{equation}{0}
\renewcommand{\theequation}{S.\arabic{equation}}

\setcounter{table}{0}
\renewcommand{\thetable}{S.\arabic{table}}

\setcounter{page}{1}

\setcounter{thm}{0}

\setcounter{lemma}{0}

\setcounter{cor}{0}


\begin{center}{\bf\Large Supplement to ``Linear Non-Gaussian Component Analysis via Maximum Likelihood''}

\bigskip

\ifblind
{author here}
\else
{Benjamin B. Risk, David S. Matteson, David Ruppert}
\fi
\end{center}

\maketitle

\doublespacing

\appendix

\section{Proofs}\label{WS:proofs}

\subsection{Proofs for Section \ref{sec:LNGCA}}

 We assume all random variables are mean zero. In \cite{kagan1973characterization}, a random variable $\bX \in \tR^T$ is said to have a \emph{linear structure} if it can be represented as $\bX = \bB \bY$ where the elements of $\bY$ are mutually independent random variables and no two columns of $\bB$ are proportional. We say a linear-structure random vector $\bX$ has \emph{essentially unique structure} if for any two representations $\bX = \bB \bY$ and $\bX = {\bf C} \bZ$, we have $\bB$ equals $\bC$ up to scaling and permutation of the columns, which we denote as $\bB \cong \bC$. A random variable $\bX$ is non-unique if there exist representations $\bX = \bB \bY = \bC \bZ$ but $\bB \ncong \bC$. Let $\eqd$ denote equal in distribution. First consider the theorem on uniqueness of decomposition.
\vspace{0.1in}

\noindent {\bf Theorem 10.3.9 from \cite{kagan1973characterization}.}
\emph{ Let $\bX = \bA \bY$ be a structural representation of $\bX$ and let the columns of $\bA$ be linearly independent. Then $\bX$ can be expressed as $\bX = \bX_1 + \bX_2$, where $\bX_1$ and $\bX_2$ are independent, $\bX_1$ has essentially unique structure, and $\bX_2$ is multivariate normal with a non-unique structure. Moreover, this decomposition is unique in the sense that if $\bX = \bZ_1 + \bZ_2$ is another decomposition, where $\bZ_1$ has essentially unique structure, $\bZ_2$ is multivariate normal, and $\bZ_1$ is independent of $\bZ_2$, then $\bZ_1 \eqd \bX_1$ and $\bZ_2 \eqd \bX_2$ up to scaling and permutations.}

For a proof see \cite{kagan1973characterization}.
\vspace{0.1in}

Before proving Theorem 1, we consider the following lemma. 
\begin{lemma}
 Suppose $\bZ$ and $\bX$ each have essentially unique structure and $\bZ \eqd \bX$. Consider their structural representations: $\bZ = \bM_\bS \bS$ and $\bX = \bM_\bS^* \bS^*$ where $\bM_\bS \in \tR^{T \times Q}$ and $\bM_\bS^* \in \tR^{T \times Q}$ for $Q \le T$, and ${\rm rank}(\bM_\bS) = {\rm rank}(\bM_\bS^*) = Q$. Then $\bM_\bS \cong \bM_\bS^*$ and $\bS \eqd \bS^*$ up to scaling and permutations.
\end{lemma}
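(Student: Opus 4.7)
The plan is to apply the hypothesis that $\bX$ has essentially unique structure by producing a \emph{second} representation $\bX = \bM_\bS \bS''$ of $\bX$ in which $\bS''$ has mutually independent components and $\bS'' \eqd \bS$. The key observation driving the construction is that $\bZ = \bM_\bS \bS$ takes values almost surely in the column space $\mathrm{Col}(\bM_\bS) \subset \tR^T$, and since $\bX \eqd \bZ$, the measurable event $\{\bX \in \mathrm{Col}(\bM_\bS)\}$ has probability one for $\bX$ as well. The hypothesis of essentially unique structure only constrains different representations of a single random vector, so this transfer step is what lets us bring $\bM_\bS$ into a representation of $\bX$.

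First I would fix a left inverse $\bM_\bS^+ \in \tR^{Q \times T}$ of $\bM_\bS$, which exists because $\rank(\bM_\bS) = Q$, and set $\bS'' := \bM_\bS^+ \bX$. Then $\bM_\bS \bS'' = \bM_\bS \bM_\bS^+ \bX = \bX$ almost surely, because $\bM_\bS \bM_\bS^+$ acts as the identity on $\mathrm{Col}(\bM_\bS)$, where $\bX$ lives almost surely. Moreover, the continuity of $\bz \mapsto \bM_\bS^+ \bz$ together with $\bX \eqd \bZ$ gives $\bS'' = \bM_\bS^+ \bX \eqd \bM_\bS^+ \bZ = \bM_\bS^+ \bM_\bS \bS = \bS$, so the components of $\bS''$ are mutually independent and jointly equidistributed with those of $\bS$.

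Second, I would invoke essential uniqueness of structure of $\bX$ for the two representations $\bX = \bM_\bS^* \bS^* = \bM_\bS \bS''$. Both mixing matrices are $T \times Q$ of rank $Q$, so no two of their columns are proportional, and both source vectors have independent components; thus essential uniqueness yields $\bM_\bS \cong \bM_\bS^*$. Writing $\bM_\bS^* = \bM_\bS \bP$ for a generalized permutation matrix $\bP \in \tR^{Q \times Q}$ (signed permutation composed with a diagonal scaling), substitution gives $\bM_\bS \bS'' = \bM_\bS \bP \bS^*$ almost surely, and left-multiplying by $\bM_\bS^+$ yields $\bS'' = \bP \bS^*$ a.s. Combined with $\bS'' \eqd \bS$ this delivers $\bS \eqd \bP \bS^*$, which is the intended sense of $\bS \eqd \bS^*$ on the equivalence class of scalings and permutations induced by $\bM_\bS \cong \bM_\bS^*$.

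The main obstacle I expect is precisely the transfer step: the definition of essentially unique structure is intrinsic to one random vector, so one has to be careful that $\bM_\bS \bM_\bS^+ \bX = \bX$ almost surely and that $\bS''$ genuinely has independent components. Both rely on the column-space containment $\bX \in \mathrm{Col}(\bM_\bS)$ a.s., which in turn uses only $\bZ \eqd \bX$ and the fact that $\mathrm{Col}(\bM_\bS)$ is a closed (hence Borel measurable) subset of $\tR^T$. Once this is established, the two conclusions reduce to applying Kagan--Linnik--Rao's uniqueness directly and using the left invertibility of $\bM_\bS$ to transport the equivalence from the mixing matrices to the source vectors.
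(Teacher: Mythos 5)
Your proof is correct, but it is organized quite differently from the paper's. The paper never builds a second representation of $\bX$ itself: it applies the left inverse $(\bM_\bS'\bM_\bS)^{-1}\bM_\bS'$ to the distributional identity $\bM_\bS\bS \eqd \bM_\bS^*\bS^*$ to obtain $\bS \eqd \bB\bS^*$ with $\bB = (\bM_\bS'\bM_\bS)^{-1}\bM_\bS'\bM_\bS^*$, then invokes essential uniqueness of the structure of the $Q$-dimensional source vector $\bS$ (via Theorem 10.3.5 of Kagan--Linnik--Rao, which requires the components to be non-Gaussian) to conclude $\bB \cong \bI$, and finally gets $\bM_\bS \cong \bM_\bS^*$ by a separate contradiction argument. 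You instead transfer the almost-sure column-space containment from $\bZ$ to $\bX$, manufacture an honest second representation $\bX = \bM_\bS\bS''$ with $\bS'' \eqd \bS$, and read both conclusions directly off the lemma's own hypothesis that $\bX$ has essentially unique structure. Your route has two advantages: it uses only the stated hypothesis rather than an additional appeal to non-Gaussianity of the components, and it avoids the paper's implicit (and unjustified) assumption that $\bB$ is invertible when it defines $\bR = \bB^{-1}\bS$. What the paper's route buys is that it works entirely at the level of the $Q$-dimensional sources and makes the identification $\bB \cong \bI$ explicit, which is the form reused later in the corollary. The only points you should make sure are spelled out are that rank $Q$ rules out proportional columns (so $\bX = \bM_\bS\bS''$ genuinely qualifies as a structural representation in the Kagan--Linnik--Rao sense) and that mutual independence of the components of $\bS''$ is inherited from $\bS''\eqd\bS$ because it is a property of the joint law; you note both, so there is no gap.
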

\begin{proof}
We have $\bM_\bS \bS \eqd \bM_\bS^* \bS^*$. Then,
\begin{linenomath*}
\begin{align*}
({\bM_\bS}^\top \bM_\bS)^{-1} {\bM_\bS}^\top\bM_\bS \bS =  ({\bM_\bS}^\top \bM_\bS)^{-1} {\bM_\bS}^\top \bM_\bS^* \bS^*.
\end{align*}
\end{linenomath*}
Letting $\bB = ({\bM_\bS}^\top \bM_\bS)^{-1} {\bM_\bS}^\top\bM_\bS^*$, we have $\bS \eqd \bB \bS^*$. Note by assumption $\bS \in \tR^Q$ and $\bS^* \in \tR^Q$. Now $\bS$ has non-Gaussian independent components and thus has essentially unique structure for the given number of components $Q$ (Theorem 10.3.5 in \citealt{kagan1973characterization}); in particular, $\bS = \bI \bS$. We can define a random variable $\bR = \bB^{-1} \bS$, and note that $\bR \eqd \bS^*$, and $\bS^*$ has independent components, which implies $\bR$ has independent components, which implies $\bB \bR$ is a structural representation of $\bS$. Since $\bS$ has essentially unique structure, $\bB \cong \bI$. It follows that $\bS^* \eqd \bS$ up to scaling and permutations.

Now consider the scaling and permutation such that $\bS^* \eqd \bS$. Then we have $\bB = \bI$, so $({\bM_\bS}^\top \bM_\bS)^{-1} {\bM_\bS}^\top \bM_\bS^* = \bI$. Now since $({\bM_\bS}^\top \bM_\bS)^{-1} {\bM_\bS}^\top$ is full row rank, it has a unique right inverse equal to the Moore-Penrose pseudoinverse, which is equal to $\bM_\bS$, which implies $\bM_\bS = \bM_\bS^*$. For $\bB \cong \bI$, it follows that $\bM_\bS^* \cong \bM_\bS$.
\end{proof}

We now prove Theorem 1.
\begin{thm}
Suppose $\bX$ follows the model in \eqref{eq:LNGCAmodel} with Assumptions 1-3. Then for any other representation $\bX = \bM_\bS^* \bS^* + \bE^*$ where $\bS^* \in \tR^Q$ are independent non-Gaussian components and $\bE^*$ is multivariate normal, we have: $\bM_\bS^* \cong \bM_\bS$; $\bS^* \eqd \bS$ up to scaling and permutations; $\bM_\bS \bS \eqd \bM_\bS^* \bS^*$; and $\bE^* \eqd \bM_\bN \bN$.
\end{thm}

\begin{proof}
Since $\bX$ has a unique decomposition in the sense of Theorem 10.3.9, we have $\bM_\bS \bS \eqd \bM_\bS^* \bS^*$ and $\bM_\bN \bN \eqd \bE^*$. Moreover, $\bM_\bS \bS$ and $\bM_\bS^* \bS^*$ have essentially unique structure (Theorem 10.3.5 in \citealt{kagan1973characterization}). Applying Lemma 1, we obtain the desired result.
\end{proof}

\begin{cor}
 Suppose the linear structure model in \eqref{eq:LNGCAmodel} of the main manuscript with density defined in \eqref{eq:JointDensity} and suppose that Assumptions 1-3 hold. Then $\{f_1,\bw_1\},\dots,\{f_Q,\bw_Q\}$ are identifiable up to sign and ordering. Note the rows $\bw_{Q+k}$ for $k=1,\dots,T-Q$ are not identifiable.
\end{cor}

\begin{proof}
For identifiability, we need to show that if there exist densities $g_1,\dots,g_T$ and a matrix $\bC$ such that
\begin{linenomath*}
\begin{align}\label{eq:s1}
 |\det(\bL)|\prod_{q=1}^Q f_q\left({\bw_q^\top \bL \bx}\right)\prod_{k=1}^{T-Q} \phi(\bw_{Q+k}^\top\bL \bx) = |\det(\bC)|\prod_{\ell=1}^T g_\ell (\bc_\ell^\top \bx)
\end{align}
\end{linenomath*}
then $Q$ of the marginal densities $g_1,\dots,g_T$ are equivalent up to sign to $f_1,\dots,f_Q$, where densities $g(x)$ and $f(x)$ are equivalent up to sign if they are equal or if $g(x) = f(-x)$ for all $x$ on $\tR$, and that each of the corresponding $Q$ rows of $\bC$ equal $\bw_1^\top \bL,\dots,\bw_Q^\top \bL$. Using a change of variable $\bZ = \bL \bX$, we consider the model $\bZ = \bA_\bS \bS + \bA_\bN \bN$, such that $[\bw_1^\top;\dots;\bw_Q^\top] = \bA_\bS^\top$ (where $[\bw_1^\top;\dots;\bw_Q^\top]$ indicates stacked row vectors) and $[\bw_{Q+1}^\top;\dots;\bw_{T}^\top] = \bA_\bN^\top$. Then \eqref{eq:s1} is equivalent to
\begin{linenomath*}
\begin{align*}
 \prod_{q=1}^Q f_q\left({\bw_q^\top \bz}\right)\prod_{k=1}^{T-Q} \phi(\bw_{Q+k}^\top\bz) = |\det(\bC)||\det(\bL)|^{-1}\prod_{\ell=1}^T g_\ell (\bc_\ell^\top \bL^{-1} \bz).
\end{align*}
\end{linenomath*}
We define $\bR = \bC \bL^{-1}$ such that we have
\begin{linenomath*}
\begin{align}\label{eq:2}
 \prod_{q=1}^Q f_q\left({\bw_q^\top \bz}\right)\prod_{k=1}^{T-Q} \phi(\bw_{Q+k}^\top\bz) = |\det(\bR)|\prod_{\ell=1}^T g_\ell (\br_\ell^\top \bz).
\end{align}
\end{linenomath*}
We have demonstrated identifiability up to signed permutations if we can show that $Q$ of the marginal densities $g_1,\dots,g_T$ are equivalent to $f_1,\dots,f_Q$; that each of the corresponding $Q$ rows of $\bR$ equal $\pm \bw_1,\dots,\pm \bw_Q$; and that $|\det(\bR)| =1$.

Define $\bK = \bR^{-1}$. Given the relationship in \eqref{eq:2}, then there exists another \emph{linear structure} representation of $\bZ$ such that $\bZ = \bK \bY$. Without loss of generality, we have $\E \bY \bY^\top = \bI$ (there is no loss of generality because we can scale $\bK$ such that $\E \bY \bY^\top = \bI$). From Theorem 10.3.3 in \cite{kagan1973characterization}, $\bZ$ has the decomposition $\bZ = \bK_1 \bY_1 + \bK_2 \bY_2$ in which $\bY_1$ are independent non-Gaussian and $\bY_2$ are Gaussian. Then from Theorem~1 and the assumption of unit variance, we have that $\bY_1 \eqd \bS$ (up to ordering), and it follows that there exists a subset of $g_1,\dots,g_T$ equal to $f_1,\dots,f_Q$. Also from Theorem~1, we have $\bK_1 \cong \bA_\bS$. Note that $\bK \in \cO_{T\times T}$ since $\E \bY \bY^\top = \bI$ and $\E \bZ \bZ^\top = \bI$, and hence $| \det (\bR) | = 1$. Then the scaling of $\bK_1$ is also identifiable such that there exists a signed permutation matrix, $\bP_{\pm}$, such that $\bK_1 \bP_{\pm} = \bA_\bS$. Note that $\bW_\bS = \bA_{\bS}^\top$. Define $\bR_\bS = \bK_1^\top$. Then $\bP_{\pm}^\top \bR_\bS  = \bW_\bS$.
\end{proof}

\subsection{Proofs for Section \ref{sec:ParametricLCA}}\label{WS:parametricLCA}

To simplify notation, we assume $\E \bX = \bzero$ but include the estimate of the mean $\barbx$ in our analysis so this assumption is without loss of generality. Let $f_\bS$ denote the joint density of the LCs, and similarly define $p_\bS(\bs) = \prod_{q=1}^Q p_q(s_q)$ for the densities used in \eqref{eq:ObjectiveFunction}. Let $\|\bA \|$ denote the Frobenius norm for $\bA \in \tR^{Q \times T}$. 

Next we discuss Assumption \ref{assumption:4} (ii) and inequality \eqref{eq:Prop4i}. The value of $\alpha$ will depend on the tail behavior of $\frac{d}{dx} \log\{p_q(x)\}$, $q=1,\ldots,Q$.  
For insight into this assumption, consider $Q=1$ such that $h(x) = \log p_1(x)$. By the mean value theorem,
\[
\|h(x_1) - h(x_0)\|  = \|h'(x^*)\|  \, \|x_1-x_0\|
\]
with $x^*$ between $x_0$ and $x_1$. Then if $h'$ is monotonic,
\begin{equation}
\|h(x_1) - h(x_0)\|  \le \left\{ \|h'(x_1)\| + \|h'(x_0)\|\right\}  \|x_1-x_0\|. \label{eq:Prop4ii}
\end{equation}
Therefore, if $\|h'(x)\|$ grows like $\|x\|^\alpha$ as $\|x\| \to \infty$, then \eqref{eq:Prop4i} will hold.

For example, for the exponential power density centered at 0, which is
\[
p_q(x) = \frac{\beta}{2\sigma \Gamma(1/\beta)} \exp\left\{-\left(\frac{|x|}{\sigma}\right)^\beta\right\},
\]
we have 
\[
\frac{d}{dx} \log\{p_q(x)\} = -\beta \, {\textrm{sign}}(x)\frac{|x|^{\beta-1}}{\sigma^\beta}, \ x \ne 0,
\]
which is bounded for $\beta = 1$. For $\beta > 1$, we can take $\alpha = \beta-1$.  For $\beta<1$, the exponential power density has an unbounded score function at zero, but similar densities can be constructed with exponential power law tails such that one can take $\alpha = 0$. The student-$t$ distributions and the logistic distribution are other examples where $\frac{d}{dx} \log\{p_q(x)\} $ is bounded, so $\alpha = 0$.  
At least in these examples, lighter tails require large values of $\alpha$, but, fortunately, make it easier for $E(\|\bS\|^{1+\alpha}) < \infty$ to hold.

Equation \eqref{eq:Prop4ii} shows that \eqref{eq:Prop4i} cannot be replaced by something like
\[
\|h(x) - h(x')\| \le M \|x-x'\| \,  \Big\{ 1 +\|x-x'\|^\alpha  \Big\}.
\]

The following two propositions are used to prove consistency with pre-whitening. Recall that $\cJ_n$ is defined in \eqref{eq:ObjectiveFunction} of the main manuscript. 
\begin{prop}\label{prop:prewhiten} $\cJ_n(\bO_\bS \hatbL(\bx_i - \bar{\bx})) \as \cJ_n(\bO_\bS \bL \bx_i)$
\end{prop}
\begin{proof}
First note that
\[
\cJn(\bO_\bS \hatbL(\bx_i-\barbx)) = \cJn(\bO_{\bS} \bL \bx_i) + R_n
\]
where
\begin{eqnarray*}
\|R_n\| &=& \left\| \cJn( \bO_{\bS} \hatbL(\bx_i-\barbx) ) - \cJn( \bO_{\bS} \bL \bx_i) \right\| \\
&\le& \frac{1}{n} \sum_{i=1}^n  \left\| h( \bO_{\bS} \hatbL(\bx_i-\bx)  ) - h( \bO_{\bS} \bL \bx_i) \right\|. 
\end{eqnarray*}
Using \eqref{eq:Prop4i},
\begin{eqnarray}
 & &
 \frac{1}{n} \sum_{i=1}^n  \left\| h( \bO_{\bS} \hatbL(\bx_i-\barbx))   - h( \bO_{\bS} \bL \bx_i)  \right\|  \le  
 M \bigg\{ \frac{1}{n} \sum_{i=1}^n \|\bO_\bS (\hatbL(\bx_i - \barbx) - \bL \bx_i) \| 
 \label{eq:Prop4ugly}
\\ & + &
 \frac{1}{n} \sum_{i=1}^n \|\bO_\bS (\hatbL(\bx_i - \barbx) -  \bL\bx_i ) \|  \, \|  \bO_\bS \hatbL(\bx_i - \barbx)\|^\alpha  +
  \frac{1}{n} \sum_{i=1}^n \|\bO_\bS (\hatbL(\bx_i - \barbx) - \bL\bx_i) \| \, \|  \bO_\bS \bL \bx_i \|^\alpha 
 \bigg\}  \nonumber 
 \end{eqnarray}
 Then since $\bO_\bS$ is semi-orthogonal, the right-hand side of \eqref{eq:Prop4ugly} is at most
\begin{eqnarray}
  & &
 M \bigg\{ \frac{1}{n} \sum_{i=1}^n \|\hatbL(\bx_i - \barbx) - \bL \bx_i \| + 
 \frac{1}{n} \sum_{i=1}^n \|\hatbL(\bx_i - \barbx) - \bL\bx_i \|  \, \|  \hatbL(\bx_i - \barbx)\|^\alpha \nonumber \\
 &  + &  \frac{1}{n} \sum_{i=1}^n \|\hatbL(\bx_i - \barbx) - \bL\bx_i\| \, \|  \bL \bx_i \|^\alpha 
 \bigg\}.
 \label{eq:Prop4iii}
\end{eqnarray}
Note that $\{\E \|\bx_i\|^{1+\alpha}\}^{1/(1+\alpha)} = \{ \E \| \bM \bz_i \|^{1+\alpha} \}^{1/(1+\alpha)} \le  \|\bM \| \left\{ \E (\| \bs_i \|+\|\bn_i \|)^{1+\alpha}\right\}^{1/(1+\alpha)} \le \|\bM \| (\E \| \bs_i \|^{1+\alpha})^{1/(1+\alpha)}+\| \bM \| (\E \|\bn_i \|^{1+\alpha})^{1/(1+\alpha)} < \infty$, where the last inequality uses Assumption \ref{assumption:4} (iii) and properties of the normal distribution. For the first term on the right-hand side of \eqref{eq:Prop4iii}
\[
\frac{1}{n} \sum_{i=1}^n \|\hatbL(\bx_i - \barbx) - \bL \bx_i\| \le  
\left\{ \| \hatbL - \bL\|  \, \frac{1}{n} \sum_{i=1}^n \|\bx_i \| + \|\hatbL\| \, \|\barbx\| \right\} \as 0,
\]
since $\hatbL \as \bL$, $\barbx \as 0$, and $\bx_1,\ldots,\bx_n$ are iid so we can apply the strong law of large numbers: 
\[
\frac{1}{n} \sum_{i=1}^n \|\bx_i \| \as \E(\|\bx_i\| ) \le \{\E(\|\bx_i\|^{1+\alpha})\}^{1/(1+\alpha)} < \infty.
\]
For the second term on the right hand side of \eqref{eq:Prop4iii},
\begin{eqnarray*}
& &
\frac{1}{n} \sum_{i=1}^n \|\hatbL(\bx_i - \barbx) - \bL\bx_i\|  \, \|  \hatbL(\bx_i - \barbx)\|^\alpha   \\
 &\le&   
 \frac{1}{n} \sum_{i=1}^n \left(\|\hatbL - \bL\| \, \|\bx_i\| + \|\hatbL\| \, \| \barbx\|   \right) \|\hatbL(\bx_i - \barbx)\|^\alpha \\
 &\le&
 \| \hatbL - \bL\|  \frac{1}{n} \sum_{i=1}^n \|\bx_i\| \, \|\hatbL(\bx_i-\barbx)\|^\alpha
 + \|\hatbL\| \, \| \barbx\|  \frac{1}{n} \sum_{i=1}^n \|\hatbL(\bx_i - \barbx)\|^\alpha. \\
\end{eqnarray*}
To prove this converges to zero, we need to show the means are finite, but we can not directly apply a law of large numbers because the summands are not independent due to prewhitening. First, note that 
\begin{align*}
 \frac{1}{n} \sum_{i=1}^n \|\bx_i\| \, \|\hatbL(\bx_i-\barbx)\|^\alpha \le \frac{1}{n}\sum_{i=1}^n  \left\{ \|\bx_i\|^{1+\alpha} + \|\hatbL(\bx_i-\barbx)\|^{1+\alpha} \right\}.
 \end{align*}
Then it remains to be shown that $\lim \frac{1}{n} \sum_{i=1}^n \|\hatbL(\bx_i-\barbx)\|^{1+\alpha} < \infty$. We have
\begin{align}
 \frac{1}{n} \sum_{i=1}^n \|\hatbL(\bx_i-\barbx)\|^{1+\alpha} &\le \frac{1}{n}\sum_{i=1}^n \left\{ \|\hatbL \| \|\bx_i-\barbx \|\right\}^{1+\alpha} \nonumber \\
 &\le \| \hatbL \| ^{1+\alpha} \, \frac{1}{n} \sum_{i=1}^n \| \bx_i - \barbx \|^{1+\alpha} \label{eq:Prop4iv}
\end{align}
Now consider 
\begin{align}
\frac{1}{n} \sum_{i=1}^n \|\bx_i-\barbx\|^{1+\alpha} &\le \frac{1}{n} \sum_{i=1}^n (\|\bx_i\| + \|\barbx\|)^{1+\alpha} \nonumber \\
&\le \frac{1}{n} \sum_{i=1}^n (2 \|\bx_i \|)^{1+\alpha} + (2\|\barbx\|)^{1+\alpha} \label{eq:Prop4v}
\end{align}
Since $\E \| \bx_i\|^{1+\alpha}<\infty$, we apply the law of large numbers to conclude that \eqref{eq:Prop4v} $<\infty$, and we conclude that \eqref{eq:Prop4iv} $< \infty$. Then \eqref{eq:Prop4iii}$\as 0$  because  $\| \hatbL - \bL\| \as 0$ and $\| \barbx\| \as 0$.

The third term on the right-hand side of \eqref{eq:Prop4iii} can be handled similarly.
\end{proof}

\begin{prop}\label{prop:prewhiten2}
Let $B \subseteq \cO_{Q \times T}$. Then
\[ 
\underset{\bO_\bS \in B}{\sup} \cJ_n(\bO_\bS \hatbL (\bx_i - \barbx)) \le \underset{\bO_\bS \in B}{\sup} \cJ_n (\bO_\bS \bL \bx_i) + o(1)\quad\quad a.s.
\]
\end{prop}
 \begin{proof}
\begin{align*}
\underset{\bO_\bS \in B}{\sup} \cJ_n(\bO_\bS \hatbL (\bx_i - \barbx)) &\le \underset{\bO_\bS \in B}{\sup} \cJ_n (\bO_\bS \bL \bx_i) + \sup_{\bO_\bS \in B} \frac{1}{n} \sum_{i=1}^n  \left\| h( \bO_{\bS} \hatbL(\bx_i-\bx)  ) - h( \bO_{\bS} \bL \bx_i) \right\|
\end{align*}
Note that
\begin{align*}
\sup_{\bO_\bS \in B}  \frac{1}{n} \sum_{i=1}^n \|\bO_\bS (\hatbL(\bx_i - \barbx)) -  \bL \bx_i ) \| & \le \sup_{\bO_\bS \in B}  \| \bO_\bS \| \frac{1}{n} \sum_{i=1}^n \|\hatbL(\bx_i - \barbx) -  \bL\bx_i  \| \\
&\le \sqrt{Q} \frac{1}{n} \sum_{i=1}^n \|\hatbL(\bx_i - \barbx) -  \bL\bx_i  \|.
\end{align*}
Using the inequality in \eqref{eq:Prop4ugly} and the previous argument, we have
\begin{eqnarray}
& & \sup_{\bO_\bS \in B} \frac{1}{n} \sum_{i=1}^n  \left\| h( \bO_{\bS} \hatbL(\bx_i-\barbx)  ) - h( \bO_{\bS} \bL \bx_i) \right\| \le MQ \bigg\{ \frac{1}{n} \sum_{i=1}^n \|\hatbL(\bx_i - \barbx) - \bL \bx_i \| \nonumber \\
& + & \frac{1}{n} \sum_{i=1}^n \|\hatbL(\bx_i - \barbx) - \bL\bx_i \|  \, \|  \hatbL(\bx_i - \barbx)\|^\alpha  +   \frac{1}{n} \sum_{i=1}^n \|\hatbL(\bx_i - \barbx) - \bL\bx_i\| \, \|  \bL \bx_i \|^\alpha \bigg\}. \label{Prop5:i}
\end{eqnarray}
Using the same arguments as in Proposition \ref{prop:prewhiten} to analyze the inequality in \eqref{eq:Prop4iii}, we have \eqref{Prop5:i} $\as 0$. 
\end{proof}

The next proposition is used in the proof of Theorem \ref{thm:2}.
\begin{prop}\label{prop:2}
Consider a random vector $\bY \in \tR^T$ with density $f_\bY$ such that $\E \bY = \bzero$ and $\E \bY \bY^\top = \bI_T$. Then for any $\bo$ and $\bw$ such that $\bo^\top \bo = \bw^\top \bw = 1$, we have
\[
 \E \log \phi (\bo^\top \bY) = \E \log \phi (\bw^\top \bY).
\]
\end{prop}
\begin{proof}
We can ignore the normalizing constants of $\phi(x)$ and consider the quadratic term of the Gaussian kernel.  Then we have
$\E (\bo^\top \bY)^2 = \bo^\top \E(\bY \bY^\top) \bo = \bo^\top \bI \bo = \bo^\top \bo = 1$ and similarly for $\E (\bw^\top \bY)^2$.

\end{proof}

Next we prove consistency when the density used in the objective function equals the true density.
\begin{thm}
Suppose $\bX$ follows the LNGCA model in \eqref{eq:LNGCAmodel} with Assumptions~1-4. 
Given an iid sample $\{\bx_i\}$, $\hatbW_\bS^{Or} \as \bW_\bS$ on the equivalence class of signed permutations.
\end{thm}
\begin{proof}
We will include the effects of centering with $\bar{\bx}$ in the discussion that follows such that it is without loss of generality that we assume $\E \bX = \bzero$. Then $\bX \sim (0,\bSigma)$ and let $\bSigma^{-1/2} = \bL$. 

We will show the assumptions in Wald's consistency proof as recast in Theorem 5.14 in \cite{van2000asymptotic} hold; a similar proof is in \cite{asymptopia}. Note that this theory applies to a set of maxima of the population objective function, and thus is convenient for the set defined by the equivalence class of signed permutations of $\bW_\bS$. For clarity, we use $o_p(1)$ notation to correspond to van der Vaart, but note that Propositions \ref{prop:prewhiten} and \ref{prop:prewhiten2} hold almost surely and the proof ultimately demonstrates strong consistency as in \cite{wald1949note} and \cite{asymptopia}. Recall $f_\bS$ denotes the joint density of the LCs. The conditions are not all numbered in \cite{van2000asymptotic}, so for ease of reference we now state them.  
\begin{enumerate}[(i.)]
\item The parameter space is compact. This is stated in \cite{asymptopia}, where as van der Vaart proves consistency for all compact subsets, $K$, of the parameter space. 
\item $\log f_\bS(\bO_\bS \bL \bx)$ is upper-semicontinuous for almost all $\bx$; in van der Vaart, this corresponds to (5.12).
\item For every sufficiently small ball $U \subset \cO_{Q \times T}$, the function $\bO_\bS \mapsto \sup_{\bO_\bS \in U} \log f_\bS(\bO_\bS \bL \bx_i)$ is measurable and satisfies $\E \sup_{\bO_\bS \in U } \log f_\bS(\bO_\bS \bL \bX) <\infty$; in van der Vaart, this corresponds to (5.13).
\item $\E \log f_\bS (\bO_\bS \bL \bX) \le \E \log f_\bS(\bW_\bS \bL \bX)$ for any $\bO_\bS \in \cO_{Q \times T}$ with equality if and only if $\bO_\bS \cong \bW_\bS$; this assumption is part of the definition of $\Theta_0$ following assumption (5.13) in van der Vaart and is assumption (i) in \cite{asymptopia}.
\item The estimator satisfies:
\[
 \cJ_n(\hatbW_\bS \hatbL (\bx_i - \barbx)) \ge \cJ_n(\bW_\bS \bL \bx_i) - o_p(1);
\]
in van der Vaart's notation, this corresponds to $M_n(\hat{\theta}_n) \ge M_n(\theta_0) - o_p(1)$. 
\end{enumerate}
In addition to these conditions, we will outline van der Vaart's proof and provide additional justification to apply the law of large numbers, which is required because the observations are not iid due to pre-whitening.

First, $\cO_{Q \times T}$ is compact, and (i) is satisfied. Next, we assume continuous densities which implies upper semicontinuity (condition ii). From Assumption \ref{assumption:4} (i), the densities are bounded, say by some constant $A$, and we have $\E \sup_{\bO_\bS \in U} \log f_\bS(\bO_\bS \bL \bX) \le \E \log A < \infty$ and hence satisfy condition (iii).

We next show condition (iv) is satisfied. Let $\bW_\bN$ denote rows $Q+1$ to $T$ of $\bW$. Note that the fact that $\E \log f_\bS (\bO_\bS \bL \bX) \le \E \log f_\bS(\bW_\bS \bL \bX)$ does not hold trivially can be seen by the following argument:
\begin{linenomath*}
\begin{align*}
 \E \log \frac{f_{\bS} (\bO_\bS \bL \bX)}{f_\bS(\bW_\bS \bL \bX)} &= (\det \bL)\int \log \left\{\frac{f_{\bS} (\bO_\bS \bL \bx)}{f_\bS(\bW_\bS \bL \bx)} \right\} \left\{ f_\bS (\bW_\bS \bL \bx) \phi(\bW_\bN \bL \bx) \right\} d \bx \\
 &\le (\det \bL) \log \int \left\{\frac{f_{\bS} (\bO_\bS \bL \bx)}{f_\bS(\bW_\bS \bL \bx)} \right\} \left\{ f_\bS (\bW_\bS \bL \bx) \phi(\bW_\bN \bL \bx) \right\} d \bx \\
 &= (\det \bL) \log \int f_{\bS} (\bO_\bS \bL \bx) \phi(\bW_\bN \bL \bx) \; d \bx.
 \end{align*}
 \end{linenomath*}
We would like the last quantity to be equal to zero, in which case we would obtain the desired bound. Let $\bW^*$ be the $T \times T$ matrix formed by stacking $\bO_\bS$ and $\bW_\bN$. The term $f_{\bS} (\bO_\bS \bx) \phi(\bW_\bN \bx) $ is a density if and only if $|\!\det(\bW^*)|=1$, which is not true in general because $\bO_\bS$ may not be orthogonal to $\bW_\bN$. Consequently, this quantity could integrate to greater than one, in which case we would have $\E \log f_\bS (\bO_\bS \bL \bX) \le \E \log f_\bS(\bW_\bS \bL \bX) + \alpha$ for some $\alpha>0$, and the bound is not tight enough.

Then define an orthogonal matrix in $\cO_{T \times T}$ such that rows 1 to $Q$ are equal to $\bO_\bS$ and the other rows are arbitrary. Then
\begin{linenomath*}
\begin{align*}
\E \log \frac{f_\bS(\bO_\bS \bL \bX)}{f_\bS(\bW_\bS \bL \bX)} &= \E \log \frac{f_\bS(\bO_\bS \bL \bX) \phi (\bO_\bN \bL \bX)}{f_\bS(\bW_\bS \bL \bX)\phi(\bO_\bN \bL \bX)} \\
&= \E \log \frac{f_\bS(\bO_\bS \bL \bX) \phi (\bO_\bN \bL \bX)}{f_\bS(\bW_\bS \bL \bX)\phi(\bW_\bN \bL \bX)},
\end{align*}
\end{linenomath*}
where the second line follows from Proposition \ref{prop:2}. Then applying Jensen's inequality, we have
\begin{linenomath*}
\begin{align*}
\E \log \frac{f_\bS(\bO_\bS \bL \bX) \phi (\bO_\bN \bL \bX)}{f_\bS(\bW_\bS \bL \bX)\phi(\bW_\bN \bL \bX)} & \le (\det \bL) \log \int \left( \frac{f_\bS(\bO_\bS \bL \bX) \phi (\bO_\bN \bL \bX)}{f_\bS(\bW_\bS \bL \bX)\phi(\bW_\bN \bL \bX)} \right) {f_\bS(\bW_\bS \bL \bX)\phi(\bW_\bN \bL \bX)} d\, \bx\\
  &= (\det \bL) \log \int f_{\bS} (\bO_\bS \bL \bx) \phi(\bO_\bN \bL \bx) d\, \bx \\
  &= 0,
\end{align*}
\end{linenomath*}
which holds with equality if and only if $f_\bS(\bO_\bS \bL \bx) \phi (\bO_\bN \bL \bx) = f_\bS(\bW_\bS \bL \bx)\phi(\bW_\bN \bL \bx)$, where the only if direction is a consequence of absolute continuity. Now suppose equality holds for the matrix $\bO_\bS^*$. Define $\bO_+ = [{\bO_{\bS}^*}^\top,{\bO_{\bN}^*}^\top]^\top$ such that $\bO_+ \in \cO_{T \times T}$. Let $\bY$ be a random variable with density $f_\bS(\bO_\bS^* \by) \phi (\bO_\bN^* \by) = f_\bS(\bW_\bS \by)\phi(\bW_\bN \by)$. Then there exist random variables $\bR_+$ and $\bR$ such that $\bY = \bO_+ \bR_+$ and $\bY = \bW \bR$. Applying Theorem 1, we have $\bO_\bS^* \cong \bW_\bS$. It follows that
\[
\E \log f_\bS (\bO_\bS \bL \bX) < \E \log f_\bS (\bW_\bS \bL \bX)
\]
for all $\bO_\bS \not \cong \bW_\bS$. 

To show condition (v) is satisfied, 
\begin{align*}
\cJ_n(\hatbW_\bS \hatbL (\bx_i - \bar{\bx}))  & \ge \cJ_n(\bW_\bS \hatbL( \bx_i - \bar{\bx})) &  \text{(by definition)}\\
&= \cJ_n(\bW_\bS \bL \bx_i ) - o_p(1). & \text{(Proposition \ref{prop:prewhiten})}
\end{align*}
In other words, our estimator $\hatbW_\bS$ with $\cJ_n$ defined using the sequence $\{\hatbL,\bar{\bx}\}$ is an approximate maximum of the exact maximum of the function $\cJ_n(\bO_\bS \bL \bx_i )$.

In this paragraph, we recount the first half of the proof of \cite{van2000asymptotic} 5.14. Let $\cW_\bS$ be the set of signed permutations of $\bW_\bS$. Fix some $\bO_\bS^\dagger \notin \cW_\bS$ with $\bO_\bS^\dagger \in \cO_{Q\times T}$, and let $U_\ell$ be a decreasing sequence of open balls around $\bO_\bS^\dagger$ with diameter converging to zero. Define the function: $m_{U_\ell}(\bx_i) = \sup_{\bO_\bS \in U_\ell} \log f_\bS(\bO_\bS \bL \bx_i)$. Then using (ii) we have $m_{U_\ell}(\bx_i) \downarrow \log f_\bS (\bO_\bS^\dagger \bL \bx_i)$ and from (iii) we can apply the monotone convergence theorem to obtain $\E m_{U_\ell}(\bx_i) \downarrow \E \log f_\bS (\bO_\bS^\dagger \bL \bx_i)$. From (iv), we have $\E \log f_\bS (\bO_\bS^\dagger \bL \bX) < \E \log f_\bS (\bW_\bS \bL \bX)$. Then with the previous argument, for any $\bO_k \in \cO_{Q \times T} \setminus \cW_\bS$, we can define a set $U_{\bO_k}$ such that $\E m_{U_{\bO_k}} (\bx_i) < \E \log f_\bS (\bW_\bS \bL \bX)$. Now let $\epsilon$ be given and consider the set $B = \{\bO_\bS \in \cO_{Q\times T}: \underset{\bW_\bS^* \in \cW_\bS}{\cap} ||\bO_\bS - \bW_\bS^*|| \ge \epsilon\}$, which is compact. This set is covered by the balls $U_{\bO_k}$. Then there exists a finite subcover $U_1,\dots,U_p$. 

Next, we detail the second half of the proof of \cite{van2000asymptotic} 5.14, where we incorporate Proposition \ref{prop:prewhiten2} to account for pre-whitening. In the argument that follows, note that if $\E m_{U_k}(X) = - \infty$ for some $k$, then we can discard the set $U_k$, and since we have $\E m_{U_j}(X) < \infty$ from (iii), we have $\E |m_{U_j}(X)|<\infty$ for all remaining sets, and $\frac{1}{n} \sum_{i=1}^n m_{U_j} (\bx_i) \as \E m_{U_j}$ from the law of large numbers. 
%
\begin{align}
 \sup_{\bO_\bS \in B} \cJ_n (\bO_\bS \hatbL (\bx_i - \bar{\bx})) &\le \; \sup_{\bO_\bS \in B} \cJ_n (\bO_\bS \bL \bx_i ) + o_p(1) & \text{(from Proposition } \ref{prop:prewhiten2}\text{)} \label{eq:a}\\
 &\le \sup_{j=1,\dots,p} \sup_{\bO_\bS \in U_j} \cJ_n (\bO_\bS \bL \bx_i) + o_p(1) & \nonumber \\
 &\le \;\;\sup_{j=1,\dots,p} \frac{1}{n} \sum_{i=1}^n m_{U_j} (\bx_i) + o_p(1) & \nonumber \\
 &\rightarrow \sup_{j=1,\dots,p} \E m_{U_j} (\bX) & \text{(law of large numbers)} \nonumber \\
 &< \;\;\;\E \log f_\bS (\bW_\bS \bL \bX).& \label{eq:b}
\end{align}
Now if $\hatbW_\bS \in B$, then we have 
\begin{align*}
\sup_{\bO_\bS \in B} \cJ_n (\bO_\bS \hatbL (\bx_i - \barbx)) &\ge \cJ_n(\bW_\bS \bL \bx_i) - o_p(1) & \text{(from condition (v.))} \\
&= \E \log f_\bS (\bW_\bS \bL \bX) - o_p(1), & \text{(from LLN)}
\end{align*}
which would imply the following relationship between events:  	
\begin{align}
 \left\{ \hatbW_\bS \in B \right\} \subset \left\{ \sup_{\bO_\bS \in B} \cJ_n ( \bO_\bS \hatbL (\bx_i - \barbx)) \ge \E \log f_\bS (\bW_\bS \bL \bX) - o_p(1) \right\}.\label{eq:c}
\end{align}
In view of \eqref{eq:a} and \eqref{eq:b}, the probability of the event on the right-hand side of \eqref{eq:c} converges to zero as $n \rightarrow \infty$. Note the $o_p(1)$ inequalities hold almost surely from Propositions \ref{prop:prewhiten} and \ref{prop:prewhiten2}. Then 
\[
P \left( \lim_{n \to \infty} \underset{\bW_\bS^* \in \cW_\bS}{\bigcap} \left\{||\hatbW_\bS - \bW_\bS^*|| \ge \epsilon \right\} \right) \rightarrow 0.
\]
\end{proof}

Next we describe conditions for consistency when the density used in the objective function may not be equal to the density of the LCs. We first present a result that is contained in the proof of Theorem 1 in \cite{Hyvaerinen1998}, where here the nonlinearity is equal to the log of the density used in the objective function. 

Recall that $r_q(\cdot)$ denotes the score function of $\log f_q (\cdot)$ and  $r_q'(\cdot)$ denotes the derivative of the score function. Additionally, define $\bZ = [\bS^\top,\bN^\top]^\top$.
\begin{lemma}\label{lemma:2}
Let $\be_1 = [1,0,\dots,0]^\top$ and let $\bepsilon$ be given such that $||\be_1 + \bepsilon|| = 1$. Then
\begin{linenomath*}
 \begin{align*}
  \E \log p_1\left[ (\be_1 + \bepsilon)^\top \bZ \right] = \E \log p_1(S_1) + \frac{1}{2}\left[\E r_1'(S_1) - \E S_1 r_1(S_1) \right] \sum_{q=2}^T \epsilon_q^2 + o(||\bepsilon||^2).
 \end{align*}
 \end{linenomath*}
\end{lemma}
\begin{proof}
 Calculating the gradient with respect to $\bo$,
 \[
  \nabla \E \log p_1(\bo^\top \bZ) = \E \bZ r_1(\bo^\top \bZ),
 \]
where we have applied Assumption 5(iv) to interchange differentiation and integration. Evaluating this at $\be_1$, and using the fact that $\E S_q = \E N_k = 0$, $q=1,\dots,Q$, $k=1,\dots,T-Q$, and the fact that $\bS_1$ is independent of $\bS_q$, $q>1$, and $\bN_k$,
\[
  \nabla \E \log p_1(\be_1^\top\bZ) = \be_1 \E S_1 r_1(S_1).
 \]
We also have
\[
  \nabla^2 \E \log p_1(\be_1^\top \bZ) = \diag \left[ \E S_1^2 r_1'(S_1), \E r_1'(S_1),\dots, \E r_1'(S_1) \right]
\]
where as before we have interchanged integration and differentiation using Assumption 5(iv) and applied independence and the fact that $\E S_q^2 = \E N_k^2 = 1$.

Now for some small $\bepsilon$ with $||\be_1 + \bepsilon|| = 1$, we have
\begin{linenomath*}
\begin{align*}
 &\E \log p_1[(\be_1 + \bepsilon)^\top \bZ] = \\
 &\E \log p_1(S_1) + \bepsilon^\top\be_1 \E S_1 r_1(S_1) + \frac{1}{2} \bepsilon^\top  \diag \left[ \E S_1^2 r_1'(S_1), \E r_1'(S_1),\dots, \E r_1'(S_1) \right] \bepsilon + o(||\bepsilon||^2)=\\
& \E \log p_1(S_1) + \epsilon_1 \E S_1 r_1(S_1) + \frac{1}{2} \epsilon_1^2 \E S_1^2 r_1'(S_1) + \frac{1}{2}\E r_1'(S_1)\sum_{q>1} \epsilon_q^2 + o(||\bepsilon||^2).
\end{align*}
\end{linenomath*}
Note that $\epsilon_1 = \sqrt{1 - \sum_{q>1} \epsilon_q^2} - 1$. Now we consider the first-order Taylor series expansion of $\sqrt{1 - \gamma}$ about 0 which is $1 - \gamma/2 + o(||\gamma||)$, so $\epsilon_1 = - \frac{1}{2} \sum_{q>1} \epsilon_q^2 + o(\sum_{q>1} \epsilon_q^2)$. By Assumption 5(ii),  $|\E S_1^2 r_1'(S_1)|< \infty$. Then we can write
\begin{linenomath*}
\begin{align*}
 \E \log p_1\left[ (\be_1 + \bepsilon)' \bZ\right]  &= \E \log p_1(S_1) + \frac{1}{2}\left[ \E r_1'(S_1) - \E S_1 r_1(S_1) \right] \sum_{q>1} \epsilon_q^2 + o(||\bepsilon||^2).
\end{align*}
\end{linenomath*}
\end{proof}

\begin{prop*}
 (Proposition \ref{prop:neighbor} in the main manuscript.) Suppose Assumptions 1-3 and 5. There exists $\cN_{\epsilon^*}(\bW_\bS)$ such that $\E \! \log p(\bO_\bS \bL \bX)$ constrained to $\bO_\bS \in \cN_{\epsilon^*}(\bW_\bS)$ is maximized at $\bW_\bS$. 
\end{prop*}
\begin{proof}
We consider a perturbation of $\bW_\bS$. Using the change of variables $\bZ = \bW\bL\bX = [\bS^\top, \bN^\top]^\top$, it suffices to consider the case where $\bw_q = \be_q$, where $\be_{qt} = 1$ for $q=t$ and 0 otherwise. For $q=1$, consider a perturbation $\bepsilon_1 \in \tR^T$ with $||\be_1 + \bepsilon_1|| = 1$. From Lemma \ref{lemma:2}, we have
 \begin{linenomath*}
 \begin{align*}
  \E \log p_1[ (\be_1 + \bepsilon_1)^\top \bZ] &= \E \log p_1(S_1) + \frac{1}{2} \E \left[ r_1'(S_1) - S_1 r_1(S_1) \right] \sum_{q>1}\epsilon_{1q}^2 + o(||\bepsilon_1||^2).
 \end{align*}
 \end{linenomath*}
By Assumption 5(i), which states $\E r_q'(S_q) - \E S_q \, r_q (S_q) < 0$, and for sufficiently small $\bepsilon_1$, we have
\[
\frac{1}{2} \E \left[ r_1'(S_1) - S_1 r_1(S_1) \right] \sum_{q>1}\epsilon_{1q}^2 + o(||\bepsilon_1||^2)< 0,
\]
which makes $\be_1$ a local maximum for $\E \log p_1(\bo^\top \bZ)$. Since this also true for $\E \log p_q(\bo^\top \bZ)$, $q=2,\dots,Q$, we have that $\bI_{Q \times T}$ (the $Q \times Q$ identity matrix padded with zeros) is a local maximum on the set $\mathcal{G}_{Q \times T} = \left\{ \bG \in \tR^{Q \times T}: \diag\;{\bG \bG^\top} = \bone_Q\right\}$. Since $\cO_{Q \times T} \subset \mathcal{G}_{Q \times T}$ and $\bI_{Q\times T} \in \cO_{Q\times T}$, $\bI_{Q \times T}$ is also a local maximum on $\cO_{Q\times T}$. (For a similar argument in ICA, see \citealt{wei2015convergence}). Then for the perturbations $\bepsilon_1,\dots,\bepsilon_Q$, it suffices to let $\epsilon^* = \min_{q=1}^Q \min_{t=1}^T \epsilon_{qt}$, and define $\cN_{\epsilon^*}(\bW_\bS)$.  
\end{proof}

\begin{thm}
Suppose $\bX$ follows the LNGCA model in \eqref{eq:LNGCAmodel} with Assumptions 1-5. 
Given an iid sample $\{\bx_i\}$, $\hatbW_\bS^{Local} \as \bW_\bS$ on the equivalence class of signed permutations.
\end{thm}
\begin{proof}
 We restrict the parameter space to $\cN_{\epsilon^*}(\bW_\bS)$. Wald's method for consistency of the MLE can be applied to the more general setting in which the wrong likelihood is used if the supremum of the population objective function corresponds to the set of true parameters (condition (iv) in Theorem \ref{thm:2}), which was proven in Proposition \ref{prop:neighbor} for the restricted parameter space $\cN_{\epsilon^*}(\bW_\bS)$. The other conditions are satisfied using the previous arguments in the proof of Theorem \ref{thm:2}. 
\end{proof}

\subsection{Proofs for Section \ref{sec:Spline-LCA}}

Next we show that the solution to the Spline-LCA objective function corresponds to a mean-zero density.
\begin{prop*} 
(Proposition \ref{prop:density} in the main manuscript.) Let $G$ be the class of all cubic splines $g: \tR \rightarrow \tR$. Consider the argmax of (11) of the main manuscript for $g_q \in G$ with $g_q$ denoting the tilt function for the $q$th component. Then (i) $\int \phi(u)  e^{g_q(u)}\,du = 1$ and (ii) $\int u \phi(u)  e^{g_q(u)}\,du = 0$ for each $q$.
\end{prop*}

\begin{proof}
It suffices to consider the case $Q^*=1$. Let $\bo_1$ be given. Let $G$ be the set of cubic splines and note that for any $g \subset G$, we can write $g(u) = \theta_0 + \theta_1 u + j(u)$ with $\theta_0 \in \tR$, $\theta_1 \in \tR$, and $j(u)$ does not depend on $\theta_0$ or $\theta_1$. Noting that $\partial (\int \phi(u)e^{g(u)} du)/\partial \theta_0 = \partial(e^{\theta_0} \int \phi(u) e^{\theta_1 u + j(u)}du)/\partial \theta_0 = \int \phi(u)e^{g(u)} du$, we have
\begin{linenomath*}
\begin{align*}
\frac{ \partial \ell_{pen}}{\partial \theta_0} &= 1 - \int \phi(u) e^{g(u)}\,du,
\end{align*}
\end{linenomath*}
from which it follows that at the optimum $g^*$, $\phi(u) e^{g^*(u)}$ is a density. Next, note that $\partial (\phi(u) e^{\theta_0+\theta_1 u + j(u)} / \partial \theta_1 = u \phi(u) e^{g(u)}$. Then,
\begin{linenomath*}
\begin{align*}
\frac{ \partial \ell_{pen}}{\partial \theta_1} &=  \frac{1}{n}\sum_{i=1}^n \bo_1^\top \hatbL (\bx_i - \barbx) - \int u \phi(u) e^{g(u)}\,du,
\end{align*}
\end{linenomath*}
where we have assumed $\int |u| \phi(u) e^{g(u)} du < \infty$ to interchange integration and differentiation. Then it follows that $\E U = 0$ for $U$ with density $\phi(u)e^{g^*(u)}$.
\end{proof}

\section{Additional Asymptotics for Section \ref{sec:ParametricLCA}}\label{WS:additionalasymptotics}

In this section, we examine $\sqrt{n}$-consistency, asymptotic normality, and the asymptotic variances of the parametric LCA estimators. 

Recall that $r_q(\cdot)$  is the score function of $\log f_q (\cdot)$ and  $r_q'(\cdot)$ is the derivative of the score function. Define the following quantities:
 \begin{align*}
  \beta_q &= \E S_q^4 \\
   \eta_q &= \E r(S_q) \\
   \xi_q &= \E r(S_q)^2 - \eta_q^2 \\
   \lambda_q &= \E r(S_q)S_q \\
  \delta_q &= \E r'(S_q) \\
 \end{align*}
Also define the empirical expectation: $\En f(x_i) = \frac{1}{n}\sum_{i=1}^n f(x_i)$. Recall that $\be_q \in \tR^T$ such that $\be_{qq'} = 0$ for $q' \ne q$ and 1 for $q' = q$. 

We apply the approach used in \cite{virta2016projection} to derive asymptotic variances based on rewriting the objective function using Lagrange multipliers. \cite{virta2016projection} find non-Gaussian components using a modified version of symmetric fastICA but with the measure of non-Gaussianity equal to a convex combination of squared skewness and kurtosis. We adapt their approach to log likelihoods. For an arbitrary consistent estimator of the LNGCA model, $\hatbW_\bS$, define $\hatbB_\bS = \hatbW_\bS \hatbL$. Let $\bB_\bS$ be the first $Q$ rows of $\bM^{-1}$. Consistency of $\hatbB_\bS$ follows from Slutsky's theorem. Throughout the remainder of this section, we focus on $\hatbB_\bS$ rather than $\hatbW_\bS$. 

First, consider:
  \begin{align}
  \cL(\bC_\bS,\bTheta) &=  \sum_{q=1}^{Q} \En \left\{ \log p_q (\bc_q^\top (\bx_i - \barbx)) \right\} - \sum_{q=1}^Q \frac{\theta_{qq}}{2} (\bc_q^\top \hatbSigma \bc_q - 1) - \sum_{q=1}^{Q-1} \sum_{q'=q+1}^Q \theta_{q q'} \bc_q^\top \hatbSigma \bc_{q'}. \label{eq:lagrange}
 \end{align}
Consider the substitution $\bo_q^\top \hatbL = \bc_q$. Then we rewrite \eqref{eq:lagrange}:
  \begin{align}
  \cL(\bO_\bS,\bTheta) &=  \sum_{q=1}^{Q} \En \left\{ \log p_q (\bo_q^\top \hatbL (\bx_i - \barbx)) \right\} - \sum_{q=1}^Q \frac{\theta_{qq}}{2} (\bo_q^\top \bo_q - 1) - \sum_{q=1}^{Q-1} \sum_{q'=q+1}^Q \theta_{q q'} \bo_q^\top \bo_{q'} \label{eq:lagrangewhitening}. 
 \end{align}
Then the partial derivatives of \eqref{eq:lagrange} at $\hatbB_\bS$ equal zero.

In the special case where $\bM = \bI$, let $\hatbe_q$ be the estimate of the $q$th row of the true unmixing matrix $\bI_{Q \times T}$.

Next we define the conditions for $\sqrt{n}$-consistency and asymptotic normality.
\begin{assumption}\label{assumption:rootn}
For all $q$, the following expectations are finite: (i) $\E S_q^4$; (ii) $\E r_q^2(S_q)$; (iii) $\E r_q' (S_q)$; (iv) $\E r_q(S_q) S_q$; and (v) $ \E r_q'(S_q) S_q$. 
 \end{assumption}
 
\begin{lemma}
 Suppose $\E \bX = 0$, $\bM = \bI$, and Assumptions 1-6. Consider a consistent estimator, $\widehat{\bE}_\bS$, of the first $Q$ rows of $\bM^{-1}$ with the rows permuted and signs specified such that $\hatbe_q \rightarrow \be_q$. Let $\hatbe_{q q'}$ be the $q'$th element of $\hatbe_q$. Then
 \begin{align}
  \sqrt{n} (\hatbe_{q q'}) &= \sqrt{n} \frac{\En \left\{ (r_q(s_{iq}) - \eta_q) s_{iq'} - (r_{q'}(s_{iq'}) - \eta_{q'}) s_{iq} - (\delta_{q'} - \lambda_q) s_{iq}s_{ir}\right\}}{\delta_q - \lambda_q + \delta_{q'} - \lambda_{q'}} + o_p(1),\; q,q' \le Q \label{eq:lemmaeq1} \\
    \sqrt{n} (\hatbe_{qq} -1) &= -\sqrt{n} \frac{1}{2} \En (s_{iq}^2 - 1) + o_p(1),\; q \le Q \label{eq:lemmaeq2} \\
  \sqrt{n} (\hatbe_{q r}) &= \sqrt{n} \; \frac{ \En \left[ \left\{r_q(s_{iq}) - \eta_q \right\} n_{i,r-Q} - \lambda_q s_{iq}n_{i,r-Q}\right]}{\lambda_q - \delta_q}+ o_p(1),\; q \le Q, Q < r < T. \label{eq:lemmaeq3}
 \end{align}
\end{lemma}

\begin{proof}
At the estimates $\hatbe_q$, the Lagrangian in \eqref{eq:lagrange} enforces the constraints
 \begin{align}
 \hatbe_q^\top \hatbSigma \hatbe_{q'} &= 0, q \ne q' \label{eq:ortho} \\
  \hatbe_q^\top \hatbSigma \hatbe_q &= 1.\label{eq:unit}
  \end{align}
  Now we differentiate the Lagrangian with respect to $\bc_q$ and set the result equal to zero, and replace $\bc_q$ with the estimates $\hatbe_q$, $q=1,\dots,Q$:
 \begin{align}
  \En r_q(\hatbe_q^\top(\bx_i - \barbx))(\bx_i - \barbx) &= \theta_{qq} \hatbSigma \hatbe_q + \sum_{q' \ne q} \theta_{qq'} \hatbSigma \hatbe_{q'}.\label{eq:difflagrange}
 \end{align}
Next, write \eqref{eq:difflagrange} as
 \begin{align}
  \En r_q(\hatbe_q^\top(\bx_i - \barbx))(\bx_i - \barbx) &= \hatbSigma \sum_{q'=1}^Q \hatbe_{q'} \theta_{qq'}.\label{eq:int4}
 \end{align}
 Multiplying \eqref{eq:difflagrange} by $\hatbe_{q'}$ and applying \eqref{eq:ortho} and \eqref{eq:unit}, we get
 \begin{align*}
  \hatbe_{q'}^\top \En r_q(\hatbe_q^\top(\bx_i - \barbx))(\bx_i - \barbx) &= \theta_{qq'}.
 \end{align*}
Then substituting this expression into \eqref{eq:int4}, we write
 \begin{align}
  \En r_q(\hatbe_q^\top (\bx_i - \barbx)) (\bx_i - \barbx) &= \hatbSigma \left( \sum_{q'=1}^Q \hatbe_{q'} \hatbe_{q'}^\top \right) \left[ \En \left\{ r_q (\hatbe_q^\top (\bx_i - \barbx)) (\bx_i - \barbx) \right\} \right].\label{eq:eedef}
 \end{align}
This is the same estimating equation that appears in deflationary fastICA when $q=Q$,  see equation (4) in \cite{nordhausen2011deflation}, but here it applies to all $q\le Q$, and we replace the non-linearities with the log likelihoods. 
Then we apply Theorem 1 from \cite{nordhausen2011deflation}; see similar theorems in \cite{miettinen2015squared,miettinen2015fourth} and \cite{virta2016projection}, which requires Assumption \ref{assumption:rootn}: 	
\begin{align}
 \sqrt{n}\hatbe_{qq'} &= -\sqrt{n} \hatbe_{q'q} - \sqrt{n} \En (x_{iq}-\bar{x}_q)(x_{iq'}-\bar{x}_{q'}) + o_p(1),\; q  \ne q', q,q' \le Q \label{eq:int1}\\
 \sqrt{n} (\hatbe_{qq} - 1) &= -\frac{1}{2} \sqrt{n} \left\{ \En (x_{iq} - \bar{x}_q)^2 - 1 \right\} + o_p(1), q \le Q \label{eq:int2} \\
 \sqrt{n} \hatbe_{qr} &= \sqrt{n}\frac{1}{\lambda_q - \delta_q}\left[\be_r^\top \En \left\{ r_q(\be_q^\top \bx_i) - \eta_q \right\}\bx_i - \lambda_q \En (x_{iq} - \bar{x}_q)(x_{ir}-\bar{x}_r)\right] + o_p(1).  \label{eq:int3}
\end{align}
Next note that 
\begin{align*}
 \sqrt{n} \left[\En ( x_{iq} - \bar{x}_q)^2 \right] &= \sqrt{n} \En x_{iq}^2 + o_p(1),
\end{align*}
since $\sqrt{n} \bar{x}_q^2 = o_p(1)$. 
Similarly, $\sqrt{n}\bar{x}_q\bar{x}_r = o_p(1)$. 
Then applying $x_{iq} = s_{iq}$ and $x_{ir} = n_{i,r-Q}$, we obtain \eqref{eq:lemmaeq2} and \eqref{eq:lemmaeq3}.

To obtain \eqref{eq:lemmaeq1}, we derive a second expression for $\theta_{qq'}$ by performing the differentiation with respect to $\bc_{q'}$ and multplying by $\hatbe_q$:
 \begin{align}
  \En r_q(\hatbe_{q'}^\top(\bx_i - \barbx))\hatbe_{q}^\top(\bx_i - \barbx) &= \theta_{qq'}\label{eq:inter}.
 \end{align}
 This gives us the estimating equations:
 \begin{align}
  \En \left[ r_{q} \left\{ \hatbe_q^\top (\bx_i - \barbx)\right\} \hatbe_{q'}^\top (\bx_i - \barbx)\right] &=   \En \left[ r_{q'} \left\{ \hatbe_{q'}^\top (\bx_i - \barbx)\right\} \hatbe_{q}^\top (\bx_i - \barbx)\right],\;q,q' \le Q. \label{eq:ee1}
 \end{align}
The estimating equation in \eqref{eq:ee1} is also found in symmetric fastICA \citep{miettinen2015fourth,miettinen2015squared,wei2015convergence} but here restricted to $q,q' \le Q$, and we replace the nonlinearities by the log likelihoods.  Then \eqref{eq:lemmaeq1} is a special case of the symmetric case in Theorem 1 in \cite{miettinen2015squared} with additional details in the proof of Theorem 6 in \cite{miettinen2015fourth}, where here we revise the sign modification, $\pi_j$, in their theorem, to be equal to $-1$ when we use the log likelihood in lieu of their objective function. Again, we use the fact that the terms arising from centering converge at a faster rate and thus vanish from the asymptotic variances. Then we restate the symmetric case from Theorem 1 in \cite{miettinen2015squared} in terms of the iid non-Gaussian components. 
\end{proof}

\begin{thm}\label{thm:4}
Suppose Assumptions 1-6 and additionally let $\bM = \bI$ and $\E \bX = \bzero$. Consider a consistent estimator, $\widehat{\bE}_\bS$, of the first $Q$ rows of $\bM^{-1}$ with the rows permuted and signs specified such that $\hatbe_q \rightarrow \be_q$. Then for $q \le Q$, $\sqrt{n} (\hatbe_q - \be_q) \Rightarrow \cN(0,\bR_q)$ with
 \begin{align}
    \bR_q &= \frac{\beta_q-1}{4} \be_q \be_q^\top \label{eq:Rqwhite} \\
    \nonumber &\;\;+ \sum_{q' \ne q}^Q \frac{\xi_q + \xi_{q'} +  \delta_{q'}^2 - \lambda_q^2 - 2 \delta_{q'} \lambda_{q'}}{\left(\delta_q  - \lambda_q + \delta_{q'} - \lambda_{q'} \right)^2}\be_{q'} \be_{q'}^\top \\
  \nonumber  &\;\; + \frac{\xi_q - \lambda_q^2}{\left(\lambda_q - \delta_q \right)^2}\left(\bI - \sum_{q'=1}^Q \be_{q'} \be_{q'}^\top \right).
 \end{align}
\end{thm}
\begin{proof}
Asyptotic normality follows from the central limit theorem for the iid observations on the right-hand side of equations \eqref{eq:lemmaeq1}-\eqref{eq:lemmaeq3} together with Slutsky's theorem. The variances can be calculated directly from the previous lemma and correspond to the variances of symmetric fastICA for $q \le Q$ and the variances of deflationary fastICA for $r>Q$.
\end{proof}
Note that the asymptotic variances for symmetric fastICA are also derived in Theorem 8 in \cite{wei2015convergence} using a modified M-estimator approach. They are equivalent to \cite{miettinen2015squared} except the sign modification is replaced by the sign of the term $\E r'_q(S_q) - \E S_q r_q(S_q)$. In LCA, this is always equal to negative one due to Assumption 5(i), and then \cite{wei2015convergence} Theorem 8 is equivalent to the result presented here for $q,q' \le Q$ and $\bM = \bI$.  

It is straightforward to extend this result to arbitrary mixing matrices when the estimators are affine equivariant, and this property is used in the estimators considered in \cite{virta2016projection} and related works by \cite{nordhausen2011deflation} and \cite{miettinen2015fourth}.  Let $F_\bX$ be the cumulative distribution of $\bX$, and let $\cB(F_\bX) \in \tR^{Q \times T}$ be a functional. As defined in \cite{nordhausen2011deflation},
\begin{definition}
A functional $\cB(F_\bX)$ is affine equivariant if 
\[
\cB(F_{\bA \bX}) = \cB(F_\bX) \bA^{-1}.
\]
\end{definition}

\cite{wei2015convergence} proves that an estimator is affine equivariant if and only if it does not depend on initialization, and thus our estimators are not in general affine equivariant. In practice, we satisfy this requirement by initializing from a sufficiently large number of random orthogonal matrices, such that if we were to estimate the unmixing matrix with another set of random initial values, we would obtain the same estimate with high probability. Alternatively, one can use the two-stage estimator, $\hatbW_\bS^{LV}$, since the estimator from \cite{virta2016projection} is affine equivariant. 

For the following theorem, we additionally assume the estimator is globally consistent, for example, under finite eighth moment assumptions with $\hatbW_\bS^{LV}$, which simplifies the exposition by avoiding the dependency between the optimization space and the choice of mixing matrix. 
\begin{cor}\label{cor:av}
Suppose Assumptions 1-6. Let $\hatbB_\bS$ be a globally consistent and affine equivariant estimator of the LCA model for any full rank $\bM \in \tR^{T \times T}$ with $\bM^{-1} = \bB$, and let $\hatbB_\bS$ have rows permuted and signs chosen such that $\hatbB_\bS \rightarrow \bB_\bS$. Then for $q \le Q$, $\sqrt{n} (\hatbmb_q - \bmb_q) \Rightarrow \cN(0,\bR_q)$ with
 \begin{align}
    \bR_q &= \frac{\beta_q-1}{4} \bmb_q \bmb_q^\top  + \sum_{q' \ne q}^Q \frac{\xi_q + \xi_{q'} +  \delta_{q'}^2 - \lambda_q^2 - 2 \delta_{q'} \lambda_{q'}}{\left(\delta_q  - \lambda_q + \delta_{q'} - \lambda_{q'} \right)^2}\bmb_{q'} \bmb_{q'}^\top \label{eq:Rq} \\
  \nonumber  &\;\; + \frac{\xi_q - \lambda_q^2}{\left(\lambda_q - \delta_q \right)^2}\left(\bSigma^{-1} - \sum_{q'=1}^Q \bmb_{q'} \bmb_{q'}^\top \right).
 \end{align}
\end{cor}

\begin{proof}
Consider the trivial model: $\bz_i = \bI \bz_i$ and let $\hat{\bI}_\bS = \underset{\bO_\bS \in \cO_{Q \times T}}{\margmax} \;\; \cJ_n(\bO_\bS;\{\bz_i\})$. Define $\hatbW_\bS = \underset{\bO_\bS \in \cO_{Q \times T}}{\margmax} \;\; \cJ_n(\bO_\bS;\{\hatbL(\bx_i-\barbx)\})$. Then
\begin{align*}
 \hatbB_\bS &= \hatbW_\bS \hatbL \\
 &= \left[ \underset{\bO_\bS \in \cO_{Q \times T}}{\margmax} \;\;\; \cJ_n(\bO_\bS;\hatbL\{\bx_i - \barbx\}) \right] \hatbL \\
 &= \left[ \underset{\bO_\bS \in \cO_{Q \times T}}{\margmax} \;\;\; \cJ_n(\bO_\bS;\hatbL\bM \bz_i) \right] \hatbL \\
 &= \left[ \underset{\bO_\bS \in \cO_{Q \times T}}{\margmax} \;\;\; \cJ_n(\bO_\bS;\bz_i) \right] \bB \hatbL^{-1} \hatbL.
\end{align*}
Then $\hatbB_\bS$ is a linear transformation of the estimator in Theorem \ref{thm:4} and $\sqrt{n}$-consistency and asymptotic normality follow.

The asymptotic variance is a linear transformation of the asymptotic variance of the previous theorem. Define the $QT \times QT$ covariance matrix: $\Var \{ {\rm{vec}}(\hatbW_\bS) \} = \bR$. Using the fact $\rm{vec}(\bA \bC \bB) = (\bB^\top \otimes \bA) \rm{vec} (\bC)$, we have
\[
\Var \{ {\rm{vec}}(\bI_Q \hatbW_\bS \bB) \} = (\bB^\top \otimes \bI_Q) \bR (\bB \otimes \bI_Q)
\]
Now let $\bB_\bN$ be the rows of the full unmixing matrix corresponding to the Gaussian components. Restricting our attention to the block of this matrix corresponding to the covariance matrix for $\bmb_q$, then applying simplifications and the property that $\bB_{\bN}^\top \bB_{\bN} = \bSigma^{-1} - \bB_\bS^\top \bB_\bS$, we obtain \eqref{eq:Rq}.
\end{proof}

\cite{wei2015convergence} develop similar asymptotics for estimators using the theory of M-estimation without requiring affine equivariance; however, his approach does not readily extend to LNGCA and LCA. In particular, the identifiability issues created by the Gaussian components precludes the direct application to LNGCA. For $T=Q$, $\sum_{q'=1}^Q \bmb_{q'} \bmb_{q'}^\top = \bSigma^{-1}$, and Corollary \ref{cor:av} is equivalent to Theorem 8 in \cite{wei2015convergence} for the special case specified by our Assumption 5(i).

We validated the asymptotic approximation of the distribution of the Logis-LCA estimator on a finite sample through simulations. Here we present the results from a single random choice of $\bM$ with 10,000 simulations, $n=10,\!000$, $Q=2$, and $T=4$ in which the true densities were exponential and logistic. In Figure \ref{fig:Empirical_vs_Theoretical}, we can see that the histograms are in general agreement with the theoretical results.

\begin{figure}
 \includegraphics[width=\textwidth]{./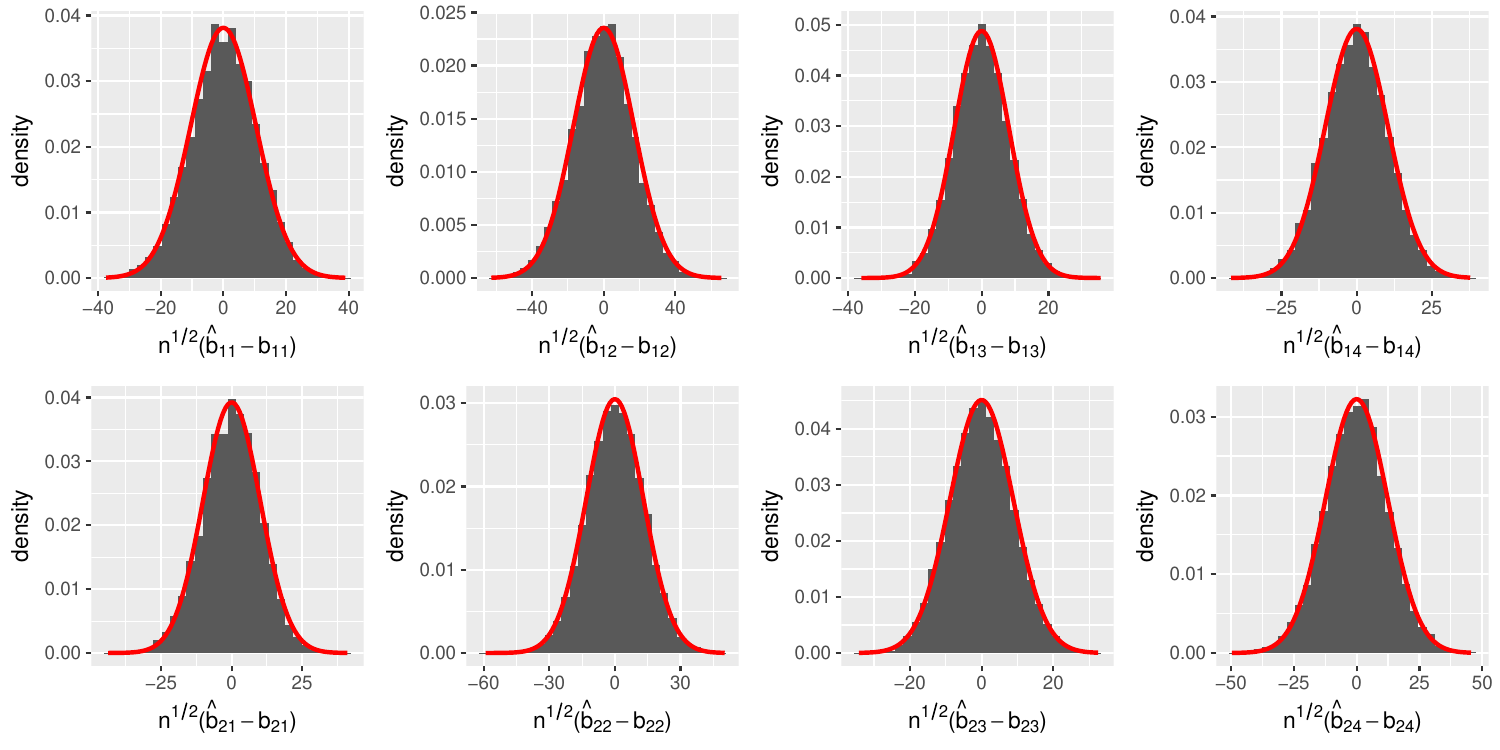}
 \caption{Theoretical densities versus histograms of $\sqrt{n}(\hatbmb^{Logis}_{qt} - \; \bmb_{qt})$ where $\hatbB_\bS^{Logis} = \hatbW_\bS^{Logis} \hatbL$ from 10,000 simulations with $n=$10,000, $Q=2$ with exponential and logistic densities, $T=4$,  and the true $\bB$ is fixed at a randomly generated matrix.}\label{fig:Empirical_vs_Theoretical}
\end{figure}

\section{Additional Background}\label{WS:addbackground}

\subsection{Projection Pursuit, D-FastICA, and Non-Gaussian Component Analysis}\label{WS:ppmethods}
Projection pursuit is an exploratory method for finding low-dimensional representations of multivariate data that reveal interesting patterns and structure \citep{huber1985projection}. Let $\{\bx_{\textrm{st},\; i}\}$, $i=1,\dots,n$ be the standardized data sample with $\bx_i \in \tR^T$, $\sum_{i=1}^n \bx_{\textrm{st},\; i} = \bzero$, where $\bzero$ is the vector of $T$ zeros,  and $\frac{1}{n}\sum_{i=1}^n \bx_{\textrm{st},\; i}^2 = \bone$, where $\bone$ is a length $T$ vector of ones. Let $Q$ be the number of projection pursuit directions that are estimated. In FastICA in deflation mode (D-FastICA), the projection pursuit index is equivalent to an approximation of negentropy \citep{hyvarinen1999fast}:
\begin{linenomath*}
\begin{align}\label{eq:D-FastICA}
 \bw_q = \underset{\bw \in \tR^T}{\margmax} \left\{ \frac{1}{n}\sum_{i=1}^n R(\bw^\top \bx_{\textrm{st},\; i}) - \int R(n) \phi(n) \, d n \right\}^2,
\end{align}
\end{linenomath*}
where $\bw$ is orthogonal to $\hatbw_1,\dots,\hatbw_{q-1}$ and $||\bw|| = 1$ with $||\cdot||$ denoting the L2-norm, $R$ is a non-linear function (in likelihood-based ICA, $R = \log f(x)$), and $\phi(n)$ is the standard normal density. A common choice for $R$ is $\log \cosh(x)$, which is used to estimate projection pursuit directions in our simulations.

NGCA uses multiple projection pursuit indices \citep{blanchard2006search} or radial basis functions \citep{kawanabe2007new} to find a non-Gaussian subspace that is assumed to contain the interesting features of the data. NGCA can be formulated using a semiparametric likelihood,
\begin{linenomath*}
\begin{align}\label{eq:NGCA}
 f_\bX(\bx) = h^*(\bB_\bS \bx) \phi_{\bzero,\bSigma}(\bx)
\end{align}
\end{linenomath*}
where $\phi_{\bzero,\bSigma}$ is multivariate normal with mean $\bzero$ and covariance $\bSigma$; $\bB_\bS$ is a ${Q \times T}$ matrix; and $h^*(\cdot)$ is a function that captures departures from Gaussianity under the constraint that $f_\bX(\bx)$ is a density. NGCA does not assume linear mixing of independent factors, and consequently the factors are not identifiable. Thus we do not consider it in our simulations.

The density in the Spline-LCA model can be considered an extension of \eqref{eq:NGCA} with the additional assumption of independence.

\begin{prop}\label{prop:NGCA}
Let $\bX$ be a random variable from the LCA model where the LCs have tilted Gaussian densities. Then the density of $\bX$ is
\[
 f_\bX (\bx) = \phi_{\bzero,\bSigma}(\bx) \prod_{q=1}^Q e^{g_q(\bw_q^\top \bL \bx)}
\]
where $\phi_{\bzero,\bSigma}$ is the mean zero multivariate distribution with covariance $\bSigma = \bL^{-2}$.
\end{prop}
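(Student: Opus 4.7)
The plan is to substitute the tilted Gaussian form $f_q(x) = \phi(x) e^{g_q(x)}$ directly into the joint density expression given in equation~\eqref{eq:JointDensity}, and then repackage the resulting product of univariate standard normal densities into a single multivariate normal density. Concretely, after substitution we get
\[
 f_{\bX}(\bx) = \det(\bL)\,\Bigl\{\prod_{q=1}^{Q} \phi(\bw_q' \bL \bx)\Bigr\}\Bigl\{\prod_{k=1}^{T-Q}\phi(\bw_{Q+k}' \bL \bx)\Bigr\}\prod_{q=1}^{Q} e^{g_q(\bw_q' \bL \bx)},
\]
so the tilt factors separate out cleanly and the problem reduces to identifying the Gaussian prefactor with $\phi_{\bzero,\bSigma}(\bx)$.

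Next I would handle the Gaussian prefactor by combining all $T$ univariate densities:
\[
 \prod_{t=1}^{T}\phi(\bw_t' \bL \bx) = (2\pi)^{-T/2} \exp\!\Bigl(-\tfrac{1}{2}\sum_{t=1}^{T} (\bw_t'\bL\bx)^2\Bigr) = (2\pi)^{-T/2}\exp\!\Bigl(-\tfrac{1}{2}\,\bx'\bL'\bW'\bW\bL\bx\Bigr).
\]
Because $\bW \in \cO_{T\times T}$ is orthogonal by construction (the definition $\bW = \bM^{-1}\bL^{-1}$ given just before equation~\eqref{eq:JointDensity} forces this via the whitening), we have $\bW'\bW = \bI_T$. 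Using the symmetric whitening $\bL = \bU\bLambda^{-1/2}\bU'$, one then obtains $\bL'\bL = \bL^2 = \bU\bLambda^{-1}\bU' = \bSigma^{-1}$ and $\det(\bL) = \det(\bLambda)^{-1/2} = \det(\bSigma)^{-1/2}$.

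Putting these two computations together gives
\[
 \det(\bL)\prod_{t=1}^{T}\phi(\bw_t'\bL\bx) = (2\pi)^{-T/2}\det(\bSigma)^{-1/2}\exp\!\bigl(-\tfrac{1}{2}\bx'\bSigma^{-1}\bx\bigr) = \phi_{\bzero,\bSigma}(\bx),
\]
and multiplying back by the separated tilt factors $\prod_{q=1}^{Q} e^{g_q(\bw_q'\bL\bx)}$ yields the stated expression. There is no serious obstacle: the result is essentially a bookkeeping consequence of equation~\eqref{eq:JointDensity} together with the orthogonality of $\bW$ and the symmetry of the whitening matrix $\bL$. The only point that requires a bit of care is keeping straight that $\phi$ denotes the univariate standard normal density while $\phi_{\bzero,\bSigma}$ denotes the $T$-variate normal density, so that the factor of $\det(\bL)$ coming from the change of variables exactly supplies the $\det(\bSigma)^{-1/2}$ normalizer needed for the multivariate density.
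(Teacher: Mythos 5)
Your proposal is correct and follows essentially the same route as the paper's own proof: substitute the tilted Gaussian form into the joint density \eqref{eq:JointDensity}, pull out the tilt factors, collapse the product of $T$ univariate standard normals into the multivariate normal kernel using $\bW'\bW = \bI_T$ and $\bL^2 = \bSigma^{-1}$, and match $\det(\bL)$ with $(\det\bSigma)^{-1/2}$. No gaps.
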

\begin{proof}
Using the tilted Gaussian density, we have
\begin{linenomath*}
 \begin{align*}
  f_\bX(\bx) &= \det \bL \prod_{q=1}^Q e^{g_q(\bw_q^\top \bL \bx)} \phi(\bw_q^\top \bL \bx)\prod_{k=1}^{T-Q} \phi(\bw_{Q+k}^\top \bL \bx) \\
  &= \left\{\prod_{q=1}^Q e^{g_q(\bw_q^\top \bL \bx)} \right\} (2\pi)^{-T/2} \left( \det \bL \right) \exp\left\{-\frac{1}{2} \sum_{k=1}^T \bx^\top \bL \bw_k \bw_k^\top \bL \bx\right\} \\
  &= (\det \bSigma)^{-1/2} (2\pi)^{-T/2} \exp\left\{-\frac{1}{2} \bx^\top \bSigma^{-1} \bx \right\} \prod_{q=1}^Q e^{g_q(\bw_q^\top \bL \bx)}.
 \end{align*}
 \end{linenomath*}
\end{proof}
Writing the likelihood in this way, one notes that we are using the Gaussian density to model the covariance between components and we are using the tilt functions to model deviations from the Gaussian model.

\subsection{Noisy ICA and IFA}\label{supp:secNoisyICA}
In the noisy ICA model, $Q$ ICs are mixed and then corrupted by rank-$T$ Gaussian noise, where $Q \le T$ \citep{hyvarinen2001independent},
\begin{linenomath*}
\begin{align}\label{eq:noisyICA}
 \bX = \bM_\bS \bS + \bE
\end{align}
\end{linenomath*}
with $\bX \in \tR^T$, $\bM_\bS$ is $T \times Q$ with $Q \le T$, $\bE$ is mean-zero multivariate normal with covariance matrix $\bPsi$, and $\bE$ is independent of $\bS$.

Assume that $\bPsi = \sigma^2 \bI$. Let $d_1,\dots,d_Q$ denote the eigenvalues from the covariance matrix of $\bM_\bS \bS$ and let $d_{\epsilon_1},\dots,d_{\epsilon_T}$ denote the eigenvalues from the decomposition of $\bE$. Under the assumption of isotropic noise, we have $d_{\epsilon_i}  = \sigma^2$ for all $i,j=1,\dots,T$. Then the eigenvalue decomposition can be written as
\begin{linenomath*}
\begin{align}
 \Cov \bX = \bU \; \diag(d_1+\sigma^2,\dots,d_Q+\sigma^2,\sigma^2,\dots,\sigma^2)\; \bU^\top.
\end{align}
\end{linenomath*}
Let $\bX_{\rm data}$ be the $n \times T$ data matrix. In PCA+ICA, noise-free ICA is applied to the first $Q$ left singular vectors of $\bX_{\rm data}$ multiplied by $\sqrt{n}$, which is equivalent to the first $Q$ standardized principal components.

In IFA, \eqref{eq:noisyICA} is estimated under the assumption that the densities of the ICs are Gaussian mixtures \citep{attias1999independent}. In its original formulation, $\bPsi$ was an arbitrary positive definite matrix, the IC densities had $K_q$ classes, and the variance of each IC was standardized to unity after each iteration. In our presentation and estimation, we assume that the covariance of the noise is $\sigma^2 \bI$ and IC densities are mixtures of two Gaussians, which has been assumed elsewhere (e.g., \citealt{guo2013hierarchical,beckmann2004probabilistic}), and enforce the constraint that the IC densities are mean zero with unit variance. Let $\pi_{q1}$ be the probability that an observation of the $q$th IC comes from the first class, where the first class has a normal distribution with mean $\mu_{q1}$ and variance $\rho_{q1}$. Then the probability, mean, and variance for the second class are $\pi_{q2} = 1 - \pi_{q1}$, $\mu_{q2} = -\frac{\pi_{q1} \mu_{q1}}{\pi_{q2}}$, and $\rho_{q2} = \frac{1 - \pi_{q1} \rho_{q1} - \pi_{q1} \mu_{q1}^2}{\pi_{q2}} - \mu_{q2}^2$, respectively. Then the joint density of $\bX$ can be written
\begin{linenomath*}
\begin{align}\label{eq:IFA}
f_\bX (\bx \mid \bM_\bS) = \prod_{t=1}^T  \int \phi_{0,\sigma^2}\!\left( x_t - {\bf m}_t^\top \bs \right) f_{\bS}(\bs)\, d\bs,
\end{align}
\end{linenomath*}
where $\phi_{0,\sigma^2}$ is a normal density with mean zero and variance $\sigma^2$ and
\[
 f_\bS(\bs) = \prod_{q=1}^Q \left\{ \pi_{q1} \phi_{\mu_{q1},\rho_{q1}}(s_q) + \pi_{q2} \phi_{\mu_{q2},\rho_{q2}}(s_q)\right\}.
\]
Analytic integration across $\bs$ is possible. Let $k_q$ equal one if $s_q$ is in the first class and zero otherwise. Let $\mathcal{K}$ be the set of all possible states for the $Q$ components composed from the Cartesian product $Q$-times of the singletons $\{ \{0\},\{1\} \}$. Let ${\bf k}_j = \{k_1,\dots,k_Q\}$ denote an element of $\mathcal{K}$, where $j \in \{1,\dots,2^Q\}$. Let ${\bm \mu}({ \bf k}_j)$ and ${\bm \rho}({ \bf k}_j)$ denote the conditional means of $\bs$ given the states ${\bf k}_j$.  Now define
\[
 \bSigma({\bf k}_j) = \bM_\bS \; \diag\{ {\bm \rho}({ \bf k}_j)\}\; \bM_\bS^\top + \sigma^2 \bI
\]
and
\[
 {\bm \mu}^*({\bf k}_j) = \bM_\bS {\bm \mu}({ \bf k}_j).
\]
Then the density is
\begin{linenomath*}
\begin{align}\label{eq:IFAv2}
 f_\bX(\bx \mid \bM_\bS) = \sum_{{\bf k}_j \in \mathcal{K}} \Phi\{ \bx \mid {\bm \mu}^*({\bf k}_j), \bSigma({\bf k}_j)\} \prod_{q=1}^Q \pi_{q1}^{k_q}\pi_{q2}^{1-k_q}
\end{align}
\end{linenomath*}
with $\Phi\{ \bx \mid {\bm \mu}^*({\bf k}_j), \bSigma({\bm k}_j)\}$ multivariate normal with mean ${\bm \mu}^*({\bf k}_j)$ and variance $ \bSigma({\bm k}_j)$ (see (16) and (17) in \citealt{attias1999independent}). Then a likelihood can be constructed from \eqref{eq:IFAv2}, and given some $\hatbM_\bS$, the ICs can be estimated from their conditional means. Alternatively, maximum a posteriori estimates of the ICs could be obtained, though we pursue the former here.

\section{Using the fixed-point algorithm to fit the LCA model}\label{WS:fixedpoint}
Here we describe the fixed-point algorithm from \cite{hyvarinen1999fast}. Our account is equivalent to \cite{hyvarinen1999fast} except we use our novel discrepancy measure ($PMSE$) and a different orthogonalization method. Under the constraint that the noise components follow a standard normal distribution, we can ignore rows $Q^*+1 : T$ in $\hatbW$. Recall $r_q(x)$ and $r_q'(x)$ are the first and second derivatives of $\log f_q(x)$. Algorithm 1 provides details on estimating $\hatbW_\bS$.
\begin{algorithm}\label{algorithm:fastICA}
\SetKwInOut{Input}{Inputs}
\SetKwInOut{Output}{Output}
\SetKwFor{For}{for}{}{endfor}
\SetKwFor{While}{while}{}{endwhile}
\caption{The fastICA algorithm (symmetric fixed point) for LCA.}
\Input{The whitened $n \times T$ data matrix $\bX_{\textrm{st}}$; initial $\bW_\bS^0$; tolerance $\epsilon$.}
\KwResult{Estimates of the unmixing matrix, $\hatbW_\bS$, and latent components, $\hatbS = \bX_{\textrm{st}} \hatbW_{\bS}^\top$.}
\begin{enumerate}
 \item Let $\bS^{0} = \bX_{\textrm{st}} \bW_\bS^{0}{^\top}$ and let $(m)=0$, where $(m)$ denotes the number of update steps.
  \item For each row $\bw_q$, $q=1,\dots,Q$, of $\bW_\bS$, calculate
  \begin{align*}
  \bw_q^* = \frac{1}{n}\sum_{i=1}^n \left\{r_q(\bw_q^{(m)}{^\top}\bx_{\textrm{st},i})\bx_{\textrm{st},i}  - r_q'(\bw_q^{(m)}{^\top}\bx_{\textrm{st},i}) \bw_q^{(m)}\right\}
  \end{align*}
  \item Calculate the thin SVD of $\bW^*_\bS = \bU^*\:\bD^*\:\bV^{*}{^\top}$.
  \item Let $\bW^{(m+1)} = \bU^*\:\bV^{*}{^\top}$.
  \item If $PMSE(\bW_\bS^{(m+1)}{^\top},\bW_\bS^{(m)}{^\top}) < \epsilon$, stop, else increment $(m)$ and repeat (2)-(4).
\end{enumerate}
\end{algorithm}

\section{Supplemental materials for simulations examining distributional and noise-rank assumptions}

We fit D-FastICA using the `deflation' option in the fastICA R package \citep{Marchini:2010lr}. However, this popular function does not include an option to use projection pursuit for dimension reduction. If one specifies some $Q < T$ number of components, PCA is performed prior to the ICA. Consequently, one must estimate all $T$ directions and then subset to the first two.

We fit the IFA model with two-class mixtures of normals by maximizing the log likelihood using a numerical optimizer. This contrasts with methods using approximating EM algorithms, as described in the introduction. Our implementation is not scalable to large $Q$ or $T$ (nor is the exact EM algorithm) but suffices for the simulation experiments. For IFA, one must specify initial values for the unmixing matrix, the variance of the isotropic noise, and the parameters of the Gaussian mixtures. We had four strategies to find the argmax as detailed here. In our function, we constrain the latent component distributions to have zero expectation and unit norm, and as a result, the number of parameters to estimate for each latent component distribution is three. First, we estimated the parameters of the model proposed in \citet{beckmann2004probabilistic} (BS-PICA) and used this solution to initialize the IFA. We then estimated the model from six additional random matrices but with density parameters initialized from the BS-PICA solution. Secondly, when the IFA model was true, we initialized it from the true mixing matrix and true density parameters and also from six additional random matrices with density parameters initialized from their true values. When the IFA model was not true, we initialized it from the true mixing matrix but with the density parameters initialized from their BS-PICA estimates and an additional six random matrices. Thirdly, we initialized the algorithm from seven random matrices but with initial Gaussian mixture densities defined by the parameters (0.7, 0.7, $-0.5$, $-0.5$, 0.5, 0.5) (super-Gaussian distribution) for $\pi_{11}, \pi_{21}, \mu_{11}, \mu_{21}, \rho_{11}, \rho_{21}$ and $\sigma^2=1$. Finally, we initialized the algorithm from seven random matrices but with initial Gaussian mixture densities defined by the parameters (0.3, 0.3, $-1$, $-1$, 0.5, 0.5) (sub-Gaussian distribution) with $\sigma^2=1$.

The matrices $\bM_\bS$ and $\bM_\bN$ were generated by first simulating a 5$\times$5 matrix with standard normal entries, taking the singular value decomposition (SVD), then creating a diagonal matrix with five singular values from a uniform(1,10) distribution, followed by multiplying the left singular vectors from the SVD, the diagonal matrix, and the right singular vectors, which created $[\bM_\bS, \bM_\bN]$. For the noisy ICA model, we generated a random mixing matrix in the same manner, then retained the first two columns.

To generate semi-orthogonal random matrices to initiate the fixed point algorithm, matrices were generated by taking the left eigenvectors from the SVD of a $2 \times 5$ matrix with entries simulated from a standard normal. We generated random matrices constrained to the principal subspace in the following manner. Let $\hatbU_{1:Q}^\top$ denote the first $Q$ rows from $\hatbU^\top$ in the decomposition $\hatbSigma = \hatbU {\hat{\bLambda}} \hatbU^\top$. Then constraining the initial matrix, $\bW_\bS^{0}$, to the principal subspace is equivalent to $\bW_\bS^{0} = \bO \hatbU_{1:Q}^\top$ where $\bO$ is a random $Q \times Q$ orthogonal matrix.

\section{Supplemental figures for the spatio-temporal sources}
\begin{figure}
 \caption{Network recovery from the noisy-ICA scenario with $Q=3$ for $Q^*=2$, 3, or 4.}\label{fig:simFMRI123_NoisyICA}

  $\hspace{0.7in} \bf Q^* = 2 \hspace{1.5in} \bf Q^* = 3 \hspace{1.9in} \bf Q^* = 4$

\includegraphics[width=0.975\textwidth]{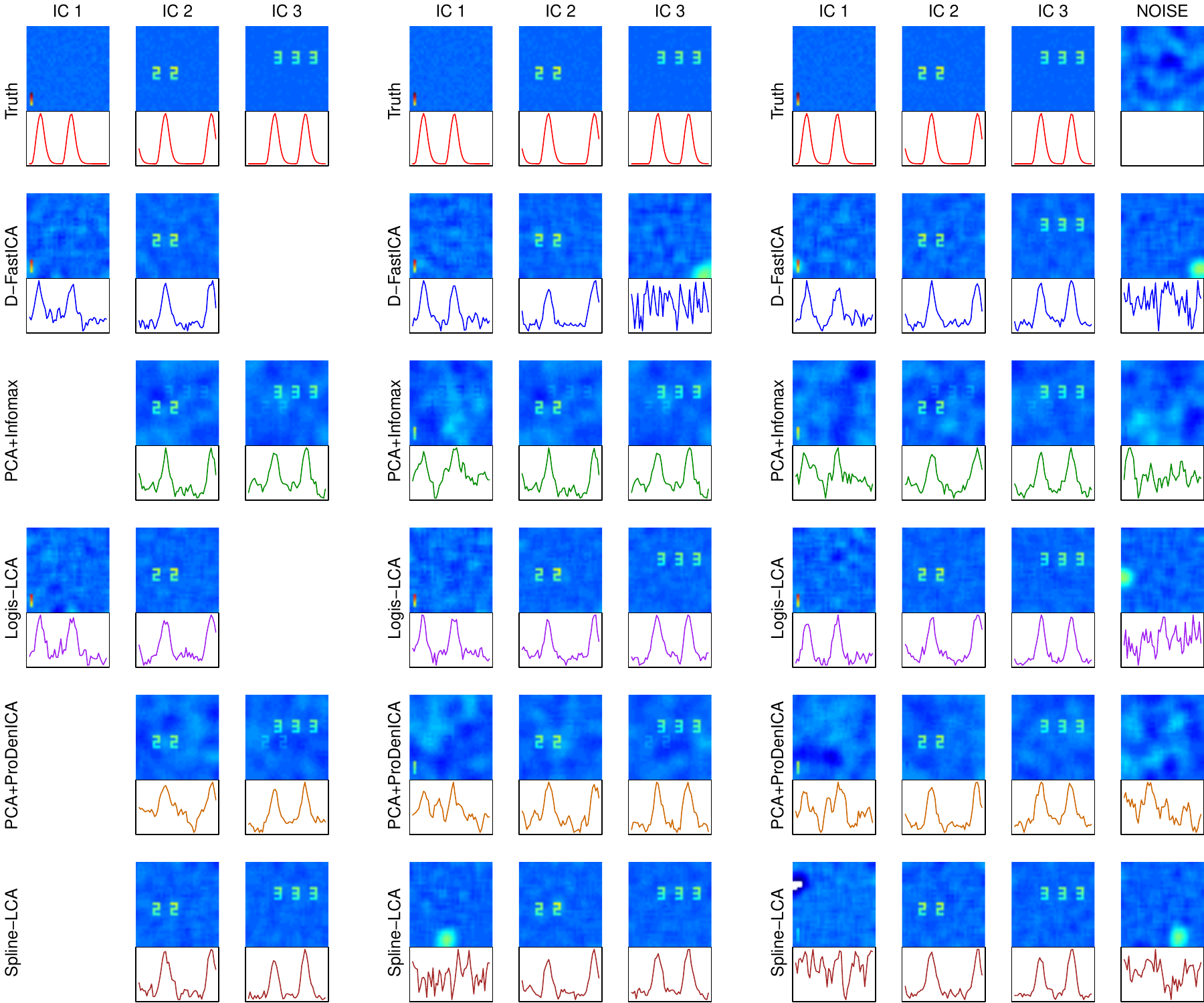}
\end{figure}

The permutation-invariant root mean squared errors for the components estimated from the spatio-temporal source simulations are much lower for Logis-LCA and Spline-LCA when the noise rank is $T-Q$  (Figure \ref{fig:MSE_simFMRI}). When the noise is rank-$T$, Logis-LCA performs best. Spline-LCA is excellent at finding two of the three components, but appears to sometimes find spurious components that were produced from the correlated noise when three or four components are estimated.
\begin{figure}
 \caption{Boxplots of $PRMSE$ for estimated columns of $\bS$ from simulations of spatial sources with temporal dependence and $Q = 3$ with $Q^* = 2,3,$ or $4$. `DF' = D-FastICA; `PI' = PCA+Infomax; `LL'= Logis-LCA; `PP' = PCA+ProDenICA; `SL' = Spline-LCA.}\label{fig:MSE_simFMRI}
   \begin{minipage}[b]{0.5\linewidth}
   \includegraphics[width=\textwidth]{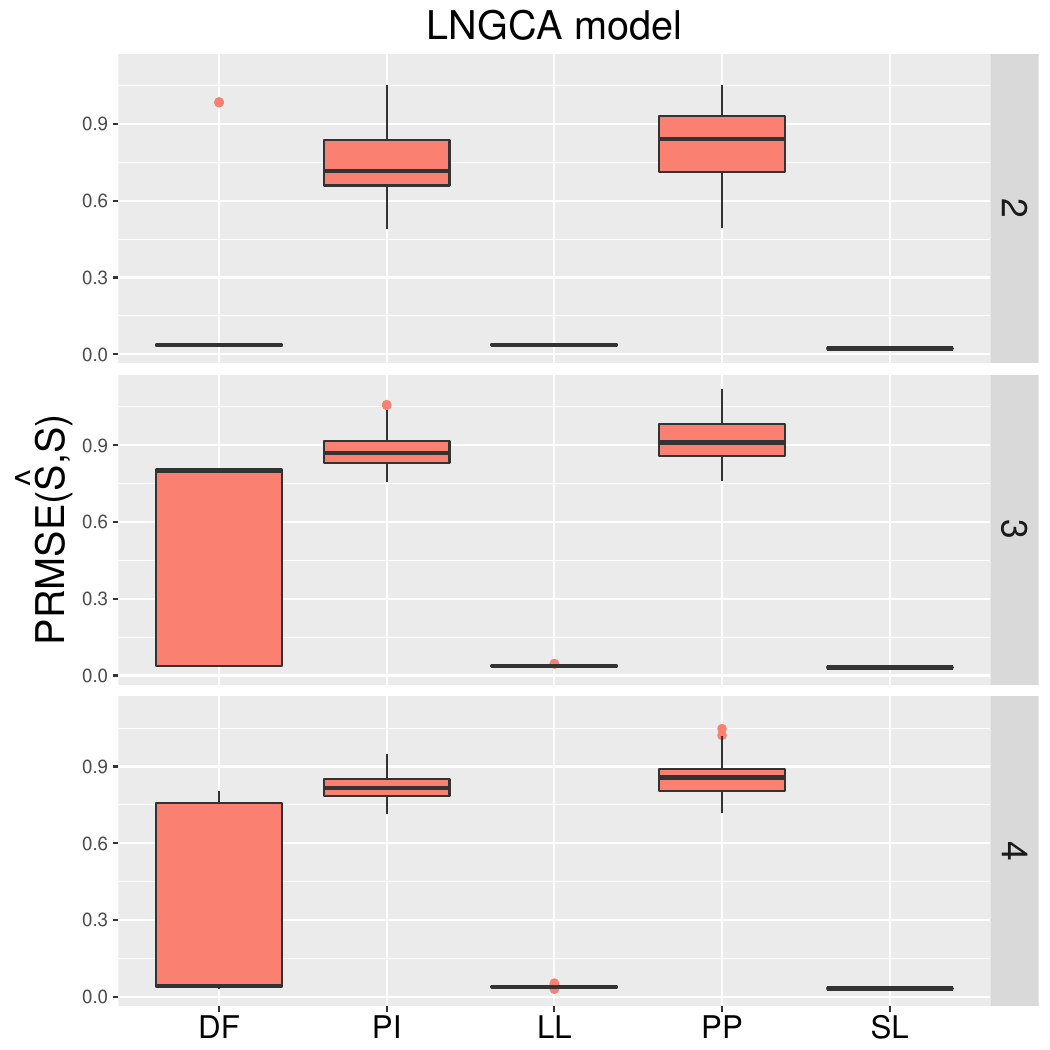}
   \end{minipage}
  \begin{minipage}[b]{0.5\linewidth}
  \includegraphics[width=\textwidth]{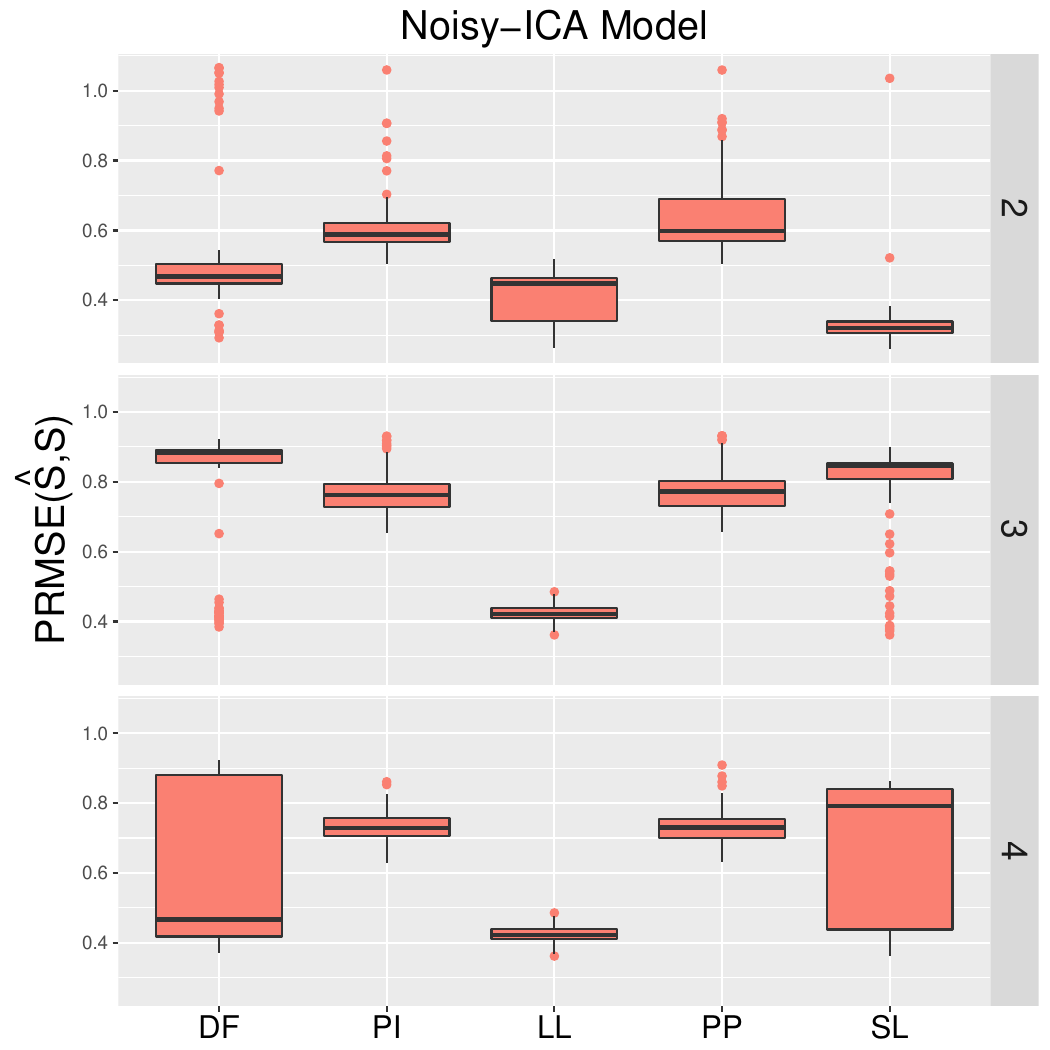}
  \end{minipage}
\end{figure}

\section{Supplemental materials for Section: Data Visualization and Dimension Reduction}\label{WS:Leaves}
 \cite{silva2013} generated covariates from photographs of leaf samples from thirty species (Figure \ref{fig:PlantSpecies}). Many of these covariates are highly correlated (Figure \ref{fig:LeafCorr}).
\begin{figure}
  \caption{Species 1-15 and 22-36 are included in the leaf dataset. Species 8 corresponds to \emph{Neurium oleander} (blue dots in Figure 4 and Supplemental Figures 4 and 5); species 31 and 34 correspond to \emph{Podocarpus sp}.\ and \emph{Pseudosasa japonica} (green dots in Figure~\ref{fig:LeafCorrPaper} and Supplemental Figures~\ref{fig:leafLogis} and~\ref{fig:leafSpline}). Figure from \cite{silva2013}.}\label{fig:PlantSpecies}
  \includegraphics[width=0.9\linewidth]{./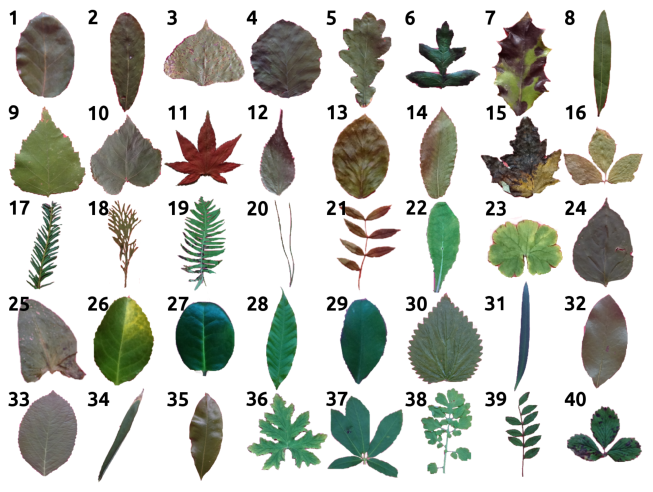}
\end{figure}

\begin{figure}
\center
 \caption{Correlation matrix of the variables in the leaf dataset: a) eccentricity, b) aspect ratio, c) elongation, d) solidity, e) stochastic convexity, f) isoperimetric factor, g) maximal indentation depth, h) lobedness, i) average intensity, j) average contrast, k) smoothness, l) third moment, m) uniformity, and n) entropy.}\label{fig:LeafCorr}
 \includegraphics[width=0.5\linewidth]{./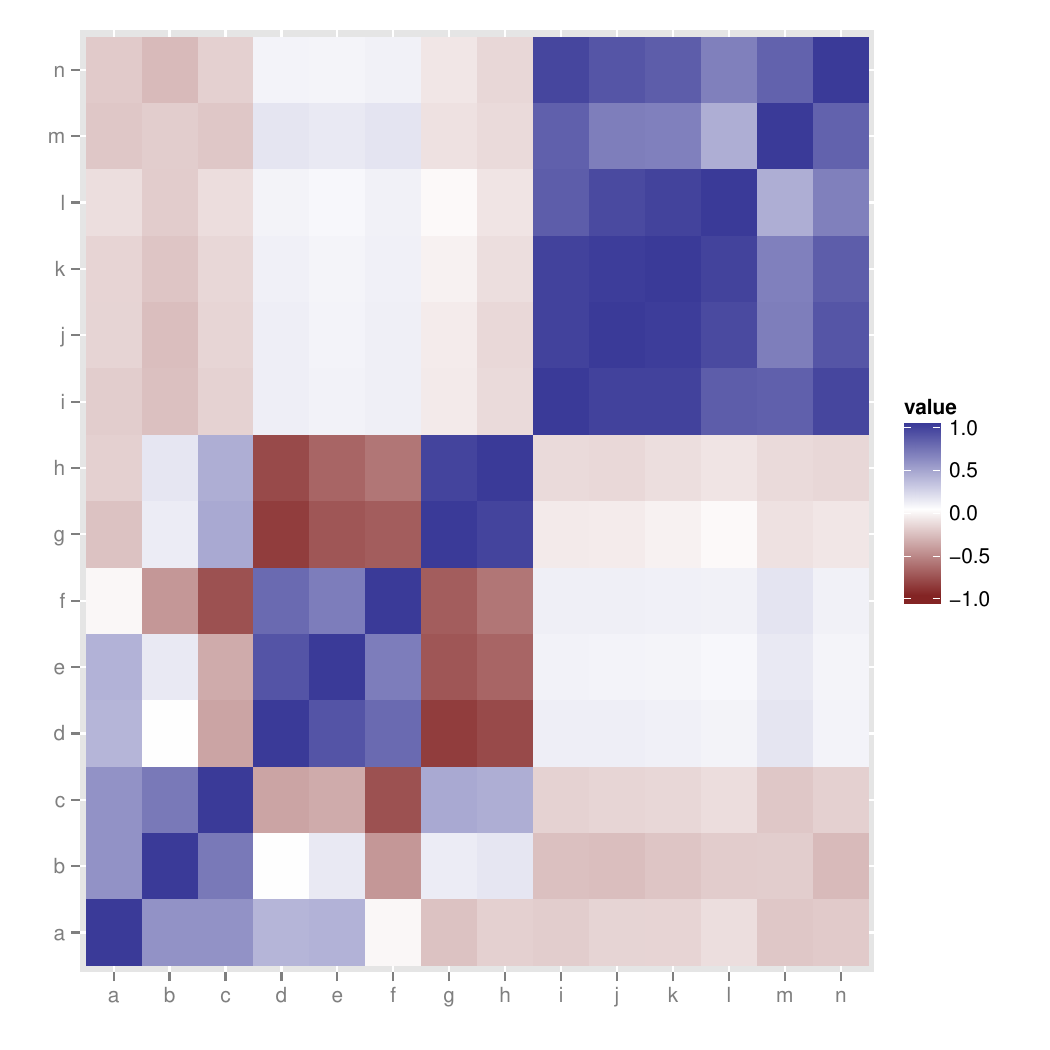}
\end{figure}

Logis-LCA and Spline-LCA reveal features in the data (Figures \ref{fig:leafLogis}, \ref{fig:leafSpline}), while PCA+Infomax and PCA+ProDenICA simply rotate the principal components. Additionally, when five components are estimated using the LCA methods, the first two components are nearly equivalent to the components obtained from $Q^*=2$. This is not the case with the PCA+ICA methods. Thus, the components in LCA appear less sensitive to the number of estimated components than the components from PCA+ICA methods.

\begin{figure}
\caption{Components in the leaf data from PCA+Infomax and Logis-LCA when two components were estimated and when five components were estimated (when five components were estimated, the two components with the highest marginal likelihood are plotted). The green dots correspond to \emph{Podocarpus sp}.\ and \emph{Pseudosasa japonica}; the blue dots correspond to \emph{Neurium oleander}; the red dots correspond to all other species.}\label{fig:leafLogis}
 \begin{minipage}[b]{0.5\linewidth}
  \includegraphics[width=\linewidth]{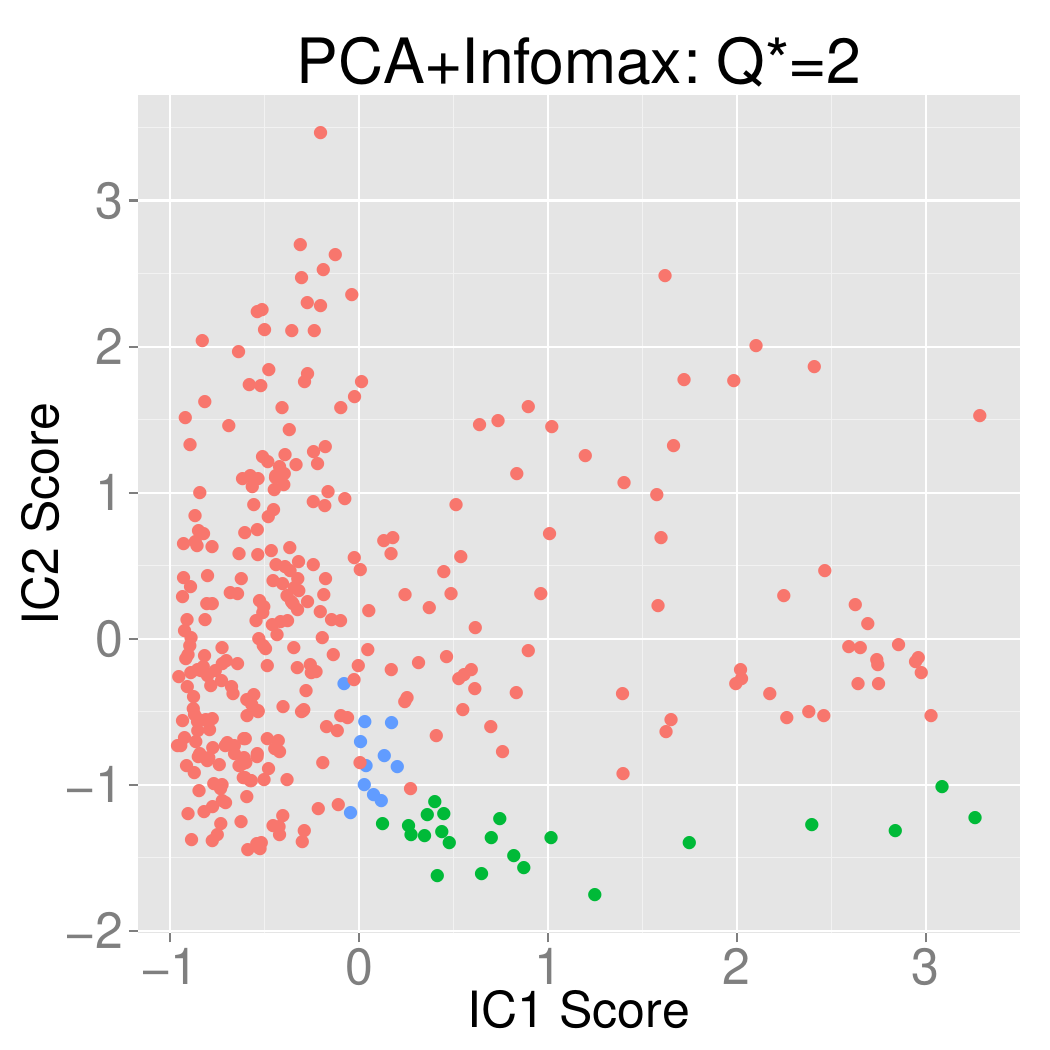}
  \includegraphics[width=\linewidth]{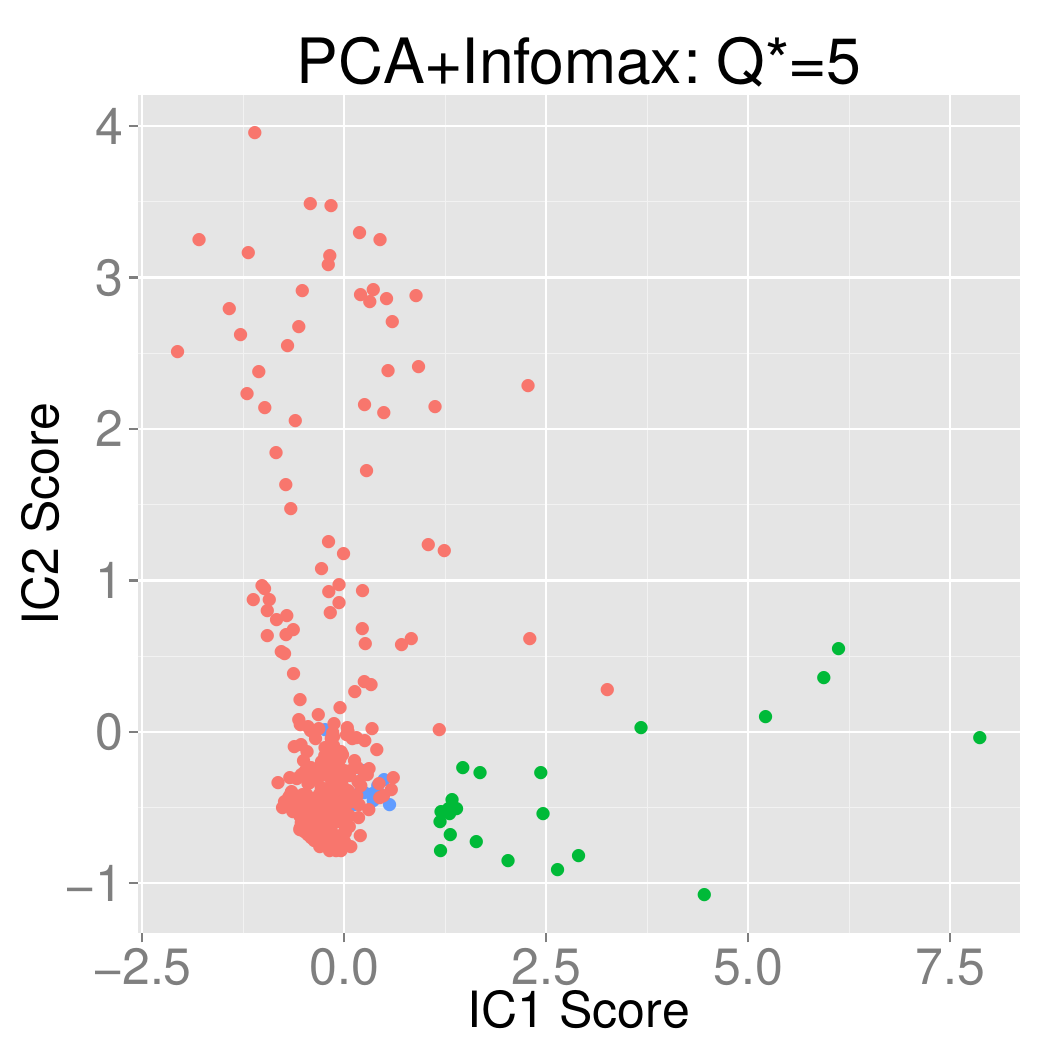}
 \end{minipage}
 \begin{minipage}[b]{0.5\linewidth}
  \includegraphics[width=\linewidth]{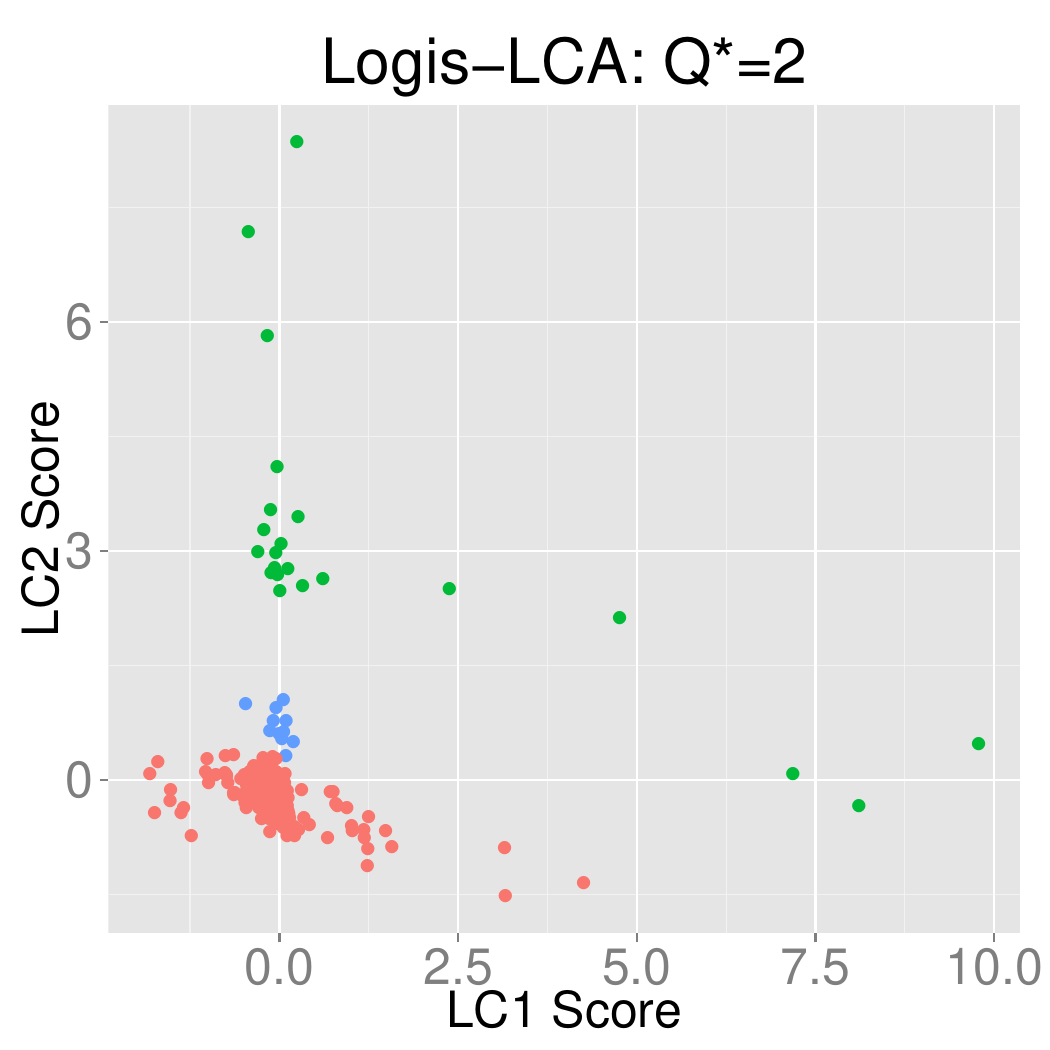}
  \includegraphics[width=\linewidth]{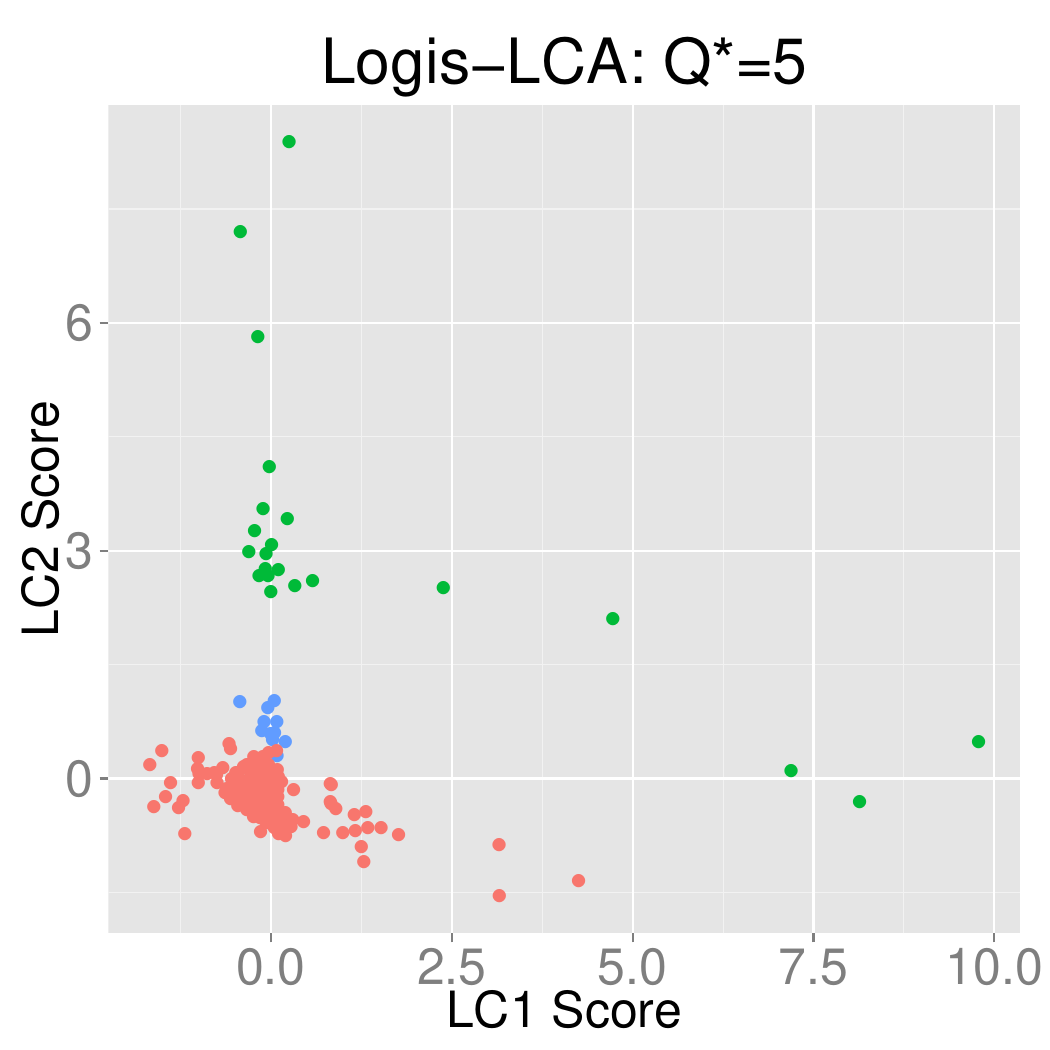}
 \end{minipage}
\end{figure}

\begin{figure}
\caption{Components in the leaf data from PCA-ProDenICA and Spline-LCA when two components were estimated and when five components were estimated (when five components were estimated, the two components with the highest marginal likelihood are plotted).  The green dots correspond to \emph{Podocarpus sp.} and \emph{Pseudosasa japonica}; the blue dots correspond to \emph{Neurium oleander}; the red dots correspond to all other species. The plots in the first row also appear in Figure~\ref{fig:LeafCorrPaper} of the main manuscript.}\label{fig:leafSpline}
 \begin{minipage}[b]{0.5\linewidth}
  \includegraphics[width=\linewidth]{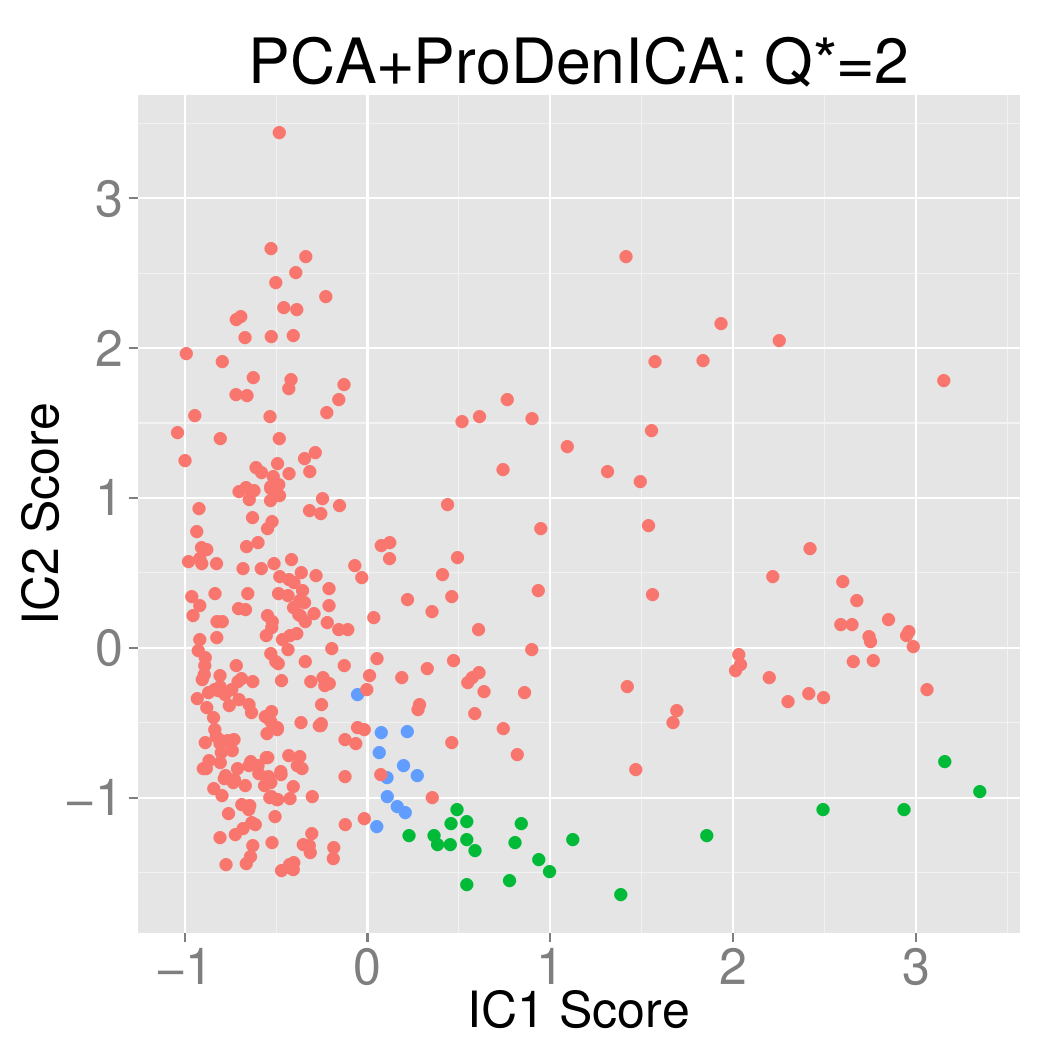}
  \includegraphics[width=\linewidth]{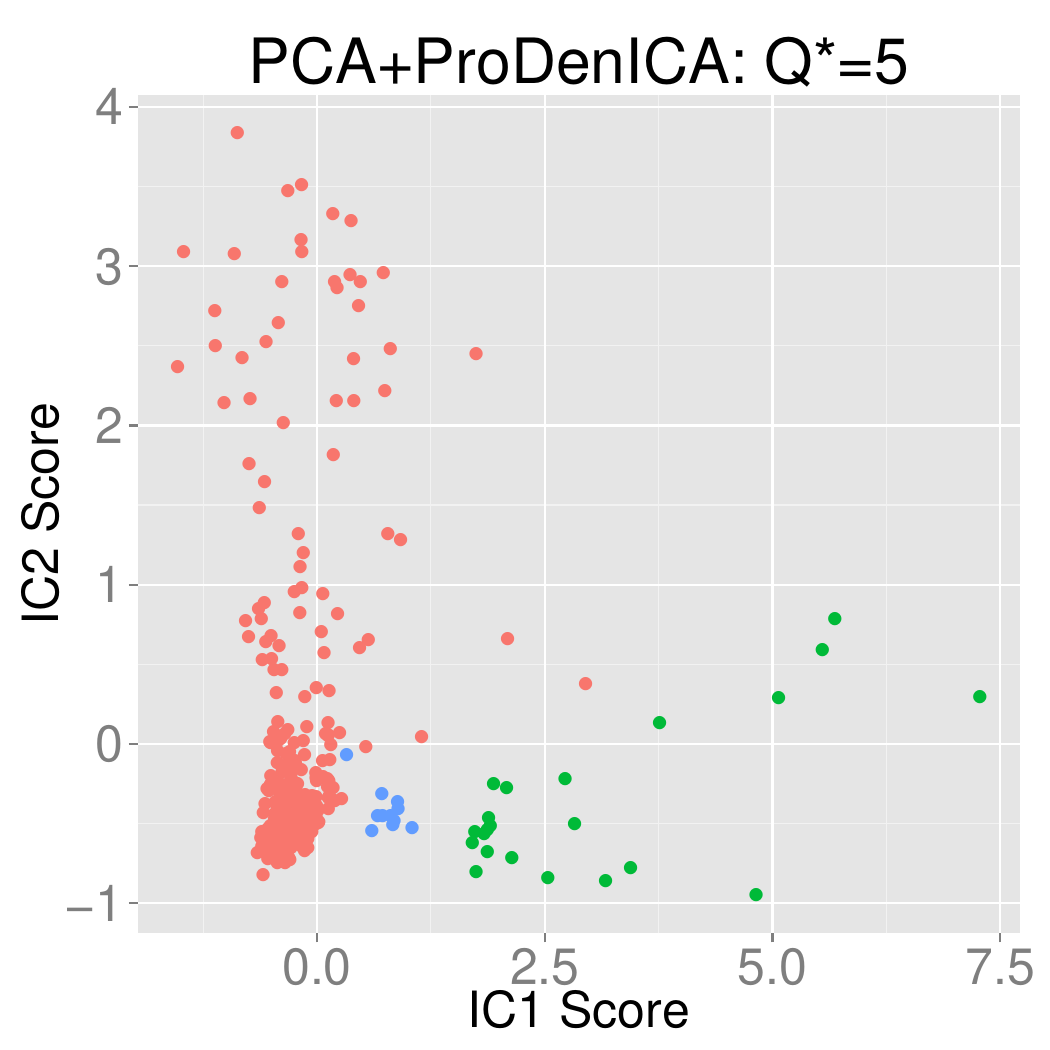}
 \end{minipage}
 \begin{minipage}[b]{0.5\linewidth}
  \includegraphics[width=\linewidth]{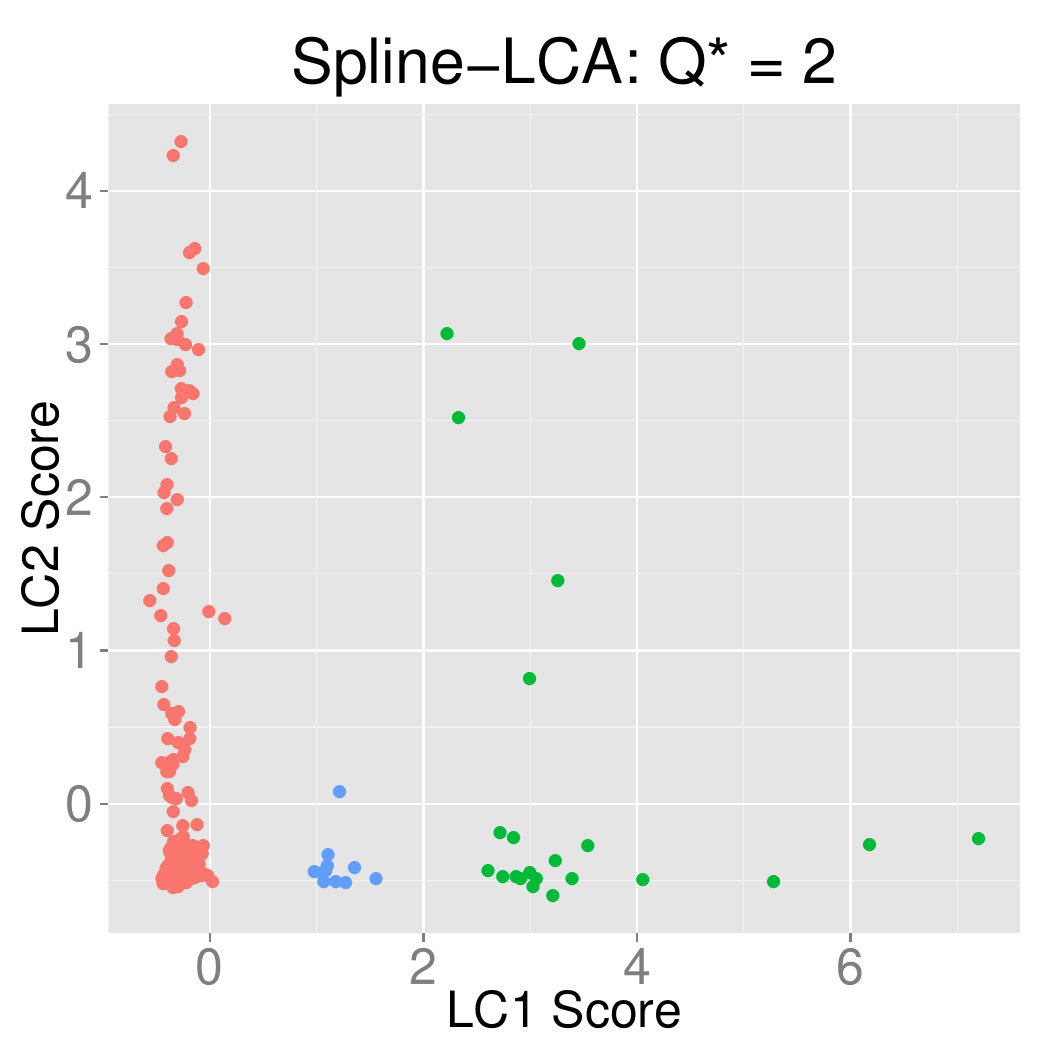}
  \includegraphics[width=\linewidth]{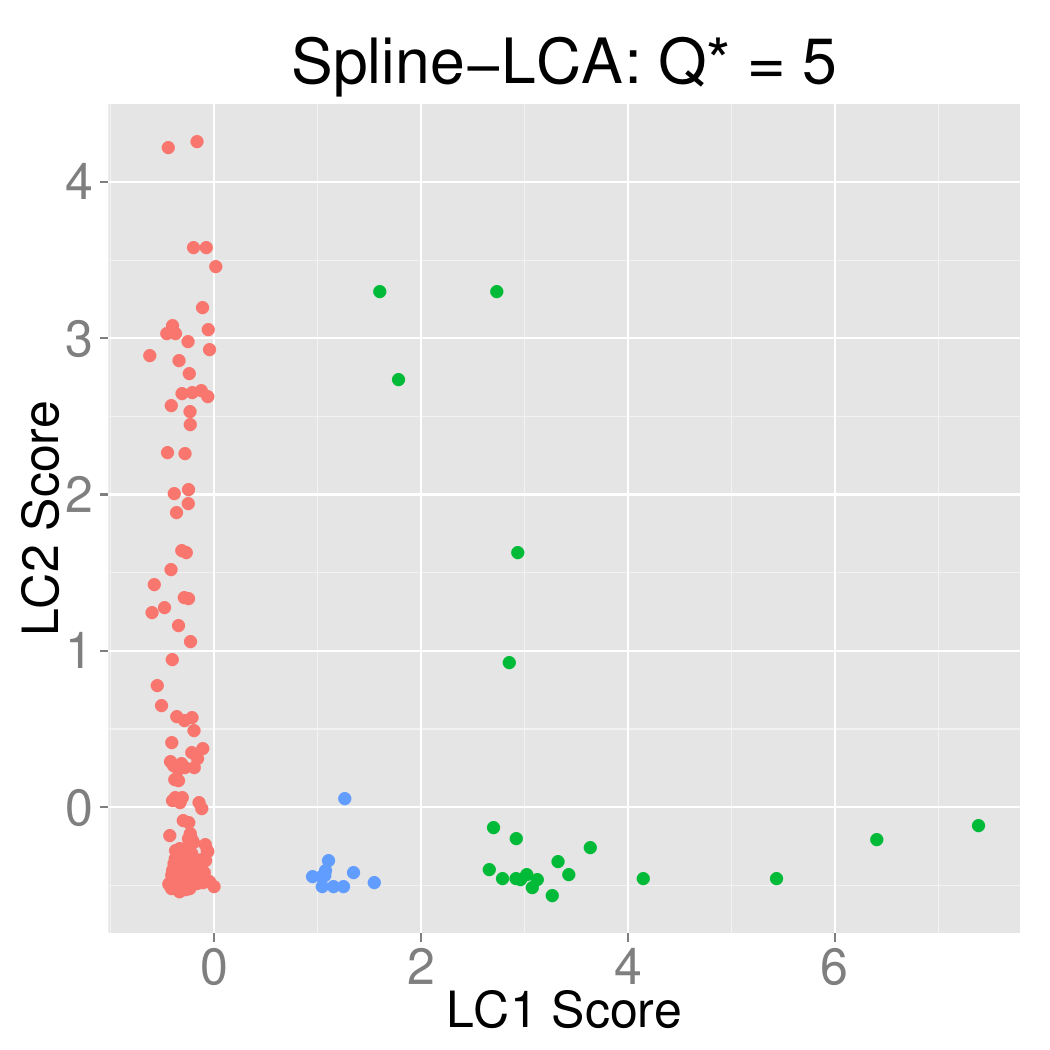}
 \end{minipage}
\end{figure}

\section{Supplemental materials for Section: Application to fMRI}\label{supp:sec:fMRI}
We analyzed task data from the theory of mind experiment in the HCP dataset. Theory of mind (ToM) refers to the ability of humans to infer the mental states of others. The experiment involved a mentalizing task in which shapes interacted in a goal-directed manner (e.g., a big triangle leading a little triangle out of a box) or according to some complex intentionality (e.g., a shape scaring another shape), and in which the random task involved shapes moving in random directions; for details see \cite{barch2013function}.

The application of ICA to fMRI usually assumes that voxels are iid (an exception for temporal ICA is \citealt{lee2011independent}). This assumption is often not made explicitly because ICA is usually derived from the perspective of maximizing non-Gaussianity. Since the objective function maximizing non-Gaussianity can also be derived from ML theory where the non-linear function is equivalent to the log likelihood (e.g., \citealt{hyvarinen2000independent}), summation of the non-linear function over voxels (e.g., Equation 12 in \citealt{beckmann2004probabilistic}) is mathematically equivalent to assuming the voxels are independent.
Despite the violation of model assumptions, ICA recovers simulated brain networks and their loadings \citep{beckmann2004probabilistic} and has proven useful in constructing models of functional connectivity that are consistent across subjects and image acquisition centers \citep{biswal2010toward}.

We analyzed the following subjects from the HCP 900-subject release dataset: 100206, 100307, 100408, 100610, 101006, 101107, 101309, 101410, 101915, 102008, and 103414.  Whole-brain data were acquired from two sessions with 274 volumes (i.e., brain images) each using gradient-echo EPI with multiband acceleration factor equal to eight and 2 x 2 x 2 mm voxels (repetition time (TR) = 720 ms; echo time (TE) = 33.1 ms; flip angle=$52^\circ$; field of view = 208 x 180 mm (readout  x phase-encoding); acquisition matrix = 104 x 90; slice thickness = 2.0 mm) in which the sessions differed in phase-encoding direction (right-left versus left-right). Only the first session was used in our analyses (the session with right-left phase encoding). Inspection revealed that the first two TRs contained BOLD signals that were higher than other time points. Consequently, we removed the first two TRs resulting in 272 time points for each voxel. After vectorization, the voxels were standardized across time to have mean zero and unit variance.

We initiated the algorithm from fifty-six matrices: from the first thirty columns of the FOBI (fourth-order blind identification) estimate of all components (an analytic solution that is fast to compute); twenty-seven semi-orthogonal matrices randomly generated in the principal subspace; and twenty-eight random semi-orthogonal matrices. We selected the estimate corresponding to the largest log likelihood as our estimate of the true argmax. The best estimate corresponded to one of the random matrices from the principal subspace for all subjects. Depending on initialization, the algorithm took between ten minutes and 3.75 hours on a 2666 MHz processor, where 3.75 hours represented initializations that reached the maximum number of iterations, which we conservatively chose to be equal to 300. We also completed an analogous PCA+ProDenICA with thirty components using the R package ProDenICA \citep{hastieR2010}, where one initialization was from the FOBI solution from the PCA-reduced dataset and fifty-five initializations were from random orthogonal matrices.  In PCA+ProDenICA, the best initialization was always from one of the fifty-five random orthogonal matrices. These results suggest that the FOBI solution was not ``close enough'' to the semiparametric solution to aid detection of the maximum in either Spline-LCA or PCA+ProDenICA.

The presence of local maxima in LCA can increase computational expenses, and more initializations are required for larger values of $T$. Since the set of orthogonal matrices is non-convex, local optima are also a problem in PCA+ICA (e.g., \citealt{risk2014evaluation}). 
For fMRI data, fifty initializations appeared to be adequate when estimating thirty components with nearly three hundred time points (Figure~\ref{fig:InitValues_MDS}). In general, we found that Logis-LCA was less sensitive to initialization than Spline-LCA (results not shown). 
However, we favor Spline-LCA because it can more accurately model source densities. 


\begin{figure}
 \includegraphics[width=0.75\textwidth]{./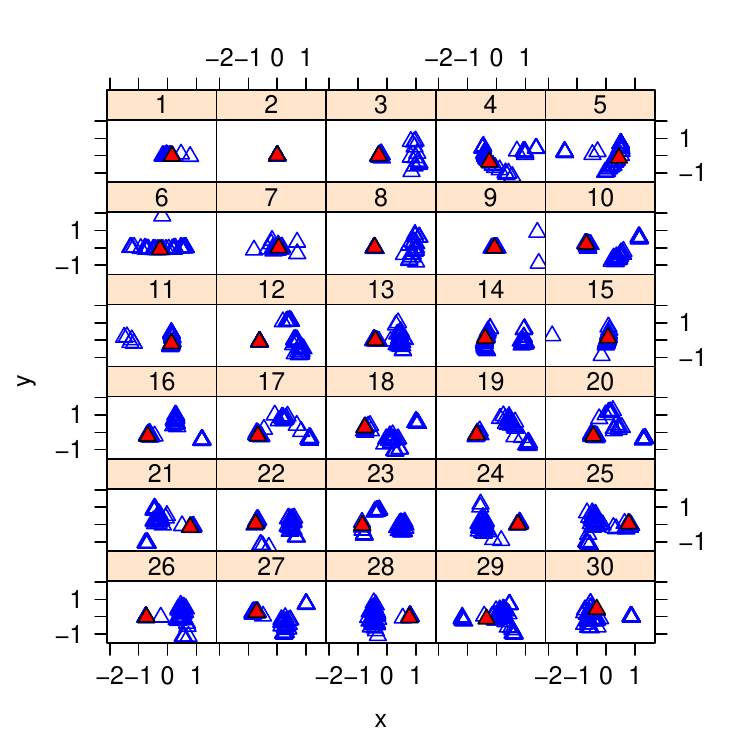}
 \caption{Multidimensional scaling of $||\hatbS_j^{(k)} - \hatbS_j^{(\ell)} ||_F$ for components $j=1,\dots,30$ and initializations $k \ne \ell \in \{1,\dots,56\}$. The coordinates corresponding to the initialization with the highest likelihood are depicted by solid red triangles. In all instances, the red triangle appears in a cluster of other triangles, indicating agreement between a subset of initializations.}\label{fig:InitValues_MDS}
\end{figure}

For subject 103414, we examined the effect of initialization in detail. Following \cite{risk2014evaluation}, we assessed the reliability of individual components by matching components from all other initializations to the components corresponding to the ${\margmax}$ using the modified Hungarian algorithm. We then created dissimilarity matrices for each component based on the MSE and visualized basins of attraction using multidimensional scaling. Generally, there were at least two basins of attraction corresponding to initializations from the principal subspace and initializations from the entire column space (Supplemental Figure \ref{fig:InitValues_MDS}). Components one, two, and nine were relatively robust to initialization and contained only one (main) basin of attraction.

We examined the correlation between the loadings (columns of $\hatbM_\bS$) and the mentalizing and random tasks. The mentalizing and random task covariates were generated by convolving each task's onsets and durations with the canonical HRF in SPM8 \citep{ashburner2004human}. In all subjects, the first component, i.e., the one with the highest likelihood, was highly correlated with the mentalizing and random tasks (e.g., Figure \ref{fig:HCP_networks}). The most positive values of this component are located in the gray matter, which indicates brain activity. Areas of Brodmann Area 19 in the visual cortex appear activated. This is an area associated with shape recognition and attention, and thus it makes sense that the movies based on moving shapes engaged this area. The same component was found using PCA+ProDenICA. For all subjects, the correlation of the matched PCA+ProDenICA component with the first Spline-LCA component was at least 0.98. Note however that this component does not distinguish between the mentalizing and random tasks. Moreover, the temporal parietal junction (TPJ) is an area often found in ToM studies \citep{castelli2000movement} (the crosshairs in Figure \ref{fig:HCP_networks} are located near the TPJ) but is not activated in this component, suggesting there exists additional signal in other components.

\begin{figure}[H]
 \caption{Selected components estimated from the HCP ToM data using Spline-LCA. The first row depicts a task-activated component that was highly correlated with the mentalizing (green) and random (blue) tasks (MNI coordinates: -50,-56,18); a similar component was found using PCA+ProDenICA (not depicted). The second row appears to be an artifact not found by PCA+ProDenICA (MNI: 0,-50,0).}\label{fig:HCP_networks}
\begin{minipage}[b]{0.31\linewidth}
 \includegraphics[width=\textwidth]{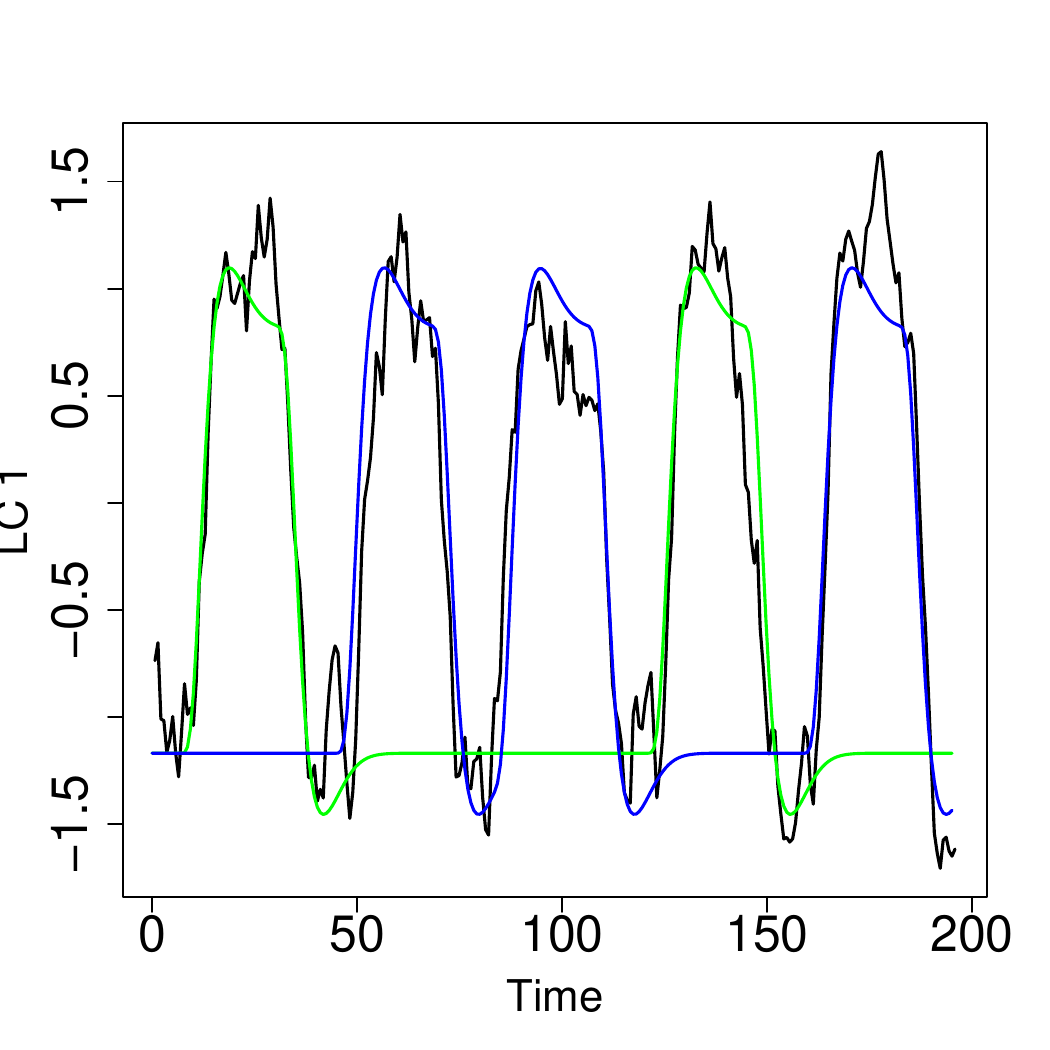}
\end{minipage}
\begin{minipage}[b]{0.62\linewidth}
 \includegraphics[width=\textwidth]{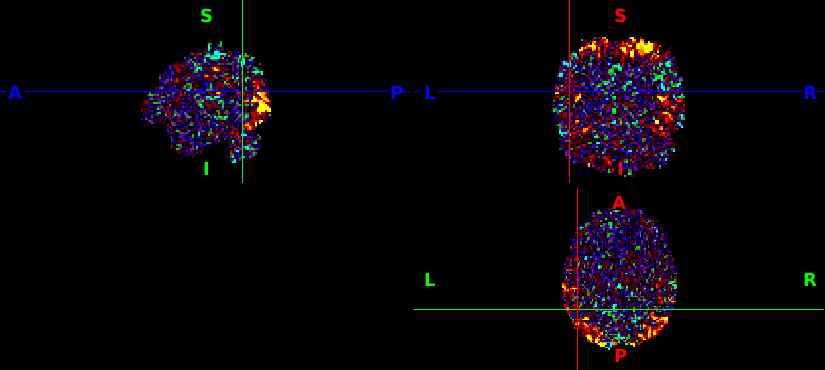}
\end{minipage}

 \begin{minipage}[b]{0.31\linewidth}
\includegraphics[width=\textwidth]{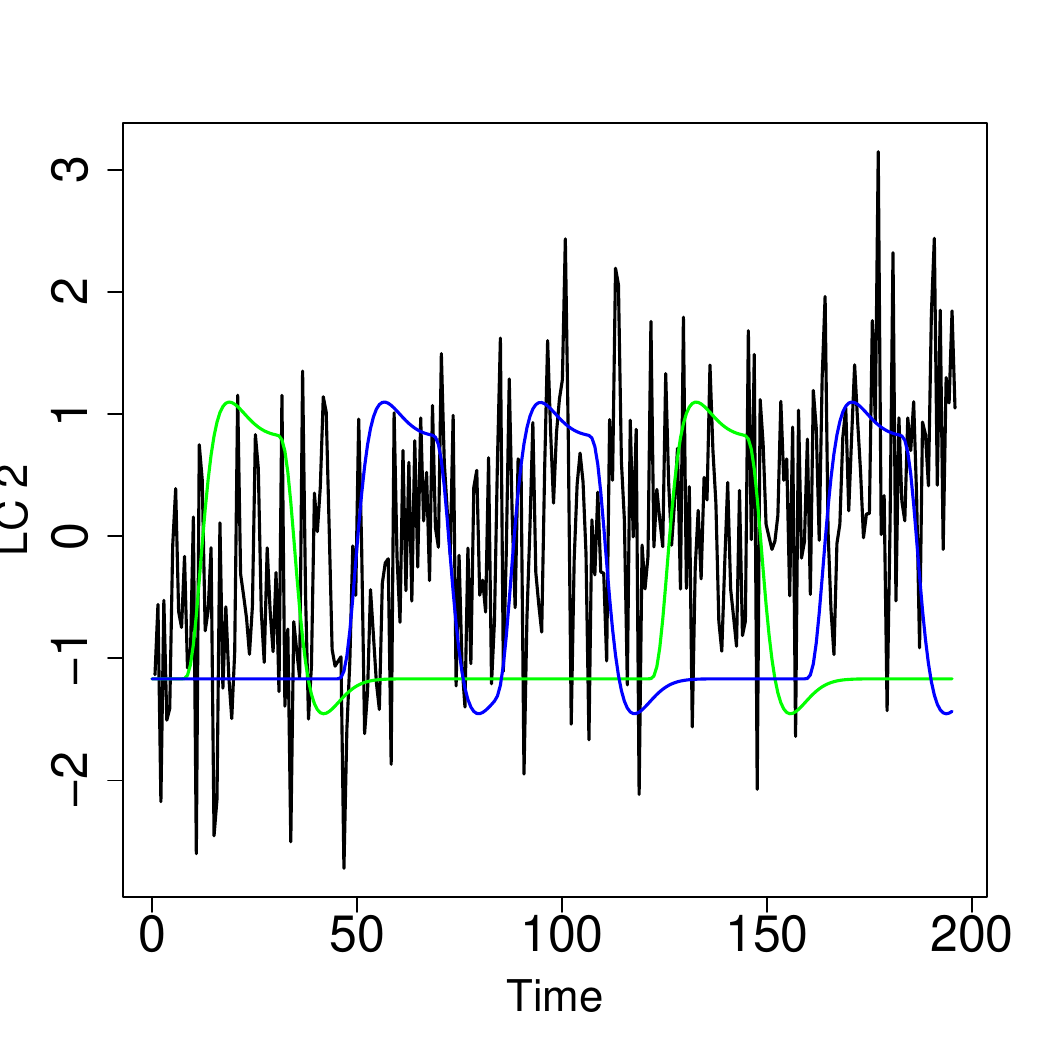}
\end{minipage}
 \begin{minipage}[b]{0.62\linewidth}
  \includegraphics[width=\textwidth]{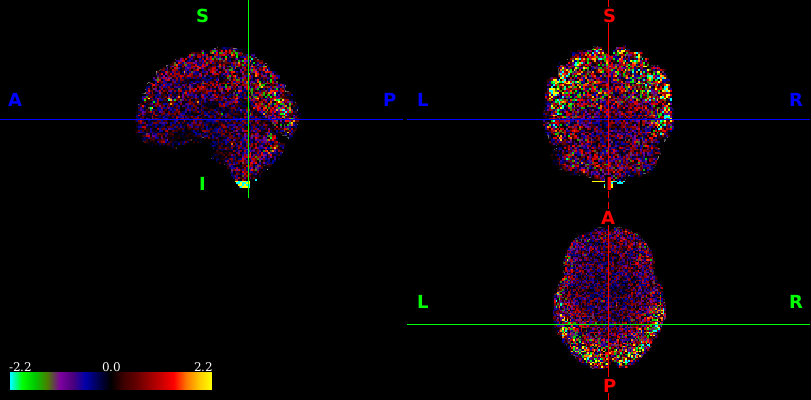}
 \end{minipage}
\end{figure}

Voxels were highly activated in the brainstem and the component's time course was correlated with three of the motion parameters from the rigid-body alignment  ($r=0.32$, 0.32, and 0.42 for the x-transformation, x-rotation, and z-rotation parameters, respectively). This may be related to a gradual relaxation of the neck or spine over the course of the subject's session. Additionally, there was a positive correlation with time ($r=0.44$), which could also be related to scanner drift.

LCA also identified a type of artifact that did not seem to be found in PCA+ProDenICA. Some components had alternating bands of positive and negative values, in particular in axial slices through orbitofrontal regions (Figure \ref{fig:LC14}). The patterns of activation ignored gray and white matter tissue boundaries, which is evidence of an artifact. This type of pattern is described as an ``MRI acquisition/reconstruction related artifact'' in \cite{salimi2014automatic}.

 \begin{figure}[H]
\caption{Artifact (component 14) identified using Spline-LCA (top) and the matched component from PCA+ProDenICA (bottom; correlation = 0.08) in subject 100307. Thresholded at $|s_{v,14}|>1.75$.}\label{fig:LC14}
 \includegraphics[width=\textwidth]{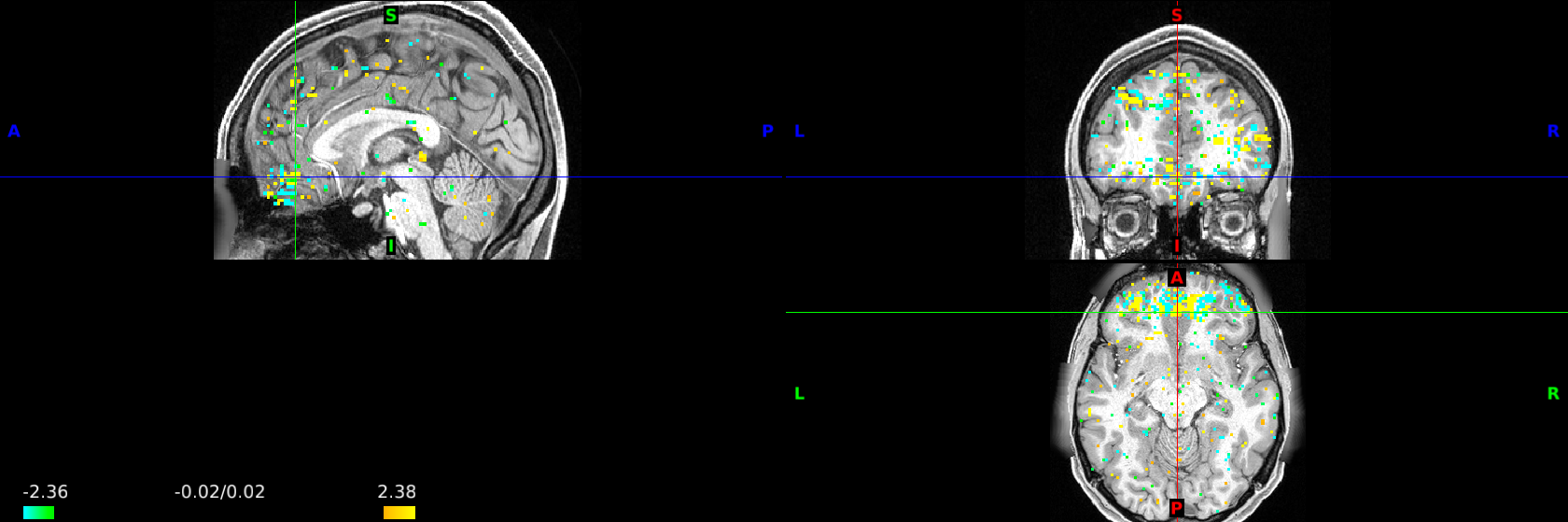}

 \includegraphics[width=\textwidth]{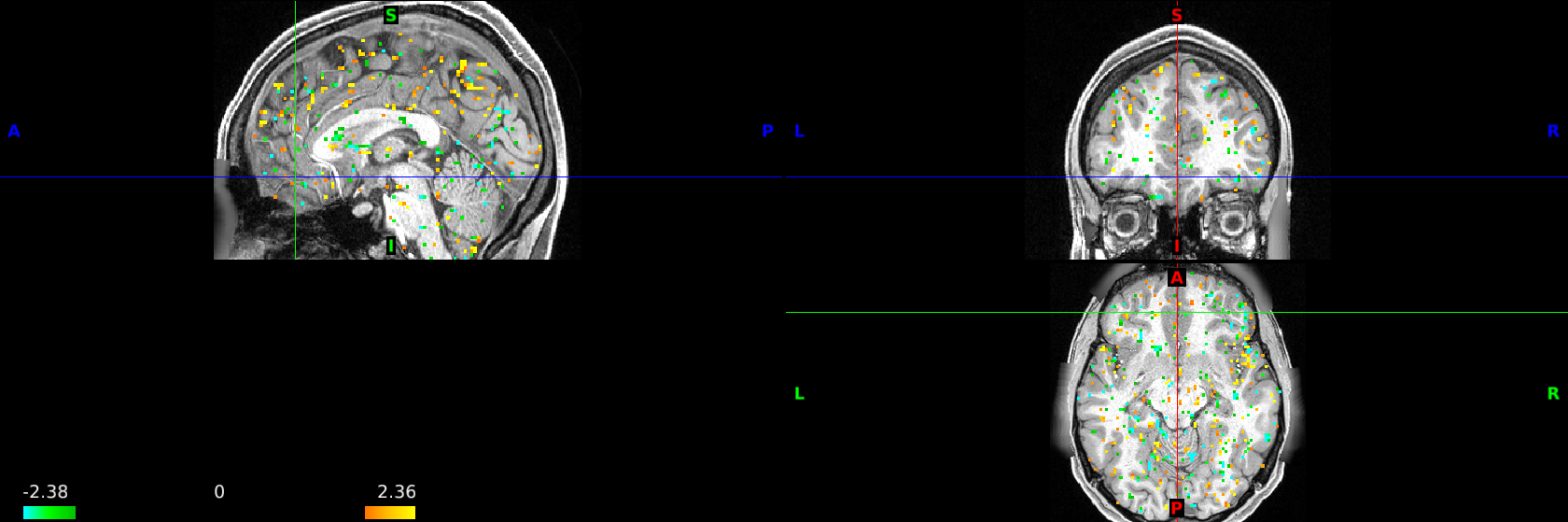}
\end{figure}

Removing artifacts from fMRI detected using PCA+ICA is a popular tool that can increase detection in subsequent mixed-modeling of voxel activation \citep{pruim2015ica}. Our results suggests that LCA may improve artifact detection.

 \singlespace

\bibliographystyle{apalike}
\bibliography{LCAsupp_v2.bbl}

\end{document}